\documentclass[]{imsart}
\RequirePackage[numbers]{natbib}
\RequirePackage{amsthm,amsmath,amsfonts} 
\RequirePackage[colorlinks,citecolor=blue,urlcolor=blue]{hyperref}
\RequirePackage{graphicx}
\usepackage{mathbbol}
\DeclareSymbolFontAlphabet{\amsmathbb}{AMSb}%
\usepackage{parskip}
\usepackage{enumerate}   
\usepackage{amsmath}
\usepackage{amsthm}
\usepackage{amsfonts}
\usepackage{graphicx}
\usepackage{xcolor}
\usepackage{caption}
\usepackage{pstricks-add}
\usepackage{tikz}
\usepackage{color}
\usepackage{xparse}
\usepackage{subfigure}
\usepackage{multirow}
\usepackage{diagbox}
\arxiv{2512.14959}
\startlocaldefs
\newcommand{\z}{\boldsymbol{z}}
\newcommand{\uu}{\boldsymbol{u}}
\newcommand{\E}{\amsmathbb{E}}
\newcommand{\B}{\boldsymbol{B}}
\newcommand{\R}{\amsmathbb{R}}
\newcommand{\Z}{\boldsymbol{Z}}
\newcommand{\HH}{\amsmathbb{H}^{(n)}}
\newcommand{\HHd}{\amsmathbb{H}^{(n)}_{\dagger,1}}

\newcommand{\mm}{\mathbb{m}^{(n)}}
\newcommand{\md}{\mathbb{m}^{(n)}_{\dagger,1}}

\newcommand{\g}{\mathbb{g}^{(n)}}

\newcommand{\Fd}{\amsmathbb{F}_{\dagger}^{(n)}}
\newcommand{\G}{\amsmathbb{G}}

\newcommand{\LLd}{\mathbb{\Lambda}_{\dagger}^{(n)}}

\definecolor{clrgreen}{RGB}{50, 205, 50}

\definecolor{clrblue}{RGB}{1, 191, 255}

\definecolor{clrred}{RGB}{254, 3, 106}

\definecolor{clrorange}{RGB}{255, 165, 0}

\definecolor{clrgrey}{RGB}{0, 90, 124}

 \definecolor{clrbiasedtrue}{RGB}{255, 0, 0}
\newcommand{\truebiased}{\protect \tikz[baseline=-0.6ex]
 {\protect\draw[line width=1pt, color=clrbiasedtrue](0,0) -- (0.5,0.0)}}

 \definecolor{clrunbiasedtrue}{RGB}{0, 0, 0}
\newcommand{\trueunbiased}{\protect \tikz[baseline=-0.6ex]
 {\protect\draw[line width=1pt, color=clrunbiasedtrue](0,0) -- (0.5,0.0)}}

  \definecolor{clrestimate}{RGB}{0, 71, 171}
\newcommand{\estimate}{\protect \tikz[baseline=-0.6ex]
 {\protect\draw[line width=1pt, color=clrestimate](0,0) -- (0.5,0.0)}}
 
  \definecolor{naive}{RGB}{255, 165, 0}
\newcommand{\naive}{\protect \tikz[baseline=-0.6ex]
 {\protect\draw[line width=1pt, color=naive](0,0) -- (0.5,0.0)}}

  \definecolor{uniform}{RGB}{87, 185, 255}
\newcommand{\uniform}{\protect \tikz[baseline=-0.6ex]
 {\protect\draw[line width=1pt, color=uniform](0,0) -- (0.5,0.0)}}

   \definecolor{specific}{RGB}{57, 173, 72}
\newcommand{\specific}{\protect \tikz[baseline=-0.6ex]
 {\protect\draw[line width=1pt, color=specific](0,0) -- (0.5,0.0)}}

\newcommand\independent{\protect\mathpalette{\protect\independenT}{\perp}}
\def\independenT#1#2{\mathrel{\rlap{$#1#2$}\mkern2mu{#1#2}}}

\theoremstyle{plain}

\newtheorem{theorem}{Theorem}[section]
\newtheorem{lemma}[theorem]{Lemma}

\theoremstyle{definition}
\newtheorem{definition}[theorem]{Definition}
\newtheorem{example}{Example}

\theoremstyle{remark}

\theoremstyle{plain}
\newtheorem{corollary}[theorem]{Corollary}

\theoremstyle{definition}
\newtheorem{assumptions}[theorem]{Assumptions}
\newtheorem{remark}[theorem]{Remark}
\endlocaldefs

\begin{document}
\begin{frontmatter}
\title{Nonparametric Survival Estimation with Contaminated and Adjudicated Events}
\runtitle{Conditional Expert Kaplan--Meier Estimation}

\begin{aug}
\author[A]{\fnms{Martin}~\snm{Bladt}\ead[label=e1]{martinbladt@math.ku.dk}} \and
\author[A]{\fnms{Kristian Vilhelm}~\snm{Dinesen}\ead[label=e2]{kristian2904@hotmail.com}}
\address[A]{Department of Mathematical Sciences, University of Copenhagen\printead[presep={,\ }]{e1,e2}}

\runauthor{M. Bladt et al.}
\end{aug}

\begin{abstract}
We study the conditional expert Kaplan–Meier estimator, an extension of the classical Kaplan–Meier estimator designed for time-to-event data subject to both right-censoring and contamination. Such contamination, where observed events may not reflect true outcomes, is common in applied settings, including insurance and credit risk, where expert opinion is often used to adjudicate uncertain events. Building on previous work, we develop a comprehensive asymptotic theory for the conditional version incorporating covariates through kernel smoothing. We establish functional consistency and weak convergence under suitable regularity conditions and quantify the bias induced by imperfect expert information. The results show that unbiased expert judgments ensure consistency, while systematic deviations lead to a deterministic asymptotic bias that can be explicitly characterized. We examine finite-sample properties through simulation studies and illustrate the practical use of the estimator with an application to loan default data.
\end{abstract}

\begin{keyword}[class=MSC]
\kwd[Primary ]{62N02}
\kwd{62G08}
\end{keyword}

\begin{keyword}
\kwd{Right-censoring}
\kwd{expert information}
\kwd{multivariate kernel smoothing}
\kwd{weak convergence}
\end{keyword}

\end{frontmatter}

\section{Introduction}
Statistical analyses of right-censored time-to-event data typically assume that event times, once they occur, are correctly observed. In many applied settings, however, this assumption fails due to contamination, where recorded events do not reliably correspond to true outcomes. If left untreated, such contamination can introduce bias into estimates of the underlying event-time distribution. This challenge is commonly referred to as inference under incomplete event adjudication. To address it, we study how expert information may be incorporated in a nonparametric framework.

Expert information appears in settings where statistical analyses require explicit modelling choices. Bayesian procedures depend on the specification of priors, while frequentist methods require decisions regarding models, smoothing parameters, and test statistics. These choices introduce elements of subjectivity, even if this is not always highlighted in applied work. Frameworks for incorporating expert information aim to make such choices explicit and to clarify when expert assessments can improve estimation; see~\cite{ExpertJudgmentsSSI, ExpertJudgmentBrownstein}. Building on the Kaplan–Meier estimator from~\cite{KM}, and its covariate extensions in~\cite{Beran, Dabrowska}, we introduce a conditional Kaplan–Meier estimator that integrates expert judgments to enable estimation under contamination.

Contamination arises in many real-world settings. Our motivation is drawn from applications in insurance and credit risk, where complex data-collection processes and technical event definitions can complicate the identification of true events. For example, the European Banking Association (EBA) defines loan default using a detailed set of criteria that place stringent demands on data completeness and quality, making reliable adjudication difficult in practice. In disability insurance, a claim may be incorrectly closed and later reopened, resulting in a longer disability duration and larger payout than initially recorded. Practitioners are often aware of such contamination mechanisms, yet lack quantitative tools to systematically incorporate this knowledge into statistical estimation.

Mathematically, our work builds upon the expert Kaplan–Meier estimator of~\cite{CrudeExpert}. We provide several extensions geared toward practical use and establish asymptotic theory beyond the usual functional consistency available for the standard Kaplan–Meier estimator. First, we incorporate potentially multivariate covariates, making the estimator more applicable to modern data environments. We then establish both functional consistency and weak convergence under settings with perfect and imperfect expert judgments. For the latter, we explicitly characterize the bias introduced by imperfect judgments and propose methods to assess how the quality of expert information influences estimation.

Several strands of literature relate to our contribution. The use of survival analysis to estimate loan default dates back to~\cite{Narain} and has since been developed in multiple works. Classical survival models, including proportional hazards and accelerated failure time models, have been found competitive with industry-standard logistic regression~\cite{EuropeanGenericScoring}. Time-to-event approaches also enable more nuanced analyses, for example in behavioural scoring and profit expectation modelling; see~\cite{PHABScores, PersonalLoanData}. From an actuarial standpoint,~\cite{Bladt02072020,ALBRECHER2024106171} propose combining expert information into a parametric framework for modeling non-life insurance claim sizes. Similarly,~\cite{SandquistIncomplete} studies estimation of conditional hazards in disability insurance within a multistate framework accommodating censoring, left-truncation, contamination, and reporting delays, proposing a parametric approach aligned with multistate models in life insurance. More broadly,~\cite{Cook2000} and~\cite{Cook2004} consider settings where clinical trial endpoints are weighted by their probability of later adjudication, reflecting related challenges in medical applications.

The remainder of the paper is structured as follows. Section~\ref{section_setup} introduces the set-up and motivates the conditional expert Kaplan–Meier estimator. Sections~\ref{section_consistency} and~\ref{section_normality} contain the consistency and weak convergence results, respectively. These theoretical sections also include practical examples illustrating expert-based estimation. Section~\ref{section_numerical} presents numerical studies supporting the asymptotic theory under both perfect and imperfect expert information and demonstrates the method using real-world loan default data. Section~\ref{section_conclusion} concludes. All proofs are found in the appendix.
\section{The conditional expert Kaplan--Meier estimator} \label{section_setup}

The starting point of this section is a brief review of the standard conditional Kaplan–Meier estimator, also called the Beran estimator. We recall its construction, as this provides the basis for incorporating expert judgments on contamination in a systematic way. We then formalize event contamination within a nonparametric framework for right-censored data and use this formulation to define the conditional expert Kaplan–Meier estimator, which integrates expert assessments directly into the estimation procedure.

\subsection{Preliminaries}
Let $(\Omega, \mathcal{A}, P)$ be a probability space and let $X\sim F$ be a random variable with values in $[0,\infty)$ such that $F(0)=0$. This is the variable of interest, and the aim is to estimate $X$. 
Let $C\sim J$ be a random variable with values in $[0,\infty]$ representing the censoring mechanism. We assume that $X$ and $C$ are dependent on a possibly high-dimensional variable $\boldsymbol{Z}\in \amsmathbb{R}^{k}$ with density $g$. We then observe 
\begin{align*}
    W=X\wedge C,\quad \delta = 1_{(W=X)}, \quad \boldsymbol{Z}.
\end{align*}
Let $(W_{i},\delta_{i}, \boldsymbol{Z}_{i})$ be a sequence of iid copies of the triplet $(W, \delta, \boldsymbol{Z})$ according to this specification. Under the assumption of conditionally independent (entirely random) right-censoring, meaning that $X\underset{\boldsymbol{Z}}{\independent} C$, this constitutes the well-known statistical problem of inference under right-censoring, in which the usual conditional Kaplan--Meier estimator is relevant. The estimator may be derived as follows. Let
\begin{align*}
    F(t|\z) = P(X\leq t \mid \boldsymbol{Z}=\z)=\int_{[0,t]}1-F(s-|\z)\,\mathrm{d}\Lambda(s|\z),
\end{align*}
where 
\begin{align*}
\Lambda(t|\z)=\int_{[0,t]}\frac{1}{1-F(s-|\z)}\,\mathrm{d}F(s|\z), \quad 0\leq t <F^{-1}(1|\z),
\end{align*}
is the cumulative hazard function. Let $H(t|\z)=P(W\leq t\mid \boldsymbol{Z}=\z)$. Under conditionally independent right-censoring it holds that $(1-F(t|\z))(1-J(t|\z))=1-H(t|\z)$, and the following expression is obtained:
\begin{align}
    \label{cumulative2}\Lambda(t|\z)=\int_{[0,t]}\frac{1}{1-H(s-|\z)}\,\mathrm{d}H_{1}(s|\z), \quad 0\leq t < H^{-1}(1|\z),
\end{align}
where $H_{1}(t|\z)=P(W\leq t, W=X\mid \boldsymbol{Z}=\z)$. Define the estimators $\amsmathbb{H}^{(n)}(\cdot|\z)$ and $\amsmathbb{H}^{(n)}_{1}(\cdot|\z)$ of $H(\cdot|\z)$ and $H_{1}(\cdot|\z)$, respectively, by the Nadaraya--Watson type regression estimators
\begin{align*}
    \amsmathbb{H}^{(n)}(t|\z)=\frac{\mm(t;\z)}{\g(\z)},\quad 
    \amsmathbb{H}_{1}^{(n)}(t|\z)=\frac{\mm_{1}(t;\z)}{\g(\z)},
\end{align*}
where
\begin{align*}
    \mm(t;\z)&=\sum_{i=1}^{n}1_{(W_{i}\leq t)}K_{\B_{n}}\left(\Z_{i}-\z\right), \\ 
    \mm_{1}(t;\z)&=\sum_{i=1}^{n}1_{(W_{i}\leq t)}\delta_{i}K_{\B_{n}}\left(\Z_{i}-\z\right), \\
    \g(\z)&=\sum_{i=1}^{n}K_{\B_{n}}\left(\Z_{i}-\z\right)
\end{align*}
for 
\begin{align*}
    K_{\B_{n}}\left(\boldsymbol{Z}-\z\right) = \frac{1}{n|\B_{n}|}K\left(\boldsymbol{B}_{n}^{-1}(\boldsymbol{Z}-\z)\right).
\end{align*}
Here, $K\colon\amsmathbb{R}^{k}\to \amsmathbb{R}_{\geq 0}$ is a bounded kernel function with compact support, and $(\B_{n})$ is a sequence of $k\times k$ non-singular bandwidth matrices with non-negative entries. It is assumed throughout that 
\begin{align*}
    \int K(\uu)\,\mathrm{d}\uu=1,\quad 
    \int \uu^{T}K(\uu)\,\mathrm{d}\uu=0,\quad 
    |\B_{n}|\to 0,\quad 
    n|\B_{n}|\to \infty,
\end{align*}
and that $\boldsymbol{z}\in S\subseteq \R^{k}$ with $P(\boldsymbol{Z}\in S)=1$. Using the estimators $\amsmathbb{H}^{(n)}(\cdot|\z)$ and $\amsmathbb{H}_{1}^{(n)}(\cdot|\z)$ we may obtain an empirical version of the cumulative hazard in~\eqref{cumulative2} as 
\begin{align*}
   \mathbb{\Lambda}^{(n)}(t|\z)=\int_{[0,t]}\frac{1}{1-\amsmathbb{H}^{(n)}(s-|\z)}\,\mathrm{d}\amsmathbb{H}_{1}^{(n)}(s|\z), 
   \quad 0\leq t < H^{-1}(1|\z).
\end{align*}
The conditional Kaplan--Meier estimator $\amsmathbb{F}^{(n)}(\cdot|\z)$ is then obtained by evaluating the estimator of the cumulative hazard in the product integral $\phi$:
\begin{align*}
    1-\amsmathbb{F}^{(n)}(t|\z)
    =\phi\left(-\mathbb{\Lambda}^{(n)}(t|\z)\right)
    =\prod_{s\leq t}\left(1-\Delta \mathbb{\Lambda}^{(n)}(s|\z)\right).
\end{align*}

%
%
In the sections below on functional asymptotic theory, we work in the space $\ell^{\infty}(\mathcal{F}, |\cdot|{\infty})$, the set of uniformly bounded real-valued functions on $\mathcal{F}$. A function $z : \mathcal{F} \to \amsmathbb{R}$ belongs to this space when $|z|{\infty} = \sup_{f \in \mathcal{F}} |z(f)| < \infty$. For background on weak convergence and empirical process theory, we refer to~\cite{WCEP, AS, EMP}.

\subsection{Survival analysis under contamination}
We now extend the set-up to include a contamination mechanism. The contamination mechanism is a random variable $Y$ with values in $[0,\infty]$. We observe iid data points $(W_{i},\delta_{i},\boldsymbol{Z}_{i})$ according to
\begin{align}
    W=X\wedge (Y\wedge C), \quad \delta=1_{\{W=(X\wedge Y)\}}, \quad \boldsymbol{Z}. \label{contamination_setup}
\end{align}
The contamination variable $Y$ has the effect that the event $(W=X)$ is no longer directly accessible through $\delta$. This formalizes the setting in which biased estimates arise in practice: under contamination, it may occur that $\delta = 1$ while at the same time $Y<X$. Such an observation is referred to as a contaminated or false event. If $\delta$ is used without caution on contaminated data and incorrectly interpreted as the indicator of the event $(W=X)$, this leads to biased estimates that exaggerate the true hazard rate. The case $\delta=1$ and $X\leq Y$ is referred to as a true event, and whenever $\delta=0$ the event time is censored. 

To handle such ambiguous observations, the expert provides a sequence of random variables intended to discriminate contaminated events from truly observed events:
\begin{definition}[The expert] \label{def_expert_conditional_km}
Let $(W_{nm}, \delta_{nm}, \boldsymbol{Z}_{nm})$ be iid copies of $(W,\delta,\boldsymbol{Z})$ according to~\eqref{contamination_setup}, indexed by a triangular array. The expert provides a triangular array of random variables
\begin{align*}
    \eta_{n1},\dots,\eta_{nm},\dots,\eta_{nn}, \quad \text{for } 1\leq m\leq n \text{ and } n\in \amsmathbb{N},
\end{align*}
where $\eta_{nm}\in \{0,1\}$ represents the expert's best judgment of the unobservable event indicator $1_{(W_{nm}=X_{nm})}$.
\end{definition}

\indent
As usual for a triangular array, we assume that for every $n$ the random variables $\eta_{n1},\ldots,\eta_{nn}$ are mutually independent but not necessarily identically distributed. The expert judgments are exogenously given, and all implementation-specific details concerning these judgments are abstracted away in the definition. This provides flexibility when designing expert judgments for practical estimation purposes. We emphasize that any covariate may be used in forming the expert judgments to adjudicate the uncertain events.

Similar to the usual conditional Kaplan--Meier setting, we say that conditional contaminated right-censoring is entirely random if 
\[
    X\underset{\boldsymbol{Z}}{\independent}(Y\wedge C).
\]
Under this assumption it holds that 
\begin{align}
    \label{cumulative}\Lambda(t|\z)=\int_{[0,t]}\frac{1}{1-H(s-|\z)}\,\mathrm{d}H_{1}(s|\z), \quad 0\leq t < H^{-1}(1|\z).
\end{align}

We replace the observables $(\delta_{nm})$ by the expert judgments $(\eta_{nm})$ in the Priestley--Chao estimator from above and introduce the estimator $\amsmathbb{H}_{\dagger,1}^{(n)}(\cdot|\z)$ of $H_{1}(\cdot|\z)$ as
\begin{align*}
    \amsmathbb{H}^{(n)}_{\dagger,1}(t|\z)
    =\frac{\mm_{\dagger,1}(t;\z)}{\g(\z)}
    =\frac{1}{\g(\z)}\sum_{m=1}^{n}1_{(W_{nm}\leq t)}\eta_{nm}K_{\B_{n}}\left(\Z_{nm}-\z\right),
    \quad t\geq 0.
\end{align*}
The subscript ``$\dagger$'' is used for estimators depending on expert judgments. We arrive at the following definition:
\begin{definition}[The conditional expert Kaplan--Meier estimator]
Let $(W_{nm}, \delta_{nm}, \boldsymbol{Z}_{nm})$ be iid copies of $(W,\delta,\boldsymbol{Z})$ according to~\eqref{contamination_setup}, indexed by a triangular array. The conditional expert Kaplan--Meier estimator $\amsmathbb{F}^{(n)}_{\dagger}(\cdot|\z)$ of $F(\cdot|\z)$ is given by 
\begin{align*}
    1-\amsmathbb{F}_{\dagger}^{(n)}(t|\z)
    =\phi\left(-\mathbb{\Lambda}_{\dagger}^{(n)}(t|\z)\right)=\prod_{s\leq t}\left(1-\Delta \mathbb{\Lambda}^{(n)}_{\dagger}(s|\z)\right),
\end{align*}
where $\mathbb{\Lambda}_{\dagger}^{(n)}(\cdot|\z)$ is the conditional expert Nelson--Aalen estimator of $\Lambda(\cdot|\z)$ given by 
\begin{align*}
    \mathbb{\Lambda}_{\dagger}^{(n)}(t|\z)
    =\int_{[0,t]}\frac{1}{1-\amsmathbb{H}^{(n)}(s-|\z)}\,\mathrm{d}\amsmathbb{H}_{\dagger,1}^{(n)}(s|\z),
    \quad 0\leq t < H^{-1}(1|\z).
\end{align*}
\end{definition}

We make no a priori assumptions about the expert's ability to identify and censor contaminated events. This framework accommodates expert judgments of varying quality. In the subsequent sections on consistency and weak convergence, we establish conditions under which the resulting survival estimates are asymptotically unbiased. We further develop the asymptotic theory to quantify the bias that arises when the expert's judgments do not meet these conditions.

\section{Functional consistency}\label{section_consistency}
\noindent This section presents results on the consistency of the conditional expert Kaplan--Meier estimator $\amsmathbb{F}_{\dagger}^{(n)}(\cdot|\z)$, incorporating both perfect and imperfect expert judgments. We build on~\cite{Stute}, which establishes almost sure convergence of a kernel regression estimator compatible with the framework described above. Specifically, the results that follow are derived under Assumptions~\ref{Strong_uniform_assumptions_2}.
\begin{assumptions}[Functional consistency] \label{Strong_uniform_assumptions_2} Assume that 
    \begin{enumerate}
        \item $K$ is the product kernel, i.e. $K(\boldsymbol{u})=\prod_{i=1}^{k}\Tilde{K}(u_{i})$ for a kernel $\Tilde{K}\colon\amsmathbb{R}\to \amsmathbb{R}$ \label{strong_uniform_consistency_2_i}
        \item $\boldsymbol{B}_{n}$ is diagonal for all $n$ with $\boldsymbol{B}_{n}(i,i)=b_{n}>0$ for all $1\leq i\leq k$\label{strong_uniform_consistency_2_ii}
        \item    \label{strong_uniform_consistency_2_iii}$\sum_{n=1}^{\infty}c_{n}^{r}<\infty$ for some $r>1$ and where $c_{n}=\log(n)(nb_{n}^{k})^{-1}$
        \item \label{strong_uniform_consistency_2_iv}         $\sum_{n=1}^{\infty}2\exp\left(-\frac{2\varepsilon^{2} n|\B_{n}|^{2}}{\sup_{\uu}K^{2}(\uu)}\right)<\infty$ for every $\varepsilon>0$\label{consistency2_assumptions}
        \item $g$ is $C^{2}$.
    \end{enumerate}
\end{assumptions}
\begin{theorem}[Functional consistency of the conditional expert Kaplan--Meier estimator] \label{consistency_theorem}Let Assumptions \ref{Strong_uniform_assumptions_2} hold. Assume that conditional contaminated right-censoring is entirely random and that 
\begin{align}
        \bigg\vert\E\left[\md(t;\boldsymbol{z})\right]-H_{1}(t|\z)g(\z)\bigg \vert \to 0,\label{consistency_assumption}
    \end{align}
    for every $t\geq 0.$
    For $0\leq \theta< H^{-1}(1|\z)$ it then holds that 
\begin{align*}
    \sup_{0\leq t \leq \theta}|\mathbb{\Lambda}_{\dagger}^{(n)}(t|\z)-\Lambda(t|\z)|&\overset{\textnormal{a.s.}}{\to}0
    \\
    \sup_{0\leq t \leq \theta}|\Fd(t|\z)-F(t|\z)|&\overset{\textnormal{a.s.}}{\to}0.
\end{align*}
\end{theorem}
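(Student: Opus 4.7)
The plan is to chain three consistency results, mirroring the standard Kaplan--Meier construction: first, uniform strong consistency of the kernel-smoothed distribution estimators $\amsmathbb{H}^{(n)}(\cdot|\z)$ and $\HHd(\cdot|\z)$; second, uniform convergence of the cumulative hazard $\LLd(\cdot|\z)$ via an integration-by-parts/Helly--Bray argument; and third, uniform convergence of $\Fd(\cdot|\z)$ via continuity of the product-integral map $\phi$. The structure of the argument is classical in survival analysis, so the real work is concentrated on step one, where the triangular-array expert judgments $\eta_{nm}$ prevent direct application of Stute's kernel-regression consistency.

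For $\amsmathbb{H}^{(n)}(\cdot|\z)$ the summands depend only on the iid triplets $(W_m,\delta_m,\Z_m)$ together with indicators $1_{(W_m\leq t)}$, so Stute's strong uniform consistency gives $\sup_{t\geq 0}|\amsmathbb{H}^{(n)}(t|\z)-H(t|\z)|\overset{\textnormal{a.s.}}{\to} 0$ on $S$, together with $\g(\z)\overset{\textnormal{a.s.}}{\to} g(\z)>0$ (using the $C^2$ smoothness of $g$ and Assumption \ref{Strong_uniform_assumptions_2}(\ref{strong_uniform_consistency_2_iii})). For $\HHd(\cdot|\z)$ I would use the bias--variance decomposition
\[
\HHd(t|\z)-H_{1}(t|\z)
=\underbrace{\frac{\md(t;\z)-\E[\md(t;\z)]}{\g(\z)}}_{\text{stochastic}}
+\underbrace{\frac{\E[\md(t;\z)]-H_{1}(t|\z)\g(\z)}{\g(\z)}}_{\text{bias}}.
\]
The bias term vanishes pointwise by the theorem's hypothesis \eqref{consistency_assumption} together with $\g(\z)\to g(\z)$. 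For the stochastic term, each summand in $\md(t;\z)$ is bounded by $\sup_{\uu}K(\uu)/(n|\B_n|)$ and the summands are independent across $m$ (for fixed $n$), so Hoeffding's inequality yields a tail bound of exactly the form appearing in Assumption \ref{Strong_uniform_assumptions_2}(\ref{strong_uniform_consistency_2_iv}); a Borel--Cantelli argument then delivers pointwise a.s.\ convergence. Uniformity in $t\in[0,\theta]$ is obtained by the standard monotonicity argument: since both $\HHd(\cdot|\z)$ and $H_1(\cdot|\z)$ are nondecreasing in $t$, the sup-norm distance is controlled by the maximal deviation on a finite grid whose mesh can be driven to zero along a subsequence.

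With uniform strong consistency of $\HH(\cdot|\z)$ and $\HHd(\cdot|\z)$ in hand, the restriction $\theta<H^{-1}(1|\z)$ guarantees that $1-H(s|\z)$ is bounded below on $[0,\theta]$, hence so is $1-\amsmathbb{H}^{(n)}(s-|\z)$ eventually a.s. Integration by parts on $[0,t]$ rewrites the difference $\LLd(t|\z)-\Lambda(t|\z)$ as a sum of two terms, each of the form ``uniformly convergent integrand times total-variation-bounded integrator'' or vice versa; the Helly--Bray lemma then produces uniform a.s.\ convergence on $[0,\theta]$. For the last step, recall that the product-integral map $\phi$ is continuous (in sup-norm on compacts) from the space of cadlag functions of bounded variation into itself, so $1-\Fd(t|\z)=\phi(-\LLd(t|\z))$ converges uniformly a.s.\ to $1-F(t|\z)=\phi(-\Lambda(t|\z))$ on $[0,\theta]$.

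The main obstacle is the first step: because $(\eta_{nm})_{m\leq n}$ is only row-wise independent and not identically distributed, Stute's theorem does not cover $\HHd$ directly, and one must combine the Hoeffding/Borel--Cantelli mechanism enabled by Assumption \ref{Strong_uniform_assumptions_2}(\ref{strong_uniform_consistency_2_iv}) with the external bias-control hypothesis \eqref{consistency_assumption}. The remaining two steps are routine continuous-mapping arguments provided the lower bound on $1-H(\cdot|\z)$ over $[0,\theta]$ is exploited carefully.
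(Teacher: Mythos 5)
Your proposal is correct and follows essentially the same route as the paper: uniform strong consistency of $\amsmathbb{H}^{(n)}(\cdot|\z)$ via Stute's result and of $\amsmathbb{H}^{(n)}_{\dagger,1}(\cdot|\z)$ via Hoeffding's inequality and Borel--Cantelli under the summability condition (iv) of Assumptions~\ref{Strong_uniform_assumptions_2} combined with the bias hypothesis~\eqref{consistency_assumption} and a Glivenko--Cantelli-type monotonicity argument, followed by the same two-term decomposition of $\mathbb{\Lambda}_{\dagger}^{(n)}(\cdot|\z)-\Lambda(\cdot|\z)$ and a final product-integral step. The only cosmetic difference is that you invoke sup-norm continuity of the product-integral map where the paper applies the Duhamel equation pathwise; these amount to the same tool.
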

\begin{remark}\label{bandwidth_remark_consistency}
 Assumptions~\ref{Strong_uniform_assumptions_2} are fulfilled by any product kernel and the following choice of bandwidth 
    \begin{align*}
        b_{n}=\left(\frac{\log(n)}{n^{\rho}}\right)^{1/k}
    \end{align*}
    for $0<\rho<1/2$. Sufficient conditions for the assumption in~\eqref{consistency_assumption} to hold is to have iid unbiased expert judgments 
    \begin{align*}
        \eta_{i}\sim \delta_{i}\,\text{Bern}(p(W_{i},\boldsymbol{Z}_{i})),
    \end{align*}
    where $p$ is a function given by
    \begin{align*}
        p(w,\z)=\E\left[1_{(W=X)}|\delta=1,W=w,\boldsymbol{Z}=\z\right]
    \end{align*}
    and the condition that $\z\mapsto H_{1}(t|\z)$ is $C^{2}$ for every $t\geq 0$. A full justification for this remark is given in the appendix.
\end{remark}
Unbiased expert judgments are sufficient for consistency under contamination. For example, the expert distribution may be defined using a suitable maximum likelihood estimator based on historical data where the effects of contamination have been observed over time. Theorem~\ref{consistency_theorem}, together with Remark~\ref{bandwidth_remark_consistency}, provides guidance on handling contamination in practice.

Consider a simple scenario with uniform contamination, where
\begin{align*}
\E[1_{(W=X)}|\delta=1,W=w,\boldsymbol{Z}=\z]=\E[1_{(W=X)}|\delta=1]=c
\end{align*}
for some $c \in [0,1]$. In this case, an expert censoring observed events with probability $1-c$ ensures consistency of the estimator. Uniform contamination patterns are realistic in some applications; for instance, in disability insurance, claims may be equally susceptible to processing or health-examination errors regardless of individual characteristics.

Such scenarios, however, presuppose the availability of historical data or other means to access contamination effects. In many practical situations, complete information about the contamination mechanism is not available, and expert judgments may therefore be biased. It is important to examine how such bias impacts asymptotic consistency and to quantify its effect for specific implementations of expert judgments. This analysis is presented below.
\begin{corollary}[Bias of the conditional expert Kaplan--Meier estimator] \label{bias}
    Let Assumptions~\ref{Strong_uniform_assumptions_2} hold and assume that conditional contaminated right-censoring is entirely random. Let $\gamma(\cdot;\z)$ be a bias function of the expert meaning that 
    \begin{align*}
        \bigg\vert\E\left[\md(t;\boldsymbol{z})\right]-\left(H_{1}(t|\z)g(\z) + \gamma(t;\z)\right) \bigg\vert \to 0 
    \end{align*}
    for every $t\geq 0$. Then it holds that 
\begin{align*}
    |\Fd(t|\z)-\phi\left(-\Lambda(t|\z)-\Gamma(t;\z)\right)|\overset{\textnormal{a.s.}}{\to} 0
\end{align*}
for $0\leq t<H^{-1}(1|\z)$, where
\begin{align*}
    \Gamma(t;\z)=\int_{[0,t]}\frac{1}{g(\z)\left(1-H(s-|\z)\right)}\textnormal{d}\gamma(s;\z).
\end{align*}
If it holds that $H_{1}(\cdot|\z)g(\z)+\gamma(\cdot;\z)$ is a non-decreasing and bounded càdlàg function, then for $0\leq \theta<H^{-1}(1|\z)$ the convergence is in $\ell^{\infty}([0,\theta])$ 
\begin{align*}
    \sup_{0\leq t \leq \theta}|\Fd(t|\z)-\phi\left(-\Lambda(t|\z)-\Gamma(t;\z)\right)|&\overset{\textnormal{a.s.}}{\to}0.
\end{align*}
\end{corollary}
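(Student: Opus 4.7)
The plan is to track the bias term $\gamma(\cdot;\z)$ through the same steps used in the proof of Theorem~\ref{consistency_theorem}, adjusting each limit by the extra summand. The argument decomposes into three stages: pointwise a.s.\ convergence of the kernel-smoothed subdistribution estimator to a biased limit, passage to the cumulative hazard via a Helly-type integration step, and application of the continuous mapping theorem for the product integral $\phi$.

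For the first stage, the bandwidth and kernel conditions in Assumptions~\ref{Strong_uniform_assumptions_2} together with a Borel–Cantelli argument based on~\ref{consistency2_assumptions} yield $\md(t;\z) - \E[\md(t;\z)] \to 0$ a.s.\ for each fixed $t$. Combined with the hypothesis $\E[\md(t;\z)] \to H_1(t|\z)g(\z) + \gamma(t;\z)$ and the standard Parzen–Rosenblatt convergence $\g(\z) \to g(\z)$ a.s., this gives
\begin{align*}
\HHd(t|\z) \overset{\textnormal{a.s.}}{\to} H_1(t|\z) + \frac{\gamma(t;\z)}{g(\z)}
\end{align*}
pointwise in $t$. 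The unbiased branch of Theorem~\ref{consistency_theorem} simultaneously supplies uniform a.s.\ convergence of $\HH(\cdot-|\z)$ to $H(\cdot-|\z)$ on $[0,\theta]$.

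For the second stage, since $H(\theta|\z) < 1$, the integrand $(1 - \HH(s-|\z))^{-1}$ is a.s.\ uniformly bounded on $[0,\theta]$ and converges uniformly there to its deterministic limit $(1-H(s-|\z))^{-1}$. A Helly–Bray type integration argument (integration by parts plus dominated convergence against the integrator $\HHd$) then yields
\begin{align*}
\LLd(t|\z) \overset{\textnormal{a.s.}}{\to} \Lambda(t|\z) + \Gamma(t;\z)
\end{align*}
pointwise in $t \in [0, H^{-1}(1|\z))$. The pointwise conclusion of the corollary now follows from the continuity of the product integral $\phi$ in the supremum norm on càdlàg functions of locally bounded variation, applied to the deterministic shift $-\Lambda - \Gamma$.

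For the uniform statement, I would use the assumption that $s \mapsto H_1(s|\z)g(\z) + \gamma(s;\z)$ is non-decreasing, bounded and càdlàg, so that the pointwise limit of $\HHd(\cdot|\z)$ inherits the same structure. Since $\HHd(\cdot|\z)$ is itself a non-decreasing bounded càdlàg step function, a Pólya/Glivenko–Cantelli-type argument upgrades pointwise a.s.\ convergence to uniform a.s.\ convergence on $[0,\theta]$. The Helly step then propagates this uniformity to $\LLd(\cdot|\z)$, and continuity of $\phi$ in sup-norm closes the argument for $\Fd(\cdot|\z)$. The main obstacle is precisely this monotone-to-uniform upgrade: when the limiting integrator has jumps one cannot simply invoke Pólya's theorem in its continuous-limit form, and one must instead argue directly using monotonicity and the countability of the jump set, mirroring the careful treatment already carried out in the proof of Theorem~\ref{consistency_theorem}.
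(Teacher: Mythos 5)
Your proposal follows the same overall route as the paper --- track the bias through the decomposition used in Lemma~\ref{consistency_lemma}, repeat the Nelson--Aalen step of Theorem~\ref{consistency_theorem} with the shifted centering $\Lambda+\Gamma$, and finish at the product integral --- but it has a genuine gap at the product-integral step. You deduce the pointwise conclusion by applying continuity of $\phi$ in the supremum norm to the convergence $\LLd(t|\z)\to\Lambda(t|\z)+\Gamma(t;\z)$, yet your second stage only provides this convergence pointwise in $t$. Since $\phi(-\LLd)(t)$ depends on the whole path of $\LLd$ on $[0,t]$, sup-norm continuity of $\phi$ cannot be fed with pointwise convergence, and the implication is false in general: take $\Lambda_n=\tfrac12 1_{[1-1/n,\infty)}+\tfrac12 1_{[1,\infty)}$ and $\Lambda=1_{[1,\infty)}$; then $\Lambda_n\to\Lambda$ pointwise at every point, both are non-decreasing and c\`adl\`ag with uniformly bounded variation, but $\phi(-\Lambda_n)(t)=\tfrac14$ while $\phi(-\Lambda)(t)=0$ for $t\geq 1$. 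The paper avoids this by first establishing \emph{uniform} a.s.\ convergence $\sup_{0\le t\le\theta}|\LLd(t|\z)-\Lambda(t|\z)-\Gamma(t;\z)|\to 0$ and only then applying the Duhamel equation, which bounds the sup-distance of the product integrals by an a.s.\ eventually bounded factor times the sup-distance of the hazards; pointwise convergence of the hazards alone is simply not enough information to control $\Fd$.

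The same problem undermines your proposed repair of the monotone-to-uniform upgrade, namely that one can argue ``directly using monotonicity and the countability of the jump set''. That is not sufficient: the paper's Lemma~\ref{cadlag_bounded_as} additionally requires convergence of left limits, $T^{(n)}(x-)\to T(x-)$, and this hypothesis cannot be dropped --- for $T^{(n)}=1_{[1-1/n,\infty)}$ and $T=1_{[1,\infty)}$ you have pointwise convergence at every $x$ and monotonicity, yet $\sup_x|T^{(n)}(x)-T(x)|=1$ for all $n$ because $T^{(n)}(1-)=1\neq 0=T(1-)$. So to complete the argument you must also verify $\HHd(t-|\z)\to H_1(t-|\z)+\gamma(t-;\z)/g(\z)$ a.s.\ (by the same Hoeffding/Borel--Cantelli argument applied to $1_{(W_{nm}<t)}$, reading the bias hypothesis at left limits), after which Lemma~\ref{cadlag_bounded_as}, the Nelson--Aalen decomposition, and the Duhamel equation deliver both conclusions exactly as in the paper.
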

\begin{example}[Example on biased judgments] 
\label{example_consistency}
We extend the setup of~\cite{CrudeExpert} by considering expressions from which expert judgments may be obtained using the function $p$ from Remark \ref{bandwidth_remark_consistency} above. In the case of full information about the contamination mechanism, which is equivalent to knowledge of $p$, we define the perfect expert ensuring consistency by the iid judgments
\begin{align*}
    \eta^{\star}_{i} \sim \delta_{i}\,\text{Bern}\left(p(W_{i},\boldsymbol{Z}_{i})\right).
\end{align*}
 We define the naïve expert as the expert taking all observed events as true; that is, the naïve expert provides the iid judgments
\begin{align*}
    \eta^{\times}_{i} = \delta_{i}.
\end{align*}
For inference purposes, a natural approach is to use an estimated function $\hat{\mathbb{p}}$ of $p$ to create expert judgments. We consider the practically feasible expert to hold partial information about the true contamination mechanism $p$, somewhere between the naïve and perfect expert. To this end, assume that the estimated $\hat{\mathbb{p}}$ can be expressed as
\begin{align*}
    \hat{\mathbb{p}}(w,\z) = (1-p_{0}) + p_{0}\,p(w,\z),
\end{align*}
where $p_{0}\in [0,1]$ is an information parameter bridging naïve and perfect judgments. The practically feasible expert then provides the iid judgments
\begin{align*}
     \eta_{i} \sim \delta_{i}\,\text{Bern}\left(\hat{\mathbb{p}}(W_{i},\boldsymbol{Z}_{i})\right).
\end{align*}
Casting expert judgments in this way facilitates a natural way to quantify expert bias.

Using Lemma~\ref{weak_consistency_proof}, under differentiability assumptions on event distributions, the following weak convergence limit is obtained for the practically feasible expert:
\begin{align*}
\E\left[\md(t;\z)\right] \to (1-p_{0}) H_{1}^{\times}(t|\z) g(\z) + p_{0} H_{1}(t|\z) g(\z), \quad t\geq 0,
\end{align*}
where $H_{1}^{\times}(t|\z) = P(W\leq t, \delta=1 \mid \boldsymbol{Z}=\z)$. This implies a bias $\gamma(\cdot;\z)$ given by 
\begin{align*}
    \gamma(t;\z) = (1-p_{0})  \left(H_{1}^{\times}(t|\z) - H_{1}(t|\z)\right)g(\z), \quad t\geq 0,
\end{align*} 
which naturally decreases as a function of the information parameter $p_{0}$. This bias can be further characterized in terms of the function $p$ and $H_{1}^{\times}(\cdot|\z)$. It holds that 
\begin{align*}
H_{1}^{\times}(t|\z) - H_{1}(t|\z) = \int_{[0,t]} 1 - p(s,\z)\mathrm{d}H^{\times}_{1}(s|\z), \quad t\geq 0,
\end{align*}
and $\Gamma(\cdot;\z)$ is then determined as
\begin{align*}
    \Gamma(t;\z) = (1-p_{0}) \int_{[0,t]} \frac{1-p(s,\z)}{1-H(s-|\z)} \mathrm{d}H_{1}^{\times}(s|\z), \quad 0 \leq t < H^{-1}(1|\z).
\end{align*}

Assume for simplicity that $\Lambda(\cdot|\z)$ and $\Gamma(\cdot;\z)$ are continuous. By Corollary~\ref{bias} we then have  
\begin{align*}
    \Fd(\cdot|\z) \overset{\text{a.s.}}{\to} \text{e}^{-\Gamma(\cdot;\z)} F(\cdot|\z) \quad \text{in } \ell^{\infty}([0,\theta]).
\end{align*}
Here, $\text{e}^{-\Gamma(\cdot;\z)}$ serves as a factor expressing how biased expert judgments manifest as bias in the conditional expert Kaplan--Meier estimator. This factor can be approximated by plugging in the estimators of $p$ and $H(\cdot|\z)$ alongside an estimator of the naïve distribution $H_{1}^{\times}(\cdot|\z)$, obtainable directly from observed data. By benchmarking expert quality through $p_{0}$, it is then possible to evaluate how the resulting inference is influenced.

\end{example}
\section{Functional weak convergence}\label{section_normality}
We study the functional weak convergence of the conditional expert Kaplan–Meier estimator for two main reasons. Theoretically, weak convergence guarantees that estimation behaves in a predictable and well-understood manner as sample size increases, even in the presence of contamination and expert judgments. Practically, it enables the construction of approximate confidence intervals and hypothesis tests based on the asymptotic normal distribution. Beyond establishing weak convergence under sufficient expert information, it is also important to examine how imperfect information affects the limiting distribution. By parameterizing the quality of expert judgments, as in Example~\ref{example_consistency}, we explicitly relate the quality of expert input to the asymptotic bias of the survival estimates, providing a framework for uncertainty quantification in applied settings.

We impose the following assumptions on the underlying distributions and kernel behavior for the following results:
\begin{assumptions}[Weak convergence] \label{weak_convergence_assumptions_2}Assume that
\begin{enumerate}
     \item $\sqrt{n|\B_{n}|}\left(\max_{i,j}\B_{n}(i,j)\right)^{2}\to 0$
     \item  $g$ is $C^{2}$ 
     \item $\z \mapsto H(t|\z)$ and $\z \mapsto H_{1}(t|\z)$ are $C^{2}$ for every $t\geq 0$.
\end{enumerate}
\end{assumptions}
Both the bandwidth assumption in Assumption~\ref{weak_convergence_assumptions_2} and the expert quality condition in equation~\eqref{weak_convergence_condition} below are satisfied by similar considerations as in Remark~\ref{bandwidth_remark_consistency} for $k<4$. For instance, for $k=1$ one may obtain $1/5<\rho<1/2$.
\begin{theorem}[Functional weak convergence of the conditional expert Kaplan--Meier estimator]  \label{weakconvergence} Let Assumptions~\ref{weak_convergence_assumptions_2} hold. Assume that conditional contaminated right-censoring is entirely random and that 
\begin{align} \label{weak_convergence_condition}
    \sqrt{n|\boldsymbol{B}_{n}|}\bigg\vert\E\left[\md(t;\boldsymbol{z})\right]-H_{1}(t|\z)g(\z)\bigg\vert \to 0
\end{align}
for every $t\geq 0$. For $0\leq \theta<H^{-1}(1|\boldsymbol{z})$ it then holds that 
    \begin{align*}
        \sqrt{n|\boldsymbol{B}_{n}|}\left(\LLd(\cdot |\boldsymbol{z})-\Lambda(\cdot |\boldsymbol{z})\right)&\overset{\textnormal{D}}{\to}\amsmathbb{L}(\cdot|\boldsymbol{z}),\quad 
        \sqrt{n|\boldsymbol{B}_{n}|}\left(\Fd(\cdot |\boldsymbol{z})-F(\cdot |\boldsymbol{z})\right)\overset{\textnormal{D}}{\to}\amsmathbb{Z}(\cdot |\boldsymbol{z})
    \end{align*}
    in $\ell^{\infty}([0,\theta])$. Here $\amsmathbb{L}(\cdot |\boldsymbol{z})$ and $\amsmathbb{Z}(\cdot|\z)$ are zero-mean Gaussian processes with covariance functions 
  \begin{align*}            \sigma_{\amsmathbb{L}(\cdot|\z)}^{2}(t,s)&=\frac{1}{g(\boldsymbol{z})}\int K^{2}(\boldsymbol{u})\textnormal{d}\boldsymbol{u}\int_{[0,s\wedge t]}\frac{1-\Delta \Lambda(u|\boldsymbol{z})}{\Bar{H}(u|\boldsymbol{z})}\textnormal{d}\Lambda(u|\boldsymbol{z}),
  \\
  \sigma^{2}_{\amsmathbb{Z}(\cdot|\z)}(t,s)&=\frac{\Bar{F}(t|\boldsymbol{z})\Bar{F}(s|\boldsymbol{z})}{g(\boldsymbol{z})}\int K^{2}(\boldsymbol{u})\textnormal{d}\boldsymbol{u}\int_{[0,s\wedge t]}\frac{1}{\Bar{H}(u|\boldsymbol{z})\left(1-\Delta \Lambda(u|\boldsymbol{z})\right)}\textnormal{d}\Lambda(u|\boldsymbol{z}),
  \end{align*}
  where $\Bar{F}(t|\z)=1-F(t|\z)$ and $\Bar{H}(t|\z)=1-H(t-|\z)$.
\end{theorem}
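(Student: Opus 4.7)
The plan is a two-stage application of the functional delta method: first obtain joint weak convergence of the smoothed sub-distribution processes, then push this through the Nelson--Aalen map to obtain the limit of $\LLd$, and finally through the product integral map to obtain the limit of $\Fd$.

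The first step is to establish that
\[
\sqrt{n|\B_{n}|}\bigl(\HH(\cdot|\z)-H(\cdot|\z),\;\HHd(\cdot|\z)-H_{1}(\cdot|\z)\bigr)\overset{D}{\to}(\mathbb{U}(\cdot|\z),\mathbb{V}(\cdot|\z))
\]
in $\ell^{\infty}([0,\theta])^{2}$, where $(\mathbb{U},\mathbb{V})$ is a centered bivariate Gaussian process. Writing the Nadaraya--Watson ratios as $(\mm-H\g)/\g$ and $(\md-H_{1}\g)/\g$, the almost sure convergence $\g(\z)\to g(\z)$ together with Slutsky reduces the task to a CLT for the numerators. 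Finite-dimensional convergence follows from the Lindeberg--Feller CLT for triangular arrays applied to the iid summands $(1_{(W_{i}\leq t)}-H(t|\z))K(\B_{n}^{-1}(\Z_{i}-\z))/(n|\B_{n}|)$ and their $\eta_{i}$-weighted analogs. The bias $\E[\mm(t;\z)]-H(t|\z)g(\z)$ is of order $(\max_{i,j}\B_{n}(i,j))^{2}$ by the standard kernel-bias Taylor expansion under $C^{2}$-smoothness of $g$ and $\z\mapsto H(t|\z)$, which is $o((n|\B_{n}|)^{-1/2})$ by Assumption~(i); for $\md$ the corresponding negligibility is exactly condition~\eqref{weak_convergence_condition}. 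A change of variables $\uu=\B_{n}^{-1}(\z'-\z)$ in the variance integrals yields covariance kernels of the form $\mathrm{Cov}(\mathbb{U}(t),\mathbb{U}(s))=g(\z)^{-1}(H(s\wedge t|\z)-H(s|\z)H(t|\z))\int K^{2}(\uu)\,d\uu$, together with analogous expressions for the remaining cross-covariances (using $\eta^{2}=\eta$ to turn second moments into first moments involving $H_{1}$). Tightness in $\ell^{\infty}([0,\theta])$ then follows from the monotonicity of $\HH$ and $\HHd$ in $t$ and the continuity of the limiting covariance, either through a monotone-process tightness criterion or by checking the modulus-of-continuity condition for the VC class $\{1_{(w\leq t)}:t\in[0,\theta]\}$ weighted by the bounded kernel factors. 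This step is the main technical obstacle since it combines kernel smoothing, expert weighting, and uniform-in-$t$ control.

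Given the joint functional CLT, I would apply the functional delta method to the Nelson--Aalen map
\[
\Phi:(H,H_{1})\longmapsto\Lambda(\cdot)=\int_{[0,\cdot]}\frac{1}{1-H(s-)}\,dH_{1}(s),
\]
which is Hadamard differentiable tangentially to cadlag functions on $[0,\theta]$ whenever $\theta<H^{-1}(1|\z)$, with derivative $d\Phi_{(H,H_{1})}(u,v)(t)=\int_{[0,t]}u(s-)(1-H(s-))^{-2}dH_{1}(s)+\int_{[0,t]}(1-H(s-))^{-1}dv(s)$. This yields $\sqrt{n|\B_{n}|}(\LLd(\cdot|\z)-\Lambda(\cdot|\z))\overset{D}{\to}\amsmathbb{L}(\cdot|\z):=d\Phi_{(H,H_{1})}(\mathbb{U},\mathbb{V})$ in $\ell^{\infty}([0,\theta])$. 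Using the identity $dH_{1}=(1-H^{-})d\Lambda$ (a consequence of entirely random conditional contaminated right-censoring) together with the covariance structure of $(\mathbb{U},\mathbb{V})$, the martingale-type cancellation familiar from classical Nelson--Aalen variance calculations collapses $\mathrm{Cov}(\amsmathbb{L}(t|\z),\amsmathbb{L}(s|\z))$ to the stated $\sigma^{2}_{\amsmathbb{L}}(t,s)$.

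Finally, the product integral map $\psi:\Lambda\mapsto\prod_{s\leq\cdot}(1-d\Lambda(s))$ is Hadamard differentiable on bounded non-decreasing cadlag functions, with derivative $d\psi_{\Lambda}(u)(t)=-(1-F(t))\int_{[0,t]}(1-\Delta\Lambda(s))^{-1}du(s)$ (Gill--Johansen--Duhamel). Since $\Fd=1-\psi(\LLd)$ and $F=1-\psi(\Lambda)$, a second application of the delta method combined with the chain rule gives
\[
\sqrt{n|\B_{n}|}\bigl(\Fd(\cdot|\z)-F(\cdot|\z)\bigr)\overset{D}{\to}\amsmathbb{Z}(\cdot|\z)=(1-F(\cdot|\z))\int_{[0,\cdot]}\frac{d\amsmathbb{L}(s|\z)}{1-\Delta\Lambda(s|\z)},
\]
from which the stated form of $\sigma^{2}_{\amsmathbb{Z}}$ follows by direct substitution. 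I expect the joint functional CLT in the first step to be the hard part; once that is in hand, the two delta-method applications are routine invocations of well-established Hadamard-differentiability results.
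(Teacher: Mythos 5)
Your proposal is correct and follows essentially the same route as the paper's proof: a joint functional CLT for $\sqrt{n|\B_{n}|}\bigl(\HH(\cdot|\z)-H(\cdot|\z),\,\HHd(\cdot|\z)-H_{1}(\cdot|\z)\bigr)$ (bias killed by the bandwidth assumption for $\HH$ and by condition~\eqref{weak_convergence_condition} for $\HHd$, fidis by a triangular-array Lindeberg CLT), followed by two applications of the functional delta method through the Hadamard-differentiable Nelson--Aalen and product-integral maps, with the covariances reduced to the classical Nelson--Aalen/Duhamel forms scaled by $g(\z)^{-1}\int K^{2}(\uu)\,\mathrm{d}\uu$. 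The only cosmetic difference is the tightness step, where the paper commits to the bracketing CLT for triangular arrays (Theorem 2.11.9 in van der Vaart and Wellner, with brackets built from the c\`adl\`ag monotone function $H(\cdot|\z)$) rather than the monotonicity or VC-entropy criteria you leave as alternatives.
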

\begin{corollary}[Weak convergence bias of the conditional expert Kaplan--Meier estimator]\label{bias_weakconvergence} Let Assumptions~\ref{weak_convergence_assumptions_2} hold and assume that conditional contaminated right-censoring is entirely random.
    Let $\gamma(\cdot;\z)$ be a bias function of the expert satisfying 
\begin{align}
\label{weak_convergence_errorterm}
\bigg \vert \sqrt{n|\boldsymbol{B}_{n}|}\left(\E\left[\md(t;\boldsymbol{z})\right]-\left(H_{1}(t|\z)g(\z)+\gamma(t;\z)\right)\right)\bigg \vert  \to 0 
\end{align}
for every $t\geq 0$. If $H_{1}(\cdot|\z)g(\z)+\gamma(\cdot;\z)$ is a non-decreasing and bounded càdlàg function it holds for $0\leq \theta < H^{-1}(1|\z)$ that
\begin{align*}
        &\sqrt{n|\boldsymbol{B}_{n}|}\left(\LLd(\cdot |\boldsymbol{z})-\left(\Lambda(\cdot |\boldsymbol{z})+\Gamma(\cdot;\z)\right)\right)\overset{\textnormal{D}}{\to}\amsmathbb{L}(\cdot|\boldsymbol{z}),
        \\
        &\sqrt{n|\boldsymbol{B}_{n}|}\left(\phi\left(-\LLd(\cdot |\boldsymbol{z})\right)-\phi\left(-\Lambda(\cdot|\z)-\Gamma(\cdot;\z)\right)\right)\overset{\textnormal{D}}{\to}\amsmathbb{Z}(\cdot |\boldsymbol{z})
    \end{align*}
    in $\ell^{\infty}([0,\theta])$. The convergence can be guaranteed in finite dimensional distributions without any càdlàg requirement.
\end{corollary}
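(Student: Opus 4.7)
The plan is to mirror the proof of Theorem~\ref{weakconvergence}, treating $H_1(\cdot|\z) + \gamma(\cdot;\z)/g(\z)$ as the shifted target being estimated by $\HHd(\cdot|\z)$. Condition~\eqref{weak_convergence_errorterm} is precisely condition~\eqref{weak_convergence_condition} with this shifted target in place of $H_1(\cdot|\z)$, so the deterministic bias in $\HHd$ is $o(1/\sqrt{n|\B_n|})$ while the stochastic part retains the central-limit behavior already used in Theorem~\ref{weakconvergence}. Consequently, $\sqrt{n|\B_n|}(\HHd(\cdot|\z) - H_1(\cdot|\z) - \gamma(\cdot;\z)/g(\z))$ converges weakly in $\ell^\infty([0,\theta])$ to a Gaussian process with covariance arising from the same kernel-weighted sums analyzed in Theorem~\ref{weakconvergence}, since the deterministic shift does not alter the covariance contribution of those sums.

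For the first convergence of the corollary, I would decompose
\[
\LLd(t|\z) - \Lambda(t|\z) - \Gamma(t;\z)
= \int_{[0,t]} \frac{\mathrm{d}\bigl(\HHd - H_1 - \gamma/g\bigr)(s|\z)}{1 - \HH(s-|\z)} + R_n(t;\z),
\]
where $R_n(t;\z) = \int_{[0,t]} \bigl(\tfrac{1}{1-\HH(s-|\z)} - \tfrac{1}{1-H(s-|\z)}\bigr)\,\mathrm{d}(H_1 + \gamma/g)(s|\z)$ collects the error from using the estimated denominator. Uniform consistency of $\HH(\cdot|\z)$ from Theorem~\ref{consistency_theorem}, combined with boundedness and monotonicity of $H_1 g + \gamma$ on $[0,\theta]$, forces $\sqrt{n|\B_n|} R_n$ to vanish uniformly on $[0,\theta]$; the leading term then converges in $\ell^\infty([0,\theta])$ to $\amsmathbb{L}(\cdot|\z)$ by continuous mapping, exactly as in Theorem~\ref{weakconvergence}.

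For the Kaplan--Meier statement I would invoke the functional delta method using the Hadamard differentiability of the product integral map $\phi$ on the domain of bounded non-decreasing càdlàg functions. The assumption that $H_1(\cdot|\z)g(\z) + \gamma(\cdot;\z)$ is non-decreasing, bounded and càdlàg places $\Lambda + \Gamma$ inside this domain, so linearization of $\phi$ at $-(\Lambda + \Gamma)$ converts the first convergence into the second, yielding $\amsmathbb{Z}(\cdot|\z)$. For finite-dimensional distributions, a direct linearization of $\phi$ at each finite collection of time points suffices: only differentiability of the hazard-to-survival map along the deterministic path $-(\Lambda + \Gamma)$ is needed, which bypasses the càdlàg requirement. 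The main obstacle is controlling $\sqrt{n|\B_n|} R_n$ uniformly on $[0,\theta]$ under the rate condition~\eqref{weak_convergence_errorterm} and the kernel bandwidth scaling in Assumptions~\ref{weak_convergence_assumptions_2}; once this is handled, the rest reduces cleanly to the scheme of Theorem~\ref{weakconvergence} applied with shifted centering.
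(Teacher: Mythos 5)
Your overall plan (recenter everything at $H_{1}(\cdot|\z)+\gamma(\cdot;\z)/g(\z)$ and rerun the machinery of Theorem~\ref{weakconvergence}) is indeed the paper's route, and your first paragraph is sound: condition~\eqref{weak_convergence_errorterm} plays exactly the role of~\eqref{weak_convergence_condition} for the shifted target, so Lemma~\ref{weak_convergence_lemma} goes through with biased centering. But your second paragraph contains a genuine error that breaks the proof. The remainder
\begin{align*}
R_{n}(t;\z)=\int_{[0,t]}\left(\frac{1}{1-\HH(s-|\z)}-\frac{1}{1-H(s-|\z)}\right)\mathrm{d}\left(H_{1}(\cdot|\z)+\gamma(\cdot;\z)/g(\z)\right)(s)
\end{align*}
is \emph{not} $o_{P}\bigl(1/\sqrt{n|\B_{n}|}\bigr)$. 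Uniform consistency of $\HH(\cdot|\z)$ only gives $\sup_{t}|\HH(t|\z)-H(t|\z)|\to 0$ and says nothing about the rate (moreover, Theorem~\ref{consistency_theorem} is proved under Assumptions~\ref{Strong_uniform_assumptions_2}, which are not assumed in the corollary). In fact $\sqrt{n|\B_{n}|}\left(\HH(\cdot|\z)-H(\cdot|\z)\right)$ converges weakly to the nondegenerate Gaussian process $\G(\cdot|\z)$ from the proof of Theorem~\ref{weakconvergence}, so $\sqrt{n|\B_{n}|}R_{n}$ converges weakly to $\int_{[0,\cdot]}\G(s-|\z)\Bar{H}(s|\z)^{-2}\,\mathrm{d}\left(H_{1}+\gamma/g\right)(s)$, which is a nondegenerate process, not zero. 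This term is precisely the contribution of the estimated denominator to the limit: in the paper's representation via Lemma 20.10 of~\cite{AS}, $\amsmathbb{L}(\cdot|\z)$ consists of the $\G_{1}$-part \emph{plus} the term $\int \G(s-|\z)\Bar{H}(s|\z)^{-2}\,\mathrm{d}H_{1}(s|\z)$, and the covariance $\sigma^{2}_{\amsmathbb{L}(\cdot|\z)}$ arises from the cancellation between these two correlated pieces. Dropping $R_{n}$ would leave only $\int_{[0,\cdot]}\Bar{H}(s|\z)^{-1}\mathrm{d}\G_{1}(s|\z)$, whose covariance does not equal $\sigma^{2}_{\amsmathbb{L}(\cdot|\z)}$, so your argument would deliver the wrong limit law.

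The repair, which is what the paper's proof does, is to establish \emph{joint} weak convergence of the pair $\sqrt{n|\B_{n}|}\left(\HH(\cdot|\z)-H(\cdot|\z),\ \HHd(\cdot|\z)-H_{1}(\cdot|\z)-\gamma(\cdot;\z)/g(\z)\right)$ in $\ell^{\infty}([0,\theta])^{2}$, rerunning the bracketing-CLT argument of Lemma~\ref{weak_convergence_lemma} with the biased centering, and then to apply the functional delta method jointly to the map $(A,B)\mapsto\int_{[0,\cdot]}(1-A(s-))^{-1}\mathrm{d}B(s)$ and to the product integral $\phi$, exactly as in Theorem~\ref{weakconvergence}; there is no way to avoid treating the two fluctuations together. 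A secondary correction: the final claim about fidis convergence without the càdlàg hypothesis has nothing to do with differentiability of $\phi$ along deterministic paths. The càdlàg/monotonicity assumption is used only (via Lemma~\ref{cadlag_bounded_as}, which applies to deterministic sequences as well) to upgrade the pointwise condition~\eqref{weak_convergence_errorterm} to uniform vanishing of the deterministic bias $\sqrt{n|\B_{n}|}\left(\E\left[\md(\cdot;\z)\right]-H_{1}(\cdot|\z)g(\z)-\gamma(\cdot;\z)\right)$; without it one only controls this bias pointwise, hence only finite-dimensional distributions converge.
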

\begin{example}[Example on biased judgments] 
Consider expert judgments as in Example~\ref{example_consistency}; that is, consider iid expert judgments $(\eta_{i})$  
\begin{align*}
    \eta_{i} \sim \delta_{i}\,\text{Bern}(\hat{\mathbb{p}}(W_{i},\boldsymbol{Z}_{i})),
\end{align*}
for $\hat{\mathbb{p}}(w,\z) = (1-p_{0}) + p_{0} p(w,\z)$. This again implies a bias function $\Gamma(\cdot;\z)$ given by 
\begin{align*}
    \Gamma(t;\z) = (1-p_{0}) \int_{[0,t]} \frac{1 - p(s,\z)}{1 - H(s-|\z)} \mathrm{d}H_{1}^{\times}(s|\z), \quad 0 \leq t < H^{-1}(1|\z),
\end{align*}
and the convergence holds in $\ell^{\infty}([0,\theta])$. For instance, this implies that 
\begin{align*}
    \amsmathbb{F}^{(n)}_{\dagger}(t|\z) \overset{\text{as}}{\sim} \mathcal{N}\left(\text{e}^{-\Gamma(t;\z)} F(t|\z), \sigma^{2}_{\amsmathbb{Z}(\cdot|\z)}(t,t)\right),
\end{align*}
under continuity assumptions on $\Lambda(\cdot|\z)$ and $\Gamma(\cdot;\z)$.
\end{example}
\section{Numerical studies}\label{section_numerical}
This section presents numerical examples using both simulated and real-world data. The simulated data set concerns disability insurance claims, while the real-world data set focuses on bank loan defaults. We consider the conditional expert Kaplan–Meier estimator under both perfect and imperfect expert judgments. The simulation illustrates the finite-sample behavior of the estimator and its relation to theoretical asymptotic properties.

The bank loan data cover the period from June 2007 to December 2011, encompassing the financial crisis. For each loan, the data include the issue date, the next payment date for active loans, the last payment date for fully repaid loans, and the default date for loans with repayment failures. Covariates include the loan's interest rate and the borrower's debt-to-income ratio, which are commonly used in credit risk modelling due to their relevance to installment size and repayment capacity. We apply the conditional expert Kaplan–Meier estimator using these covariates to estimate time-to-default. Expert judgments are applied selectively to regions of the two-dimensional covariate space to reflect scenarios in which practitioners have contamination knowledge limited to specific covariate levels.

In practical applications, the choice of bandwidth is critical. We present a procedure for selecting a finite-sample optimal bandwidth for the functional estimator.

\textbf{Bandwidth selection.}
Bandwidth selection is studied, for instance, in~\cite{KS} and~\cite{Hardle}, as well as the references therein. There are two general approaches: either to select a bandwidth that is asymptotically optimal according to some criterion, or to select a bandwidth that is optimal for finite samples. In the latter case, the bandwidth is typically determined through least squares cross-validation. For this study, we consider the finite sample case and a form of least squares cross-validation motivated as follows. The usual least squares cross-validation for $\amsmathbb{H}^{(n)}$, for a fixed $t>0$, is given by
\begin{align*}
     \underset{\B}{\text{argmin }} \text{CV}\left(\amsmathbb{H}^{(n)}(t|\cdot), \B\right) 
     = \underset{\B}{\text{argmin}} \left\{\frac{1}{n}\sum_{m=1}^{n}\left(1_{(W_{nm}\leq t)}-\amsmathbb{H}_{-m}^{(n)}(t|\boldsymbol{Z}_{nm};\B)\right)^{2}\right\},
\end{align*}
where $\amsmathbb{H}_{-m}^{(n)}(t|\boldsymbol{Z}_{nm};\B)$ is the leave-one-out estimator defined as
\begin{align*}
     \amsmathbb{H}_{-m}^{(n)}(t|\boldsymbol{Z}_{nm};\B) = \sum_{j=1, j\neq m}^{n} \frac{1_{(W_{nj}\leq t)} K_{\B}\left(\Z_{nj}-\boldsymbol{Z}_{nm}\right)}{\sum_{l=1, l\neq m}^{n} K_{\B}\left(\Z_{nl}-\boldsymbol{Z}_{nm}\right)}.
\end{align*}

When viewing the conditional expert Kaplan--Meier estimator as a functional estimator, it is natural to ask for a bandwidth $\B$ that accounts for all $t$ simultaneously for both $\amsmathbb{H}^{(n)}$ and $\amsmathbb{H}_{\dagger,1}^{(n)}$. To this end, we consider a least squares cross-validation of the form
\begin{align*}
     \underset{\B}{\text{argmin}} \int_{[0,t]} w(s) \cdot \bigg\vert \bigg\vert 
     \left\{\text{CV}(\amsmathbb{H}^{(n)}(s|\cdot),\B), \text{CV}(\amsmathbb{H}_{\dagger,1}^{(n)}(s|\cdot), \B)\right\} 
     \bigg\vert \bigg\vert^{2} \mathrm{d}s,
\end{align*}
where we allow for a time-dependent weighting function $w$. This approach is similar to that in~\cite{ReferenceCurves}. The calculations required for the leave-one-out estimator can be computationally intensive, but a neat numerical implementation is available. In particular, it holds that
\begin{align*}
     &\text{CV}\left(\amsmathbb{H}^{(n)}(t|\cdot),\B\right) \\
     &\quad = \frac{1}{n} \sum_{m=1}^{n} \left(1_{(W_{nm}\leq t)} - \amsmathbb{H}^{(n)}(t|\boldsymbol{Z}_{nm};\B)\right)^{2} 
     \left(\frac{1_{(W_{nm}\leq t)} - \amsmathbb{H}_{-m}^{(n)}(t|\boldsymbol{Z}_{nm};\B)}{1_{(W_{nm}\leq t)} - \amsmathbb{H}^{(n)}(t|\boldsymbol{Z}_{nm};\B)}\right)^{2},
\end{align*}
and that
\begin{align*}
     \frac{1_{(W_{nm}\leq t)} - \amsmathbb{H}^{(n)}(t|\boldsymbol{Z}_{nm};\B)}{1_{(W_{nm}\leq t)} - \amsmathbb{H}^{(n)}_{-m}(t|\boldsymbol{Z}_{nm};\B)}
     = 1 - \frac{K_{\B}\left(\boldsymbol{0}\right)}{\sum_{j=1}^{n} K_{\B}\left(\boldsymbol{Z}_{nj} - \boldsymbol{Z}_{nm}\right)}.
\end{align*}

The $n(n-1)/2+1$ unique expressions of $K_{\B}(\Z_{nj}-\Z_{nm})$ can be stored and reused efficiently at each step of an optimization routine.

\textbf{Simulated data set.} We consider an insurance portfolio of $n=\text{10,000}$ disability covers, for which we are interested in estimating the disability rates. We model the disability intensity as dependent on the insured's age and use this as a covariate for estimation. A complicated form of contamination is simulated, where the baseline of contamination intensity depends on a number of reportings. We interpret reportings broadly, for instance as an internal reporting number set by the insurance company indicating that a policyholder may be prone to contamination. The censoring variables $(C_{i})$ are simulated as iid $\text{unif}([0,50])$ variables, and the iid covariates $(\boldsymbol{Z}_{i})=(Z_{\text{age},i},Z_{\text{reportings},i})$ are simulated as
\begin{align*}
    &Z_{\text{age},i}\sim \mathcal{N}\left(\mu=50, \sigma^{2}=100\right), \quad 
    Z_{\text{reportings},i}\sim \text{Pois}\left(\lambda = 0.3\right).
\end{align*}
Given the covariate samples, we simulate disability events $(X_{i}|Z_{\text{age},i})$ and contamination events $(Y_{i}|Z_{\text{reportings},i})$ according to the conditional intensities 
\begin{align*}
    \mu_{\text{disability}}(t|z_{\text{age}})&=0.01\cdot \exp\left(0.02\cdot (t+z_{\text{age}})\right),\\
    \mu_{\text{contamination}}(t|z_{\text{reportings}})&=0.005 + 0.02 \cdot z_{\text{reportings}}.
\end{align*}
Consequently, the samples on disability, contamination, and censoring are simulated as conditionally mutually independent. Note that the disability intensity is increasing in $z_{\text{age}}\in\amsmathbb{R}$, and the contamination intensity is increasing in $z_{\text{reportings}}\in\amsmathbb{N}_{0}$. 

We consider the following Gaussian kernel with compact support:
\begin{align*}
     K(z) &= 1_{[-2,2]}(z) \frac{\phi(z)}{\Phi(2)-\Phi(-2)},
\end{align*}
and for the first study, we consider an expert with 100\% effectiveness providing iid samples
\begin{align*}
\eta^{(1)}_{i} \sim \delta_{i}\, \text{Bern}(p(W_{i},\boldsymbol{Z}_{i})), 
\end{align*}
where $p$ mirrors that of Remark~\ref{bandwidth_remark_consistency}. In this setup, $p$ is given as the following fraction of densities:
\begin{align*}
     p(w,\boldsymbol{z}) = \frac{f_{\text{disability}}(w|z_{\text{age}})}{f_{\text{disability}}(w|z_{\text{age}}) + f_{\text{contamination}}(w|z_{\text{reportings}})}.
\end{align*}
We examine how the perfect expert estimator fits the true survival function of disability in comparison with the naïve conditional Kaplan--Meier estimator in the heat plot in Figure~\ref{heatplot_simulated1}.
\begin{figure}[h]
    \centering
    \captionsetup{width=12cm}
    \includegraphics[width=4.5cm]{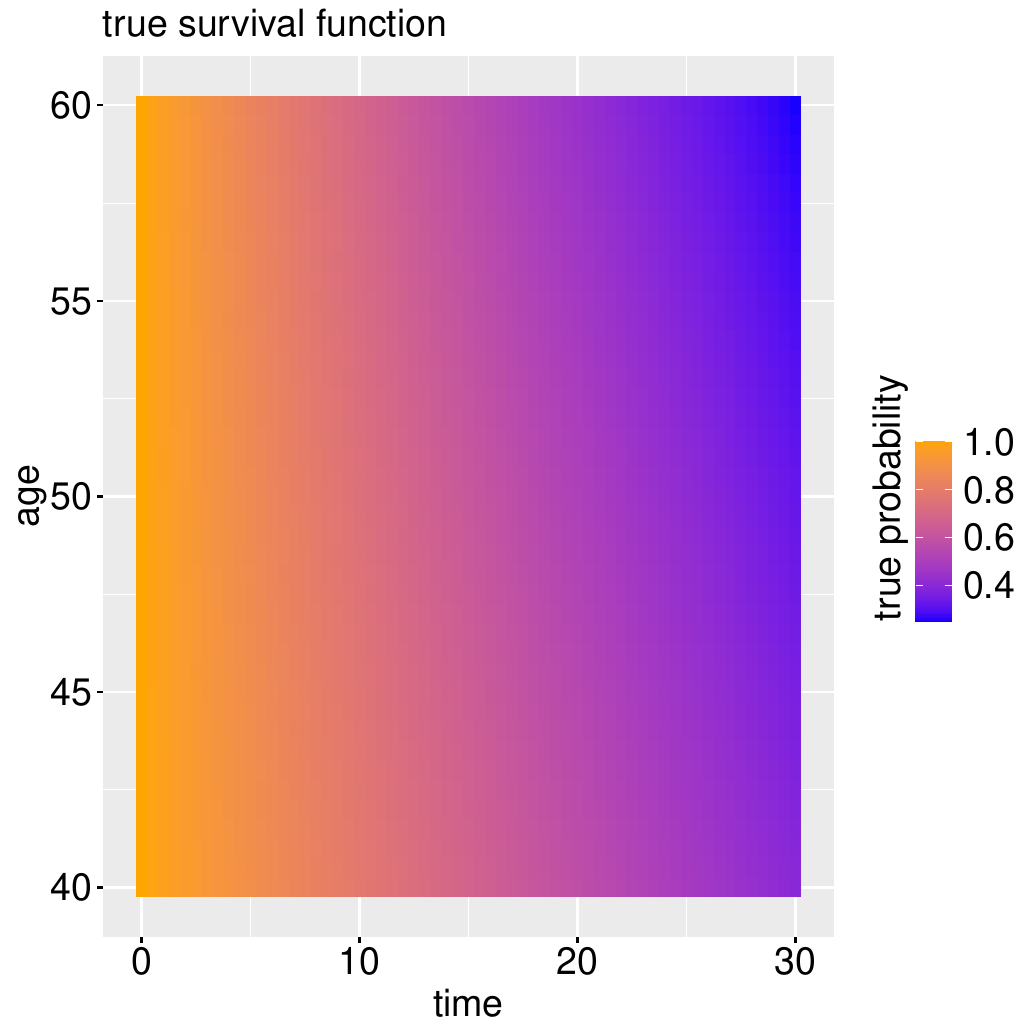}
    \includegraphics[width=4.5cm]{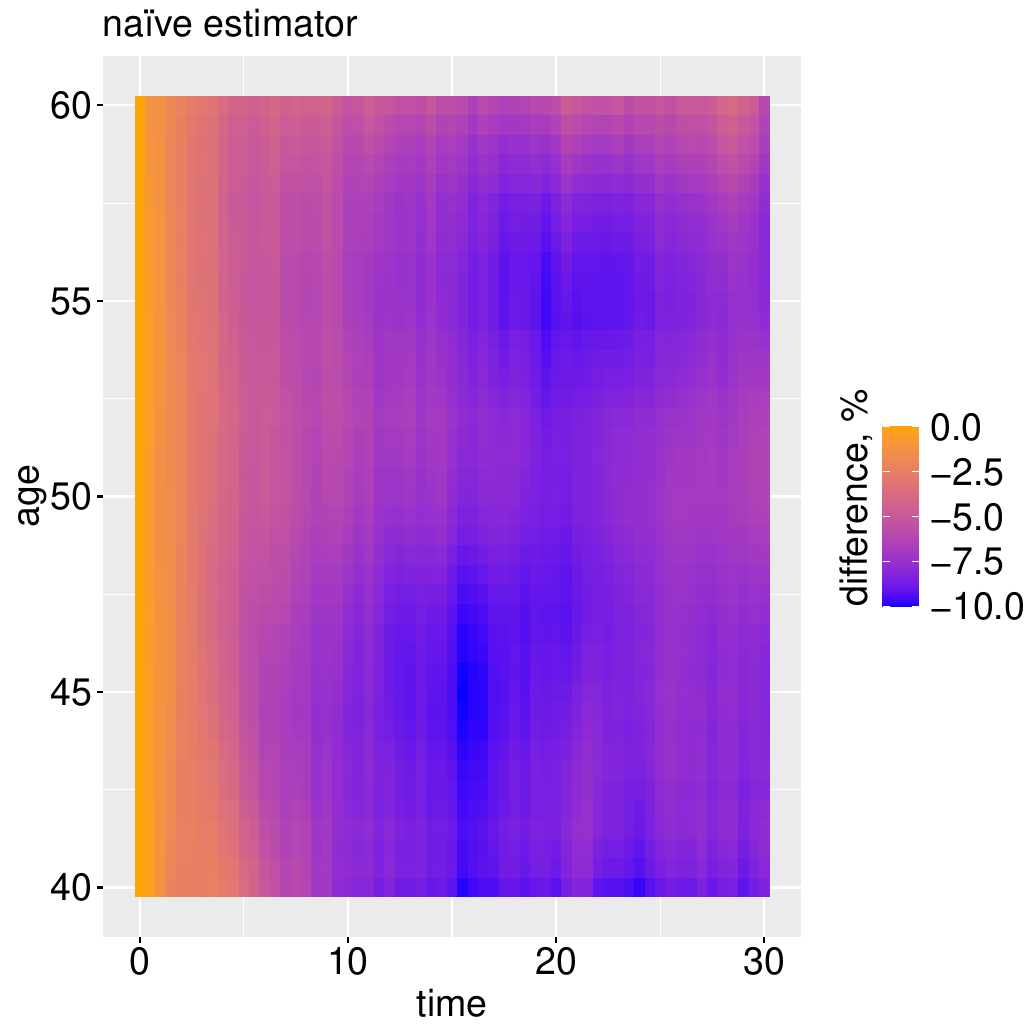}
    \includegraphics[width=4.5cm]{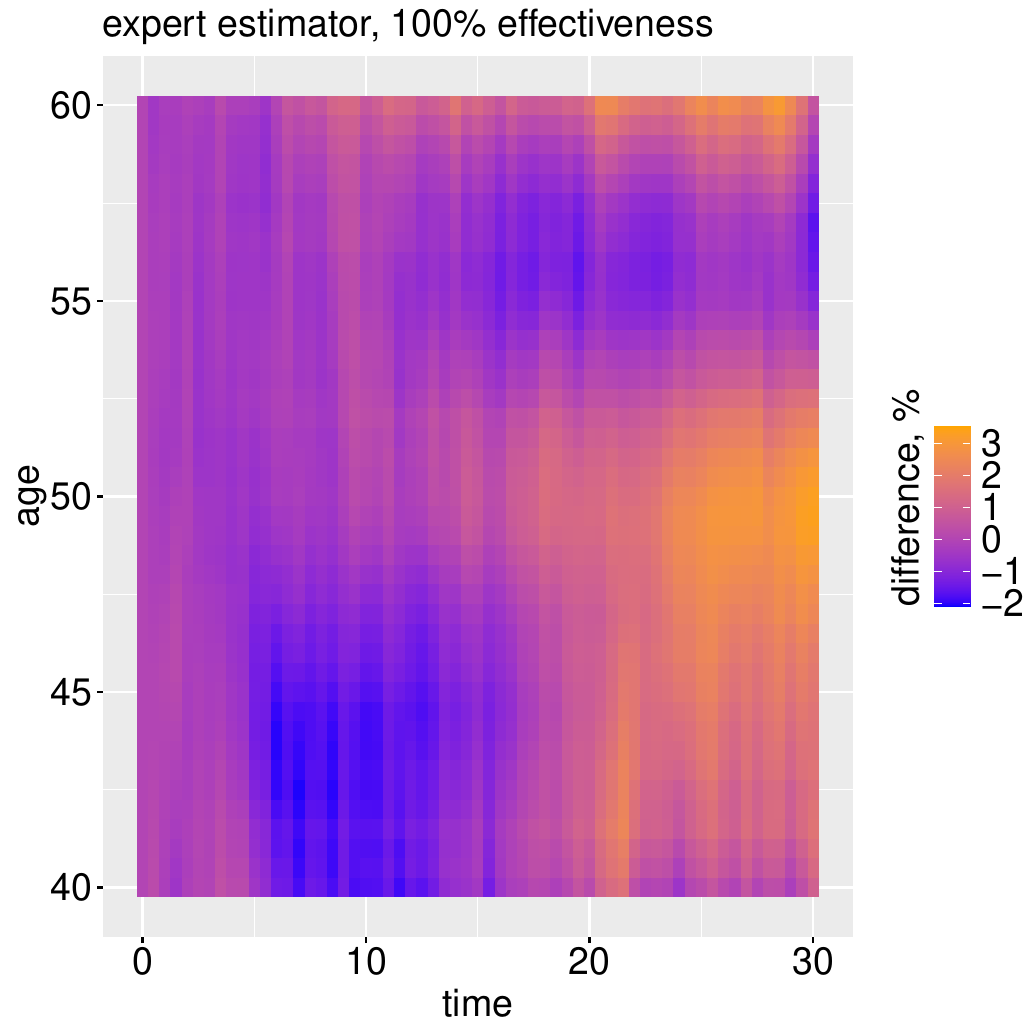}
    \caption{\textit{The true survival function $1-F$ (left); and the differences of the naïve estimator and  expert estimator of 100\% effectiveness (mid and right, respectively) to $1-F$ for $t\in[0,30]$ and $z_{\text{age}}\in[40,60]$.}}
    \label{heatplot_simulated1}
\end{figure}

Incorporating an expert with full information of the contamination mechanism in the conditional Kaplan--Meier estimator yields a better fit compared with the naïve estimator, as expected. This improvement is especially pronounced for longer times. The naïve estimator never overestimates the true survival function, which is unreasonable and indicates its bias. In contrast, the expert estimator both overestimates and underestimates the true survival function over an overall smaller interval than the naïve estimator, without any particular pattern.

To examine weak convergence, we simulate 300 data sets of 10,000 observations each according to the above specifications. For this study, we also introduce another expert with only 85\% effectiveness:
\begin{align*}
     \eta^{(0.85)}_{i} \sim \delta_{i} \,\text{Bern}\left((1-0.85) + 0.85 \cdot p(W_{i},\boldsymbol{Z}_{i})\right).
\end{align*}
According to Corollary~\ref{bias_weakconvergence}, it is anticipated that this estimator is normally distributed around its error function determined by $\eta^{(0.85)}$. We focus on the cases $z_{\text{age}} = 45, 50, 55$ for $t = 2, 5, 10$ to quantify, for each of these times, the convergence to the true mean and standard deviations of the underlying Gaussian processes for both the expert with 85\% effectiveness and the expert with 100\% effectiveness.

\begin{figure}[h]

    \centering
\includegraphics[,width=9cm]{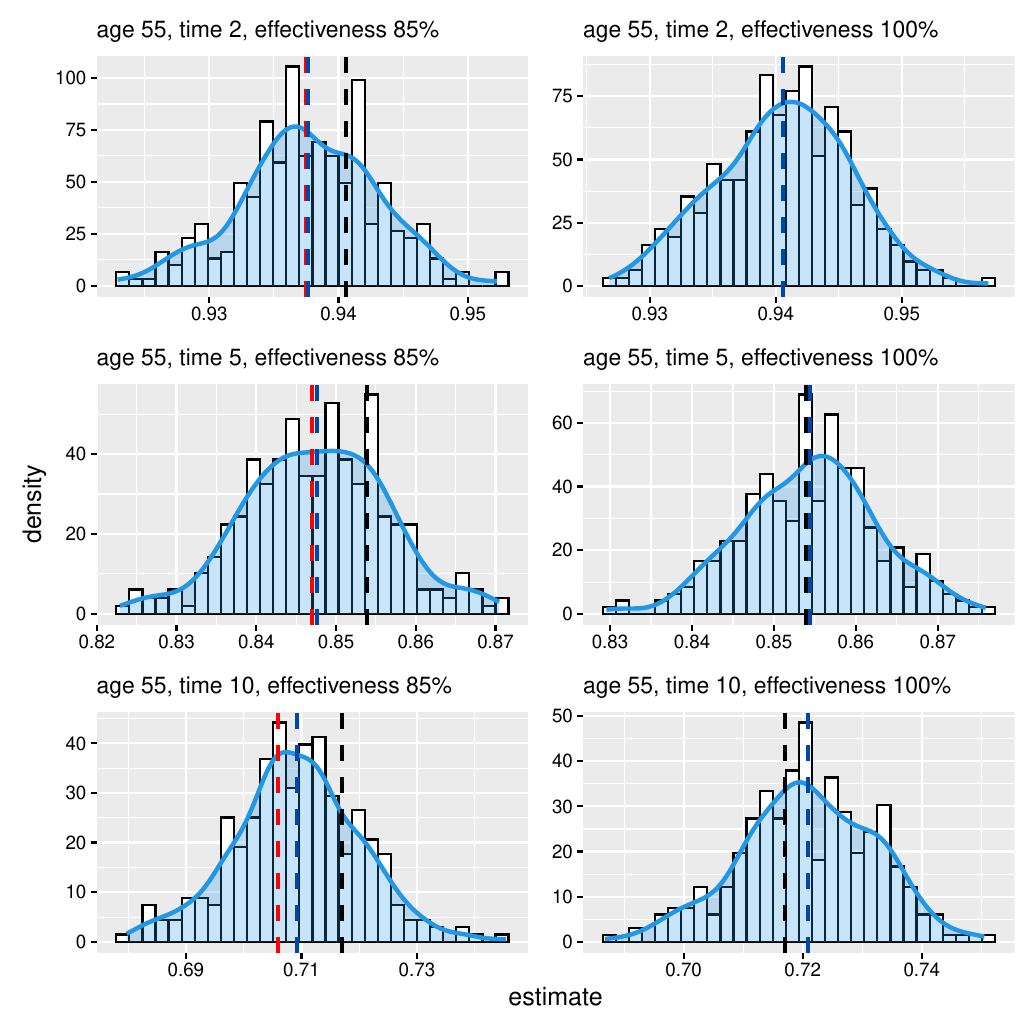} \\
      \scriptsize \trueunbiased \text{ } true unbiased mean, \truebiased \text{ }true biased mean, \estimate \text{ }estimated mean
      \captionsetup{width=12cm}
      \caption{\textit{The simulated survival distributions of the expert estimators of 85\% (left panel) and 100\% effectiveness (right panel) for $z_{\text{age}}=55$ and $t=2,5,10$.}}
          \label{simulated_normality_plot}
\end{figure}
\begin{table}
\captionsetup{width=12cm}
\caption{The simulated survival means, standard deviations and errors for the expert estimators of 85\% and 100\% effectiveness for $z_{\text{age}}=45,50,55$ and $t=2,5,10$.}
\resizebox{12cm}{!}{
\begin{tabular}{cclllllll}
\multicolumn{1}{l}{age}                   & \multicolumn{1}{l}{effect.}           & time                    & true (biased) mean          & simul. mean              & err. (mean)                 & true std. dev.               & simul. sd         & err. (sd)            \\ \hline
\multicolumn{1}{|c|}{\multirow{6}{*}{45}} & \multicolumn{1}{c|}{\multirow{3}{*}{85\%}}  & \multicolumn{1}{l|}{2}  & \multicolumn{1}{l|}{0.9480} & \multicolumn{1}{l|}{0.9483} & \multicolumn{1}{l|}{0.03\%}  & \multicolumn{1}{l|}{0.0065}  & \multicolumn{1}{l|}{0.0044} & \multicolumn{1}{l|}{-0.21\%} \\ \cline{3-9} 
\multicolumn{1}{|c|}{}                    & \multicolumn{1}{c|}{}                       & \multicolumn{1}{l|}{5}  & \multicolumn{1}{l|}{0.8716} & \multicolumn{1}{l|}{0.8718} & \multicolumn{1}{l|}{0.02\%}  & \multicolumn{1}{l|}{0.0101}  & \multicolumn{1}{l|}{0.0059} & \multicolumn{1}{l|}{-0.42\%} \\ \cline{3-9} 
\multicolumn{1}{|c|}{}                    & \multicolumn{1}{c|}{}                       & \multicolumn{1}{l|}{10} & \multicolumn{1}{l|}{0.7498} & \multicolumn{1}{l|}{0.7510} & \multicolumn{1}{l|}{0.12\%}  & \multicolumn{1}{l|}{0.0138} & \multicolumn{1}{l|}{0.0085} & \multicolumn{1}{l|}{-0.53\%} \\ \cline{2-9} 
\multicolumn{1}{|c|}{}                    & \multicolumn{1}{c|}{\multirow{3}{*}{100\%}} & \multicolumn{1}{l|}{2}  & \multicolumn{1}{l|}{0.9510} & \multicolumn{1}{l|}{0.9512} & \multicolumn{1}{l|}{0.02\%}  & \multicolumn{1}{l|}{0.0065}  & \multicolumn{1}{l|}{0.0043} & \multicolumn{1}{l|}{-0.22\%} \\ \cline{3-9} 
\multicolumn{1}{|c|}{}                    & \multicolumn{1}{c|}{}                       & \multicolumn{1}{l|}{5}  & \multicolumn{1}{l|}{0.8787} & \multicolumn{1}{l|}{0.8784} & \multicolumn{1}{l|}{-0.03\%} & \multicolumn{1}{l|}{0.0101}  & \multicolumn{1}{l|}{0.0059} & \multicolumn{1}{l|}{-0.42\%} \\ \cline{3-9} 
\multicolumn{1}{|c|}{}                    & \multicolumn{1}{c|}{}                       & \multicolumn{1}{l|}{10} & \multicolumn{1}{l|}{0.7616} & \multicolumn{1}{l|}{0.7630} & \multicolumn{1}{l|}{0.14\%}  & \multicolumn{1}{l|}{0.0065}  & \multicolumn{1}{l|}{0.0082} & \multicolumn{1}{l|}{-0.17\%} \\ \hline
\multicolumn{1}{|l|}{\multirow{6}{*}{50}} & \multicolumn{1}{c|}{\multirow{3}{*}{85\%}}  & \multicolumn{1}{l|}{2}  & \multicolumn{1}{l|}{0.9430} & \multicolumn{1}{l|}{0.9435} & \multicolumn{1}{l|}{0.05\%}  & \multicolumn{1}{l|}{0.0064}  & \multicolumn{1}{l|}{0.0043} & \multicolumn{1}{l|}{-0.21\%} \\ \cline{3-9} 
\multicolumn{1}{|l|}{}                    & \multicolumn{1}{c|}{}                       & \multicolumn{1}{l|}{5}  & \multicolumn{1}{l|}{0.8599} & \multicolumn{1}{l|}{0.8609} & \multicolumn{1}{l|}{0.10\%}  & \multicolumn{1}{l|}{0.0098}  & \multicolumn{1}{l|}{0.0066} & \multicolumn{1}{l|}{-0.32\%} \\ \cline{3-9} 
\multicolumn{1}{|l|}{}                    & \multicolumn{1}{c|}{}                       & \multicolumn{1}{l|}{10} & \multicolumn{1}{l|}{0.7284} & \multicolumn{1}{l|}{0.7315} & \multicolumn{1}{l|}{0.31\%}  & \multicolumn{1}{l|}{0.0133}  & \multicolumn{1}{l|}{0.0089} & \multicolumn{1}{l|}{-0.44\%} \\ \cline{2-9} 
\multicolumn{1}{|l|}{}                    & \multicolumn{1}{c|}{\multirow{3}{*}{100\%}} & \multicolumn{1}{l|}{2}  & \multicolumn{1}{l|}{0.9460} & \multicolumn{1}{l|}{0.9464} & \multicolumn{1}{l|}{0.04\%}  & \multicolumn{1}{l|}{0.0064}  & \multicolumn{1}{l|}{0.0041} & \multicolumn{1}{l|}{-0.23\%} \\ \cline{3-9} 
\multicolumn{1}{|l|}{}                    & \multicolumn{1}{c|}{}                       & \multicolumn{1}{l|}{5}  & \multicolumn{1}{l|}{0.8668} & \multicolumn{1}{l|}{0.8679} & \multicolumn{1}{l|}{0.11\%}  & \multicolumn{1}{l|}{0.0098}  & \multicolumn{1}{l|}{0.0062} & \multicolumn{1}{l|}{-0.36\%} \\ \cline{3-9} 
\multicolumn{1}{|l|}{}                    & \multicolumn{1}{c|}{}                       & \multicolumn{1}{l|}{10} & \multicolumn{1}{l|}{0.7401} & \multicolumn{1}{l|}{0.7438} & \multicolumn{1}{l|}{0.37\%}  & \multicolumn{1}{l|}{0.0133}  & \multicolumn{1}{l|}{0.0085} & \multicolumn{1}{l|}{-0.48\%} \\ \hline
\multicolumn{1}{|c|}{\multirow{6}{*}{55}} & \multicolumn{1}{c|}{\multirow{3}{*}{85\%}}  & \multicolumn{1}{l|}{2}  & \multicolumn{1}{l|}{0.9375} & \multicolumn{1}{l|}{0.9377} & \multicolumn{1}{l|}{0.02\%}  & \multicolumn{1}{l|}{0.0071}  & \multicolumn{1}{l|}{0.0053} & \multicolumn{1}{l|}{-0.18\%} \\ \cline{3-9} 
\multicolumn{1}{|c|}{}                    & \multicolumn{1}{c|}{}                       & \multicolumn{1}{l|}{5}  & \multicolumn{1}{l|}{0.8470} & \multicolumn{1}{l|}{0.8476} & \multicolumn{1}{l|}{0.06\%}  & \multicolumn{1}{l|}{0.0109}  & \multicolumn{1}{l|}{0.0089} & \multicolumn{1}{l|}{-0.20\%} \\ \cline{3-9} 
\multicolumn{1}{|c|}{}                    & \multicolumn{1}{c|}{}                       & \multicolumn{1}{l|}{10} & \multicolumn{1}{l|}{0.7059} & \multicolumn{1}{l|}{0.7092} & \multicolumn{1}{l|}{0.33\%} & \multicolumn{1}{l|}{0.0146}  & \multicolumn{1}{l|}{0.0109} & \multicolumn{1}{l|}{-0.37\%} \\ \cline{2-9} 
\multicolumn{1}{|c|}{}                    & \multicolumn{1}{c|}{\multirow{3}{*}{100\%}} & \multicolumn{1}{l|}{2}  & \multicolumn{1}{l|}{0.9405} & \multicolumn{1}{l|}{0.9406} & \multicolumn{1}{l|}{0.01\%}  & \multicolumn{1}{l|}{0.0071}  & \multicolumn{1}{l|}{0.0053} & \multicolumn{1}{l|}{-0.18\%} \\ \cline{3-9} 
\multicolumn{1}{|c|}{}                    & \multicolumn{1}{c|}{}                       & \multicolumn{1}{l|}{5}  & \multicolumn{1}{l|}{0.8539} & \multicolumn{1}{l|}{0.8544} & \multicolumn{1}{l|}{0.05\%}  & \multicolumn{1}{l|}{0.0109}  & \multicolumn{1}{l|}{0.0082} & \multicolumn{1}{l|}{-0.27\%} \\ \cline{3-9} 
\multicolumn{1}{|c|}{}                    & \multicolumn{1}{c|}{}                       & \multicolumn{1}{l|}{10} & \multicolumn{1}{l|}{0.7171} & \multicolumn{1}{l|}{0.7208} & \multicolumn{1}{l|}{0.37\%}  & \multicolumn{1}{l|}{0.0146}  & \multicolumn{1}{l|}{0.0113} & \multicolumn{1}{l|}{-0.33\%} \\ \hline
\end{tabular}}
\label{gaussian_table}
\end{table}

Table \ref{gaussian_table} reports the mean values and standard deviations of the simulated survival estimators. The results show that estimates for times $t=2$ and $t=5$ are closer to their true values than those at $t=10$, with similar performance across the two estimators. The relative error in standard deviations is larger than that for the means, likely reflecting the finite-sample variance dependence on the chosen bandwidth, whereas the mean is less sensitive due to symmetry in the estimates. Figure~\ref{simulated_normality_plot} presents the estimates for $z_{\text{age}} = 55$ as empirical densities on a probability scale. The figure illustrates how expert $\eta^{(1)}$ shifts the estimated survival curve closer to the true underlying curve compared to expert $\eta^{(0.85)}$.
\newpage
\textbf{Real world data set.} The bank loan default data contains two continuous covariates: interest rate (IR) and debt-to-income (DtI) as described above. We fit expert conditional Kaplan--Meier estimators using a two-dimensional regression on these covariates and model interest rates in the range $6\%$ to $12\%$ and debt-to-income ratios in the range $8\%$ to $20\%$. The data set consists of 10,130 rows. Figure~\ref{defaultpercentage_numberofobservations} shows that the data set contains a relatively high proportion of defaults, with default rates increasing for higher debt-to-income ratios and higher interest rates.
\begin{figure}[h]
\centering
\begin{tabular}{lrr}
                                 & \multicolumn{1}{c}{IR 6-9\%}  & \multicolumn{1}{c}{IR 9-12\%} \\ \cline{2-3} 
\multicolumn{1}{l|}{DtI 8-14\%}  & \multicolumn{1}{c|}{6.1\% \textbackslash \text{ }2,313}   & \multicolumn{1}{c|}{11.5\% \textbackslash \text{ }3,022}    \\ \cline{2-3} 
\multicolumn{1}{c|}{DtI 14-20\%} & \multicolumn{1}{c|}{7.0\% \textbackslash\text{ }1,832}  & \multicolumn{1}{c|}{13.5\% \textbackslash\text{ }2,963}   \\ \cline{2-3}
                                 & \multicolumn{2}{c}{(default \% \textbackslash\text{ no. of observations})}                    
\end{tabular}

\captionsetup{width=0.85\textwidth}
\caption{\textit{Default percentage and number of observations across different covariate intervals for the loan default data.}}
\label{defaultpercentage_numberofobservations}
\end{figure}
\\
Boxplots in Figure~\ref{boxplots_defaultlength} show that, for defaulted loans, the number of months until default is similar across the data set. We examine how different expert information can be used to alter the estimation of survival probabilities across the two-dimensional covariate plane.  

We define two experts: one that assumes 5\% of the observations are contaminated uniformly across the plane, and another that assumes 5\% contamination only for observations with interest rates below $10\%$. Specifically, for iid $B_{i}\sim \text{Bern}(0.95)$, the experts are defined as
\begin{align*}
    \eta^{(\text{uniform})}_{i} = \delta_{i} \cdot B_{i},\quad
    \eta^{(\text{specific})}_{i} = \delta_{i} \cdot \left\{ 1_{(z_{\text{IR},i}>10\%)} + B_{i} \cdot 1_{(z_{\text{IR},i}\leq 10\%)}\right\}.
\end{align*}
A feature of interest for the specific expert is its ability to capture the shift in belief at $z_{\text{IR}}=10\%$ in its estimates. The specific expert estimator is expected to interpolate between the uniform expert estimator and the naïve estimator in a neighborhood around this threshold, depending on the optimal bandwidth selected via functional least-squares cross-validation.

\begin{figure}[h]
    \centering
    \includegraphics[width=11cm]{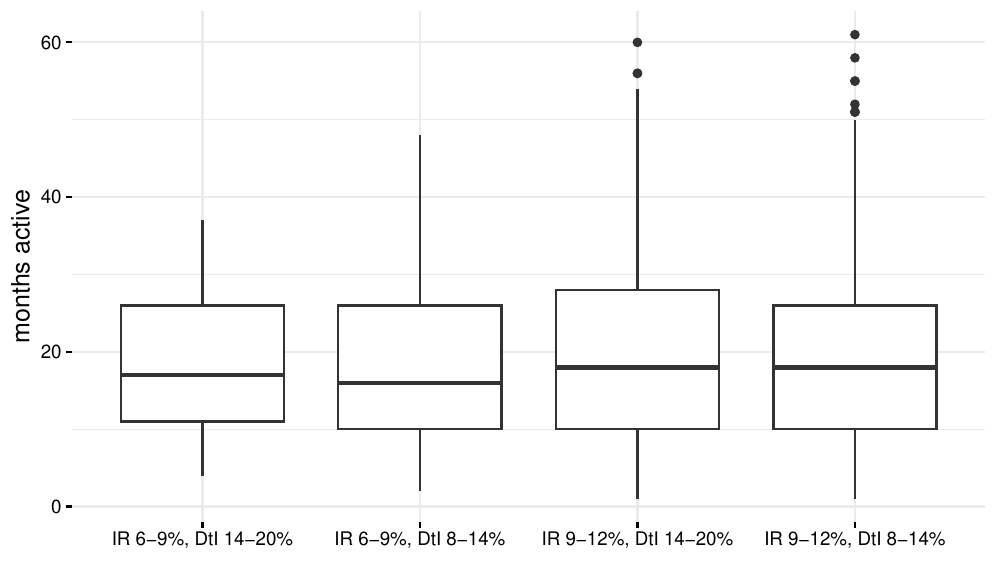}
    \captionsetup{width=12cm}
    \caption{\textit{Boxplots of months active for defaulted loans across different covariate levels for the loan default data.}}
    \label{boxplots_defaultlength}
\end{figure}

For the study, we choose a diagonal bandwidth matrix and a two-dimensional truncated Gaussian kernel with independent entries:
\begin{align*}
    \boldsymbol{B} = \begin{bmatrix}
        b_{1} & 0 \\
        0 & b_{2}
    \end{bmatrix},\quad
    K(z_{1},z_{2}) = \begin{bmatrix}
         1_{[-2,2]}(z_{1}) \frac{\phi(z_{1})}{\Phi(2)-\Phi(-2)} \\
         1_{[-2,2]}(z_{2}) \frac{\phi(z_{2})}{\Phi(2)-\Phi(-2)}
    \end{bmatrix}.
\end{align*}
Diagonal bandwidths are practical, as optimization over general bandwidths is computationally intensive. A general bandwidth choice can also be interpreted as standardizing the covariate data so that its covariance matrix is the identity.  

Applying these expert judgments to the data, the uniform expert lowers the default percentage across the entire data set, whereas the specific expert reduces the default percentage only for observations with interest rates below $10\%$, see Figure \ref{defaultpercentage_numberofobservations_experts}.

\begin{figure}[h]
\centering
\begin{tabular}{lrr}
                                 & \multicolumn{1}{c}{IR 6-9\%}  & \multicolumn{1}{c}{IR 9-12\%} \\ \cline{2-3} 
\multicolumn{1}{l|}{DtI 8-14\%}  & \multicolumn{1}{c|}{5.8\% \textbackslash\text{ }5.8\%}   & \multicolumn{1}{c|}{10.9\% \textbackslash\text{ }11.4\%}    \\ \cline{2-3} 
\multicolumn{1}{c|}{DtI 14-20\%} & \multicolumn{1}{c|}{6.7\% \textbackslash\text{ }6.7\%}  & \multicolumn{1}{c|}{12.7\% \textbackslash\text{ }13.3\%}   \\ \cline{2-3}
                                 & \multicolumn{2}{c}{(uniform expert \textbackslash\text{ specific expert})}                    
\end{tabular}
\captionsetup{width=0.85\textwidth}
\caption{\textit{Default percentages for expert-defined contamination scenarios on the loan default data.}}
\label{defaultpercentage_numberofobservations_experts}
\end{figure}

To measure how the experts modify the survival estimates compared to the naïve conditional Kaplan--Meier estimator $\amsmathbb{F}^{(n)}$, we compute the integral difference across the covariate plane (as presented in Figure~\ref{integral_differences_defaultdata}):
\begin{align*}
    \z = (z_{\text{DtI}}, z_{\text{IR}}) \mapsto \int_{[0,50]} (1-\amsmathbb{F}_{\dagger}^{(n)}(t|\z)) - (1-\amsmathbb{F}^{(n)}(t|\z)) \, \text{d}t.
\end{align*}

\begin{figure}[h]
    \centering
    \captionsetup{width=12cm}
    \includegraphics[width=6.5cm, height=6.5cm]{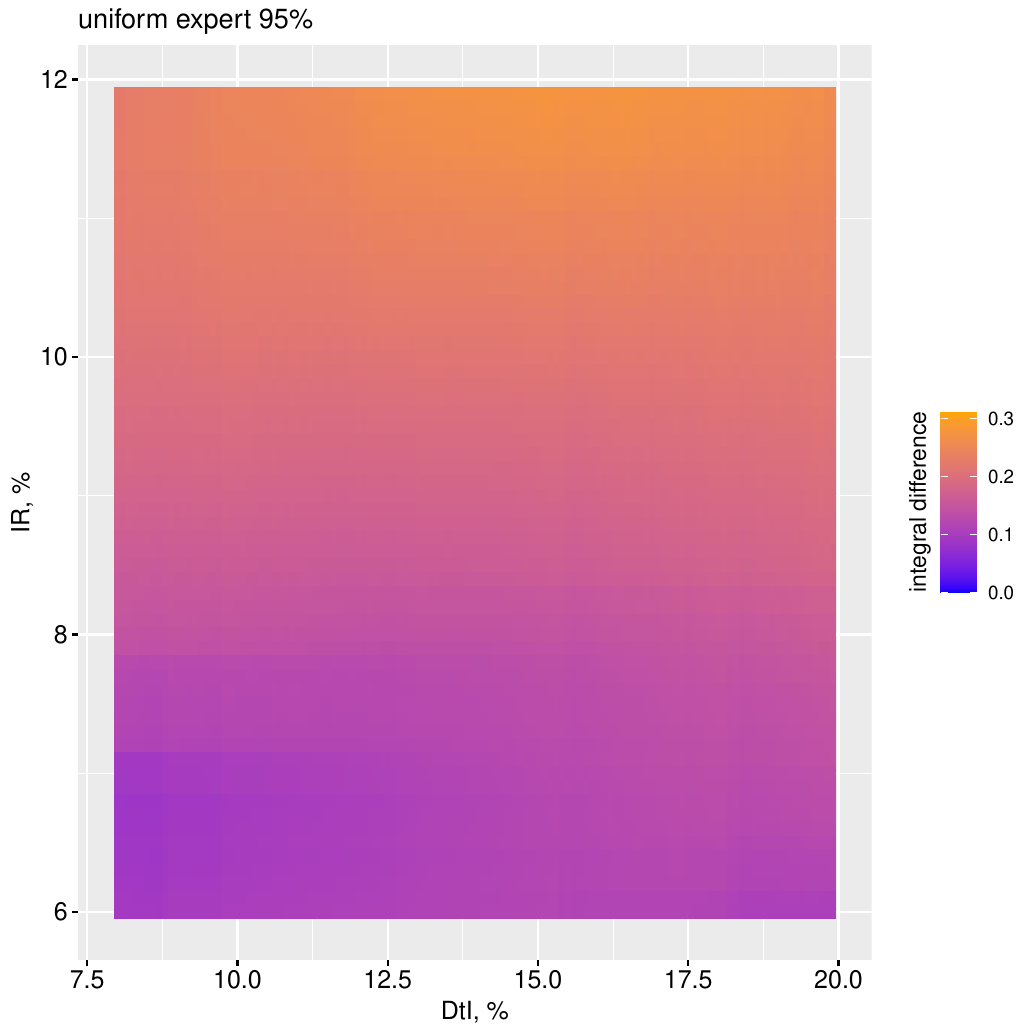}
    \includegraphics[width=6.5cm, height=6.5cm]{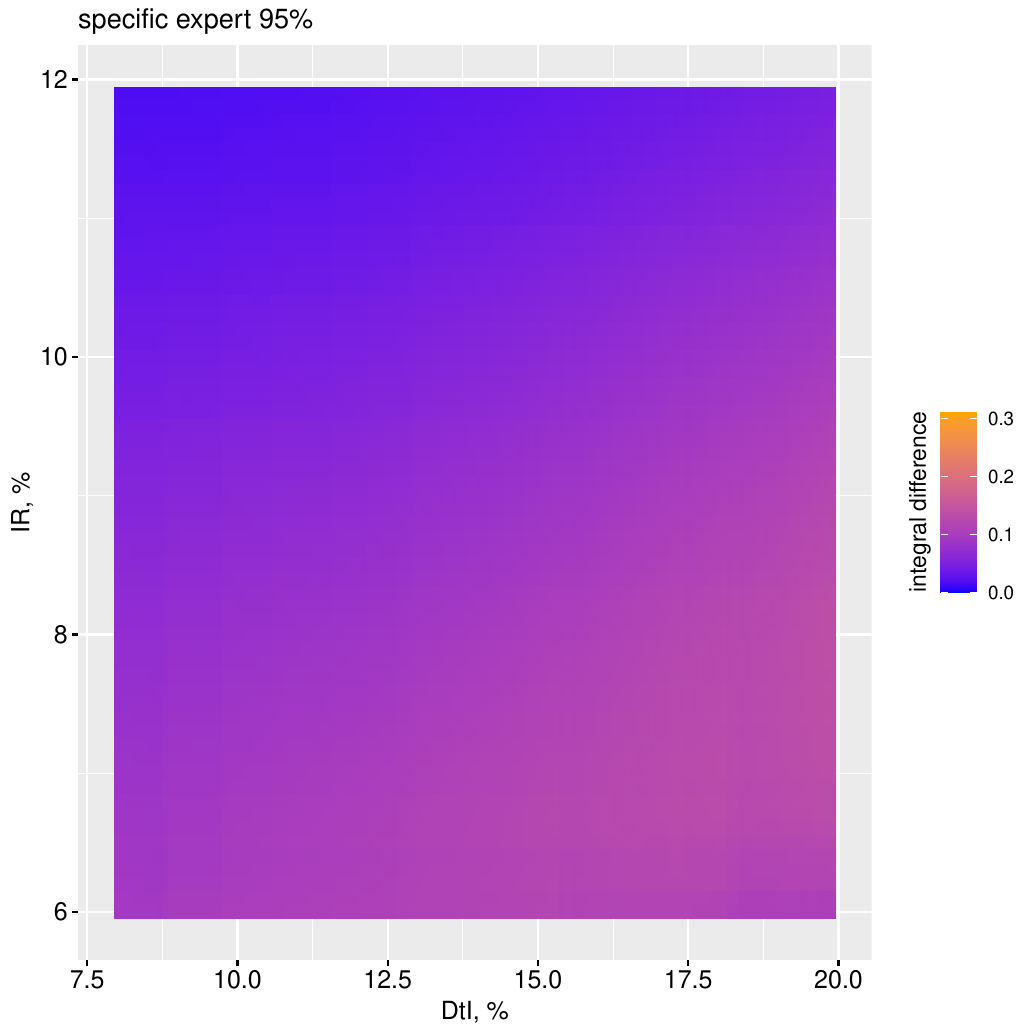}
    \caption{\textit{Integral differences of survival estimates between the expert estimators, uniform expert (left) and specific expert (right), and the naïve estimator on the covariate plane.}}
    \label{integral_differences_defaultdata}
\end{figure}

For interest rates above $10\%$, the specific expert estimator closely matches the naïve estimator, while for interest rates around and below $8\%$, it aligns with the uniform expert estimator. Figure~\ref{interestrate10_defaultdata} shows that at $z_{\text{IR}}=10\%$, the specific expert estimator is closer to the naïve estimator than the uniform expert estimator, though not exactly equal. In Figure~\ref{survivalplots_defaultdata}, this difference is negligible for $z_{\text{IR}}=11\%$.

\begin{figure}[h]
    \centering
    \includegraphics[width=4.5cm]{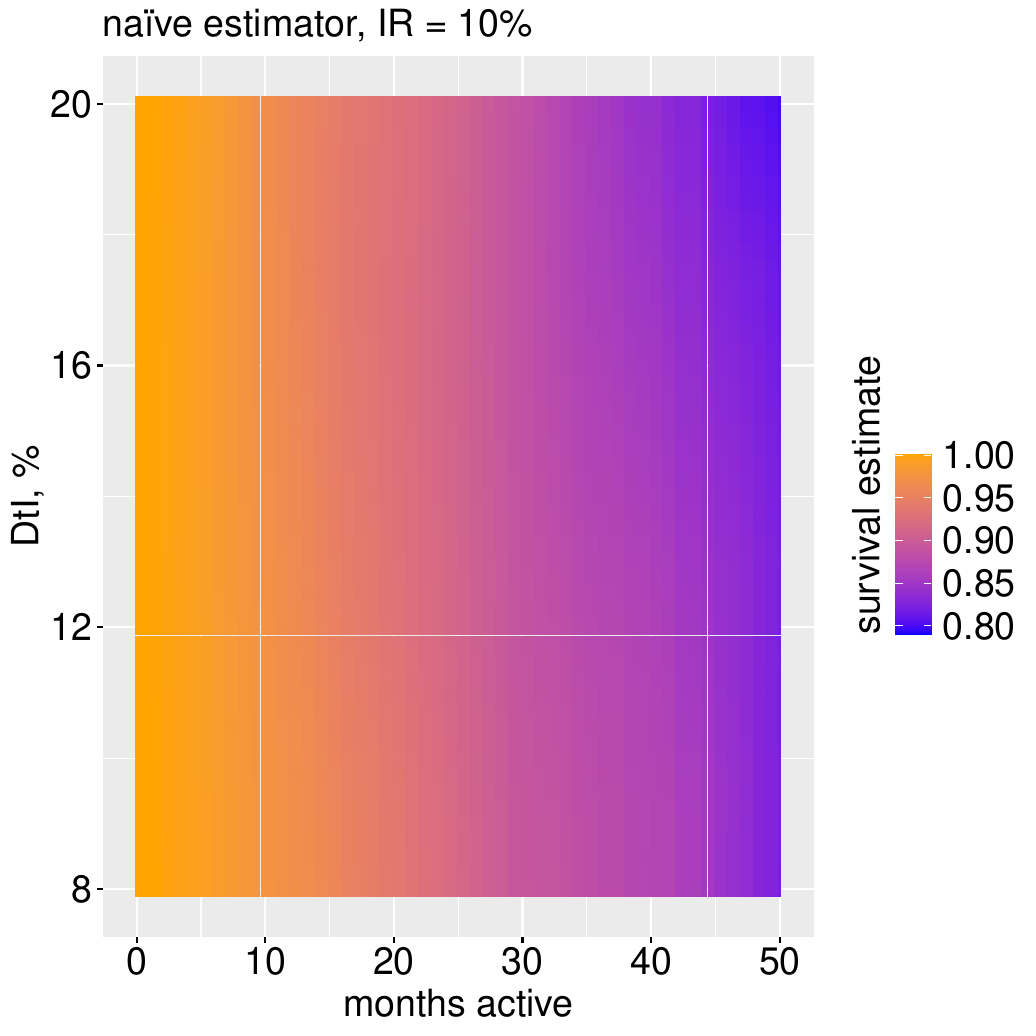}
    \includegraphics[width=4.5cm]{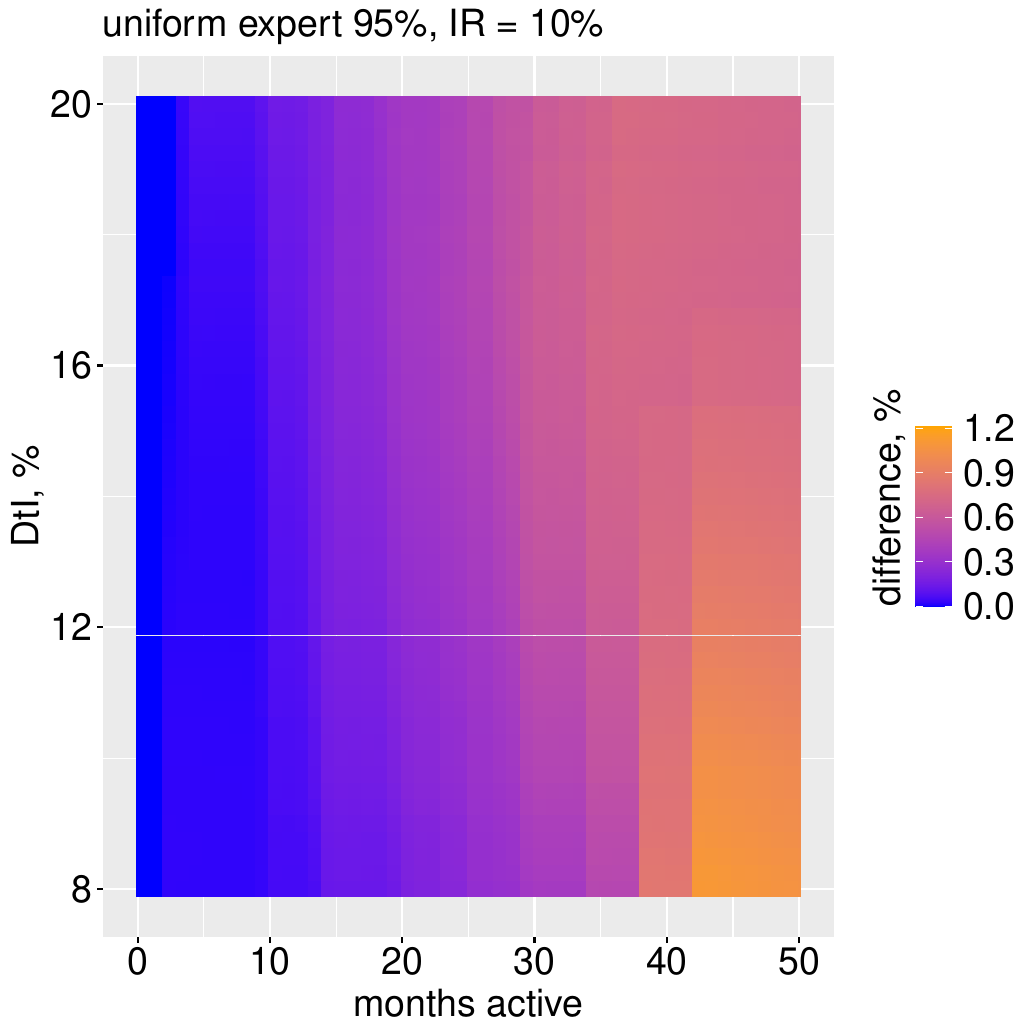}
    \includegraphics[width=4.5cm]{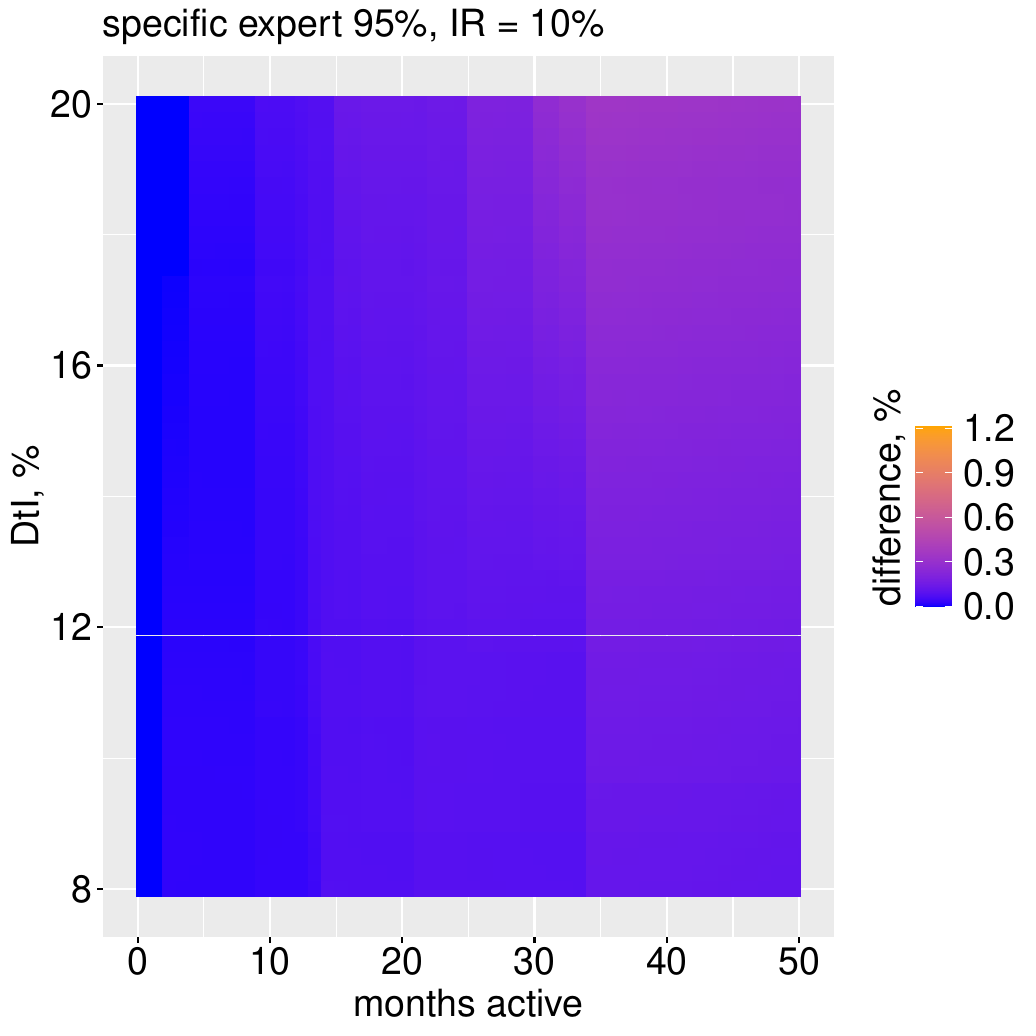}
    \captionsetup{width=12cm}
    \caption{\textit{Differences in survival estimates between the expert estimators (mid and right) and the naïve estimator for $z_{\text{IR}}=10\%$, shown to the left.}}
    \label{interestrate10_defaultdata}
\end{figure}

\begin{figure}[h]
    \centering
    \captionsetup{width=12cm}
    \includegraphics[scale=0.6]{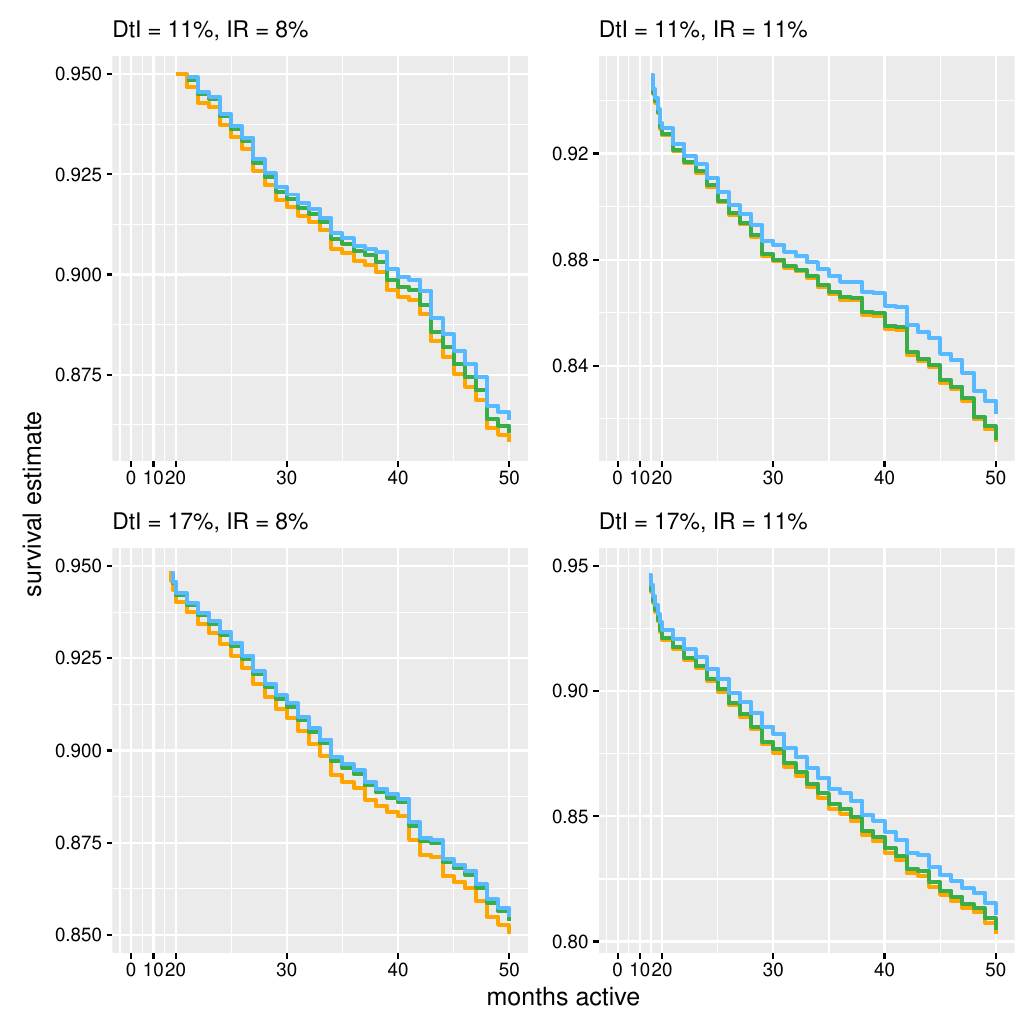}
    \\
    \scriptsize \naive \text{ } naïve, \uniform \text{ } uniform expert, \specific \text{ } specific expert
    \caption{\textit{Conditional Kaplan--Meier estimators for selected covariate values.}}
    \label{survivalplots_defaultdata}
\end{figure}
\section{Conclusion}\label{section_conclusion}
In this paper, we have introduced a flexible, non-parametric framework for estimating survival functions from right-censored data in the presence of contamination, where observed events may not reflect true outcomes. Building on the conditional expert Kaplan--Meier estimator, we have developed a comprehensive asymptotic theory incorporating covariates via kernel smoothing. Our results demonstrate that consistency of the estimator can be achieved when expert judgments are unbiased, while systematic deviations in expert information produce a deterministic, quantifiable asymptotic bias in the estimated survival functions. This provides a practical and interpretable tool to assess the reliability of survival estimates in settings where data contamination is unavoidable, such as insurance portfolios or credit risk applications.

Through simulation studies, we have illustrated both the behavior of the estimator under varying levels of expert effectiveness and the weak convergence of the resulting processes. In a real-world application to bank loan defaults, we have shown how incorporating expert knowledge can meaningfully adjust survival estimates, capturing both uniform and covariate-specific contamination effects. The approach allows practitioners to benchmark the influence of expert information and quantify its impact on inference, providing a systematic framework to improve estimates in contaminated settings.

Future work may explore extensions in several directions. One natural extension is to incorporate more structured expert knowledge in a Bayesian framework, such as neutral-to-the-right models, where beliefs are formalized as prior distributions, with~\cite{BladtGonzalezBayesian} providing a good starting point. Another promising avenue is the treatment of truncated data, for example through adaptations of the Lynden--Bell estimator~\cite{LyndelBell} and its time-reversed variants~\cite{Lagakos}, enabling the framework to address left- or right-truncation in addition to contamination. More broadly, the methodology offers a principled way to integrate human judgment into survival analysis, opening opportunities for robust and interpretable estimation in other applied settings.


\begin{appendix}
\section{Functional consistency}\label{appA}
In this section we provide proofs of consistency Theorem~\ref{consistency_theorem} and its  bias Corollary~\ref{bias} for the conditional expert Kaplan--Meier estimator and the lemmas involved. Lemma~\ref{cadlag_bounded_as} expands pointwise almost sure convergence into functional convergence for certain functions, Lemma~\ref{Hoeffding} states Hoeffding's inequality and Lemma~\ref{weak_consistency_proof} provides weak consistency for regression estimators. Lastly, Lemma~\ref{consistency_lemma} gives consistency of the Nadaraya--Watson estimator with expert judgments.
\begin{lemma}[Expanding pointwise convergence] \label{cadlag_bounded_as}
    Assume that $T\colon \amsmathbb{R}\to \amsmathbb{R}$ is a non-decreasing and bounded càdlàg function. Assume that $(T^{(n)})$ is a sequence of random non-decreasing càdlàg maps $T^{(n)}\colon \amsmathbb{R}\to \amsmathbb{R}$ such that $T^{(n)}(x)\overset{\textnormal{a.s.}}{\to}T(x)$ and $T^{(n)}(x-)\overset{\textnormal{a.s.}}{\to}T(x-)$ for all $x\in \amsmathbb{R}$. It then holds that 
    \begin{align*}
        \sup_{x\in \amsmathbb{R}}|T^{(n)}(x)-T(x)|\overset{\textnormal{a.s.}}{\to}0.
    \end{align*}
\end{lemma}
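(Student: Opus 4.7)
The plan is to use a classical Pólya-type discretization, reducing uniform convergence to pointwise convergence at finitely many points by leveraging the monotonicity of both $T$ and each $T^{(n)}$, and then pushing pointwise almost sure convergence through a countable intersection.

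First I would fix $\varepsilon>0$ and select a finite grid on which $T$ has small oscillation. Since $T$ is non-decreasing and bounded, the limits $a=\lim_{x\to-\infty}T(x)$ and $b=\lim_{x\to\infty}T(x)$ exist in $\amsmathbb{R}$, and the set of jumps of $T$ of size at least $\varepsilon$ is finite. Hence one can choose $x_{1}<\cdots<x_{N}$ with $T(x_{1}-)-a<\varepsilon$, $b-T(x_{N})<\varepsilon$, and $T(x_{i+1}-)-T(x_{i})<\varepsilon$ for every $i$, so that $\amsmathbb{R}$ is partitioned into finitely many pieces on each of which $T$ varies by at most $\varepsilon$.

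Next, for $x\in[x_{i},x_{i+1})$, the monotonicity of both $T$ and $T^{(n)}$ yields the sandwich
\begin{align*}
    T^{(n)}(x_{i})-T(x_{i+1}-)\,\leq\,T^{(n)}(x)-T(x)\,\leq\,T^{(n)}(x_{i+1}-)-T(x_{i}).
\end{align*}
Rewriting the upper bound as $[T^{(n)}(x_{i+1}-)-T(x_{i+1}-)]+[T(x_{i+1}-)-T(x_{i})]$ isolates a vanishing pointwise error and a deterministic contribution bounded by $\varepsilon$; the lower bound is treated symmetrically. The two tail regions $x<x_{1}$ and $x\geq x_{N}$ are handled by the same sandwich anchored at the outermost grid point together with the respective asymptotes $a$ and $b$ of $T$, exploiting that $T^{(n)}$ is non-decreasing to constrain its values beyond the endpoints.

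The concluding step is probabilistic. Intersecting the $2N$ full-measure events on which pointwise convergence holds at the $x_{i}$ and at the $x_{i}-$ yields a single probability-one set on which $\limsup_{n}\sup_{x\in\amsmathbb{R}}|T^{(n)}(x)-T(x)|\leq 2\varepsilon$. Repeating the construction with $\varepsilon=1/k$ and intersecting this countable family of probability-one events then delivers the desired almost sure uniform convergence. The main obstacle I expect is precisely the tail treatment: one must exploit the monotonicity of $T^{(n)}$ together with the compatibility of its asymptotic behavior with $T$ so that values beyond the outermost grid points remain properly constrained, which is where the boundedness assumption and the almost sure pointwise convergence assumption interact to close the argument.
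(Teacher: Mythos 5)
Your discretization is exactly the paper's: a Glivenko--Cantelli/P\'olya grid on which $T$ oscillates by at most $\varepsilon$, the monotone sandwich $T^{(n)}(x_{i})-T(x_{i+1}-)\leq T^{(n)}(x)-T(x)\leq T^{(n)}(x_{i+1}-)-T(x_{i})$ on each cell, a maximum over the finitely many grid evaluations, and a countable intersection over $\varepsilon=1/k$. On the interior cells your argument and the paper's coincide essentially line by line. The genuine gap is the step you explicitly defer: the tail treatment cannot be closed from the stated hypotheses, because monotonicity anchors $T^{(n)}$ on only one side in each tail. For $x\geq x_{N}$ you get the lower anchor $T^{(n)}(x)\geq T^{(n)}(x_{N})$, but there is no upper anchor: boundedness is assumed for $T$ only, and almost sure convergence at finitely many (or even all) finite points says nothing about $\sup_{x\geq x_{N}}T^{(n)}(x)$; symmetrically, for $x<x_{1}$ there is no lower anchor. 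This is not a technicality --- the lemma as stated is false. Take $T\equiv 0$ and the deterministic, non-decreasing, c\`adl\`ag functions $T^{(n)}(x)=1_{[n,\infty)}(x)$; then $T^{(n)}(x)\to 0$ and $T^{(n)}(x-)\to 0$ for every fixed $x\in\amsmathbb{R}$, yet $\sup_{x\in\amsmathbb{R}}|T^{(n)}(x)-T(x)|=1$ for all $n$. Hence no refinement of the tail argument can succeed; what is missing is an additional hypothesis of endpoint convergence, $\lim_{x\to\infty}T^{(n)}(x)\overset{\textnormal{a.s.}}{\to}\lim_{x\to\infty}T(x)$ and likewise at $-\infty$, i.e.\ pointwise convergence on the extended real line, after which your grid argument (with $\pm\infty$ adjoined as grid points) goes through verbatim.

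To be fair, the paper's own proof has the same blind spot, only hidden: it takes the grid $-\infty=x_{0}<\cdots<x_{k}=\infty$ and invokes ``by assumption'' the convergences $|T^{(n)}(x_{j})-T(x_{j})|\to 0$ and $|T^{(n)}(x_{j}-)-T(x_{j}-)|\to 0$ at every grid point, including $x_{0}=-\infty$ and the left limit at $x_{k}=\infty$, which the hypotheses --- stated for $x\in\amsmathbb{R}$ --- do not cover. In the paper's applications the lemma is applied to sub-distribution-type estimators such as $\amsmathbb{H}^{(n)}_{\dagger,1}(\cdot|\z)$, which vanish to the left of zero, are uniformly bounded, and whose total masses converge by the same Hoeffding/bias arguments, so the conclusion is correct where it is used; but as a freestanding statement both your proof and the paper's need the endpoint condition added. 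In short: you took the same route as the paper, and the step you flagged as the ``main obstacle'' is precisely the step that fails --- you were right to be suspicious of it, but wrong to expect that monotonicity of $T^{(n)}$ and boundedness of $T$ suffice to close it.
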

\begin{proof}
    The proof is virtually identical to the proof of the ordinary Glivenko--Cantelli theorem. Given $\varepsilon>0$ choose some $k\in \amsmathbb{N}$ and a sequence 
    \begin{align*}
        -\infty=x_{0}<\ldots <x_{k}=\infty
    \end{align*}
    satisfying 
    \begin{align*}
        T(x_{j}-)-T(x_{j-1})<\varepsilon.
    \end{align*}
    By assumption it holds for each point in the partition $x_{j}$ that
    \begin{align*}
        |T^{(n)}(x_{j})-T(x_{j})|\overset{\text{a.s.}}{\to}0\text{ and } |T^{(n)}(x_{j}-)-T(x_{j}-)|\overset{\text{a.s.}}{\to}0.
    \end{align*}
    Let 
    \begin{align*}
        \Delta^{(n)}=\max_{j=1,...,k-1}\left\{|T^{(n)}(x_{j})-T(x_{j})|,|T^{(n)}(x_{j}-)-T(x_{j}-)|\right\}
    \end{align*}
     and let $x\in \amsmathbb{R}$ be arbitrary. By construction there is a $j$ such that $x\in [x_{j-1}, x_{j})$ and we have 
    \begin{align*}
    &T^{(n)}(x_{j-1})-T(x_{j-1})-\varepsilon \leq T^{(n)}(x_{j-1})-T(x_{j}-)\leq  T^{(n)}(x)-T(x)     
    \end{align*}
    and also
    \begin{align*}
        T^{(n)}(x)-T(x)\leq T^{(n)}(x_{j}-)-T(x_{j-1})\leq T^{(n)}(x_{j}-)-T(x_{j}-)+\varepsilon.
    \end{align*}
    Then 
    \begin{align*}
        &T^{(n)}(x_{j-1})-T(x_{j-1})-\varepsilon\leq T^{(n)}(x)-T(x)\leq T^{(n)}(x_{j}-)-T(x_{j}-)+\varepsilon
        \\
        &\implies 
        |T^{(n)}(x)-T(x)|\leq \Delta^{(n)}+\varepsilon.
    \end{align*}
    As $x$ was arbitrary it holds that 
    \begin{align*}
        \sup_{x\in \amsmathbb{R}}|T^{(n)}(x)-T(x)|\leq \Delta^{(n)}+\varepsilon.
    \end{align*}
    Note that $\Delta^{(n)}$ is the maximum over finitely many points of functions going to zero almost surely and conclude that $\Delta^{(n)}\overset{\text{a.s.}}{\to}0$ and in turn that 
    \begin{align*}
        \lim_{n\to \infty}\sup_{x\in \amsmathbb{R}}|T^{(n)}(x)-T(x)|\leq \varepsilon \quad \text{a.s.}
    \end{align*}
    As $\varepsilon>0$ was arbitrary it holds that 
    \begin{align*}
       \lim_{n\to \infty}\sup_{x\in \amsmathbb{R}}|T^{(n)}(x)-T(x)|=0 \quad \text{a.s.}
    \end{align*} 
\end{proof}
\begin{lemma}[Hoeffding's inequality]\label{Hoeffding} Let $(X_{i})$ be a sequence of independent real-valued random variables such that $a_{i}\leq X_{i}\leq b_{i}$ almost surely. Let $S_{n}=X_{1}+\ldots+X_{n}$. Then 
\begin{align*}
    P\left(|S_{n}-\E\left[S_{n}\right]|\geq t \right)\leq 2 \exp\left(-\frac{2t^{2}}{\sum_{i=1}^{n}(b_{i}-a_{i})^{2}}\right).
\end{align*}
\end{lemma}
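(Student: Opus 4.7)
The plan is to establish the two-sided bound by deriving each one-sided tail bound via the Chernoff method, then combining them with a union bound. First, I would apply the exponential Markov inequality: for any $s>0$,
\begin{align*}
P\left(S_n - \E[S_n] \geq t\right) \leq \mathrm{e}^{-st}\, \E\!\left[\mathrm{e}^{s(S_n - \E[S_n])}\right].
\end{align*}
Since the $X_i$ are independent, the moment generating function on the right factorizes as $\prod_{i=1}^{n} \E[\mathrm{e}^{s(X_i - \E[X_i])}]$, reducing the problem to bounding each centered factor individually.

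The key step, which is the main obstacle, is proving Hoeffding's lemma: for a centered random variable $Y$ with $a\leq Y\leq b$ almost surely, one has $\E[\mathrm{e}^{sY}] \leq \exp(s^{2}(b-a)^{2}/8)$. I would obtain this by writing $Y$ as a convex combination $Y = \frac{b-Y}{b-a}a + \frac{Y-a}{b-a}b$, applying convexity of $x\mapsto \mathrm{e}^{sx}$ to bound $\mathrm{e}^{sY}$ pointwise, and then taking expectations to express the bound as a function $\varphi(s(b-a))$ where $\varphi(u) = -pu + \log(1-p+p\mathrm{e}^{u})$ for $p = -a/(b-a)$. Expanding $\varphi$ around $0$ and showing $\varphi''(u)\leq 1/4$ uniformly via a Taylor-with-remainder argument then yields $\varphi(u)\leq u^{2}/8$, which gives the lemma.

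With Hoeffding's lemma in hand, I would combine the factorized bound to obtain
\begin{align*}
P\left(S_n - \E[S_n] \geq t\right) \leq \exp\!\left(-st + \tfrac{s^{2}}{8}\sum_{i=1}^{n}(b_i - a_i)^{2}\right),
\end{align*}
and then optimize over $s>0$ by choosing $s = 4t / \sum_{i=1}^{n}(b_i - a_i)^{2}$. This yields $\exp(-2t^{2}/\sum_{i=1}^{n}(b_i-a_i)^{2})$ for the upper tail. Applying the identical argument to $-X_i$ (which satisfies $-b_i \leq -X_i \leq -a_i$ so the range $b_i - a_i$ is unchanged) gives the same bound for the lower tail $P(S_n - \E[S_n] \leq -t)$. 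A union bound combining the two tails delivers the factor of $2$ and the stated inequality. The only delicate portion of the entire argument is the convexity-plus-Taylor bound for $\varphi''$; the remaining steps are algebraic manipulations that follow directly from independence and optimization.
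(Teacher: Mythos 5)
Your proposal is correct, and it is the standard textbook proof of Hoeffding's inequality: the Chernoff (exponential Markov) bound, factorization of the moment generating function by independence, Hoeffding's lemma $\E[\mathrm{e}^{sY}]\leq \exp(s^{2}(b-a)^{2}/8)$ for a bounded centered variable via convexity and the second-derivative bound $\varphi''\leq 1/4$, optimization at $s=4t/\sum_{i}(b_{i}-a_{i})^{2}$, and a union bound over the two tails. Note, however, that the paper itself offers no proof of this lemma at all: it is stated as a classical auxiliary result (used later in the proof of the consistency of $\amsmathbb{H}^{(n)}_{\dagger,1}(\cdot|\z)$), so there is nothing in the paper to compare your argument against. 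Your write-up correctly handles the one point that is easy to gloss over, namely that the centered variables $X_{i}-\E[X_{i}]$ lie in intervals of the same width $b_{i}-a_{i}$, so Hoeffding's lemma applies with the stated constants; with that, all steps are sound and the argument is complete.
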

\begin{lemma}[Weak consistency of the Parzen--Rosenblatt and Priestley--Chao estimators]\label{weak_consistency_proof} Let $(Y_{i})$ be a sequence of iid real-valued random variables with distribution function $F$ and let $(\boldsymbol{Z}_{i})$ be a sequence of iid random variables in $\amsmathbb{R}^{k}$ with a density $g$ that is $C^{2}$. Assume that the function $ \z\mapsto \E\left[1_{(Y_{1}\leq t)}|\boldsymbol{Z}_{1}=\z\right]=F(t|\z)$ is $C^{2}$ for every $t$. Then it holds for the Parzen--Rosenblatt estimator $\g(\z)$ of $g(\z)$ and the Priestley--Chao estimator $\mathbb{m}^{(n)}(t;\z)$ of $m(t;\z)=F(t|\z)g(\z)$ that
   \begin{align*}
    \E\left[\g(\z)\right]&= g(\z)+O\left(\left(\max_{i,j}\B_{n}(i,j)\right)^{2}\right),\\
\textnormal{Var}\left[\g(\z)\right]&=O\left(\frac{1}{n|\B_{n}|}\right),
\\
\E\left[\mm(t;\z)\right]&=F(t|\z)g(\z)+O\left(\left(\max_{i,j}\B_{n}(i,j)\right)^{2}\right),\\
\textnormal{Var}\left[\mm(t;\z)\right]&=O\left(\frac{1}{n|\B_{n}|}\right).
\end{align*}
\end{lemma}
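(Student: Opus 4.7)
The plan is to prove the four statements by the classical bias--variance decomposition for kernel estimators, relying on (i) a change of variables to push the bandwidth into the argument of $g$ and $F(t|\cdot)$, (ii) a second-order Taylor expansion exploiting the $C^2$ assumptions, and (iii) the normalization $\int K(\uu)\,\mathrm{d}\uu = 1$ together with the symmetry condition $\int \uu K(\uu)\,\mathrm{d}\uu = 0$ from the standing kernel assumptions. Since the terms in $\g(\z)$ and $\mm(t;\z)$ are iid, the variance of each estimator reduces to $1/n$ times the variance of a single summand, and it suffices to bound the second moments.

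\textbf{Bias terms.} Starting with $\E[\g(\z)]$, I will substitute $\uu = \B_n^{-1}(\boldsymbol{w}-\z)$, whose Jacobian cancels the $|\B_n|^{-1}$ prefactor, yielding $\E[\g(\z)] = \int K(\uu) g(\z + \B_n \uu)\,\mathrm{d}\uu$. Taylor expanding $g$ around $\z$ to second order with Lagrange remainder gives
\begin{align*}
g(\z + \B_n \uu) = g(\z) + \nabla g(\z)^T \B_n \uu + \tfrac{1}{2}\uu^T \B_n^T \nabla^2 g(\tilde{\z})\,\B_n \uu.
\end{align*}
The first term contributes $g(\z)$, the second vanishes by the zero-mean kernel assumption, and I will bound the remainder using $\|\B_n\uu\| \le k\,(\max_{i,j}\B_n(i,j))\,\|\uu\|$, boundedness of $\nabla^2 g$ on the compact neighborhood of $\z$ swept out by $\B_n\cdot\mathrm{supp}(K)$, and the fact that $\int \|\uu\|^2 K(\uu)\,\mathrm{d}\uu < \infty$ because $K$ is bounded with compact support. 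This yields the $O((\max_{i,j}\B_n(i,j))^2)$ bias. For $\E[\mm(t;\z)]$ the argument is identical, applied to the function $\boldsymbol{w}\mapsto F(t|\boldsymbol{w}) g(\boldsymbol{w})$, which is $C^2$ as a product of $C^2$ functions by assumption; conditioning on $\Z$ via the tower property turns the indicator $1_{(Y\le t)}$ into the factor $F(t|\boldsymbol{w})$ before the substitution.

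\textbf{Variance terms.} By iid,
\begin{align*}
\mathrm{Var}[\g(\z)] = \tfrac{1}{n|\B_n|^2}\,\mathrm{Var}\!\left[K(\B_n^{-1}(\Z-\z))\right] \le \tfrac{1}{n|\B_n|^2}\,\E\!\left[K^2(\B_n^{-1}(\Z-\z))\right].
\end{align*}
The same change of variables produces $\tfrac{1}{n|\B_n|}\int K^2(\uu) g(\z + \B_n \uu)\,\mathrm{d}\uu$; since $K^2$ is bounded with compact support and $g$ is continuous (hence bounded near $\z$), the integral is $O(1)$, giving the advertised $O(1/(n|\B_n|))$ rate. For $\mm(t;\z)$ the indicator $1_{(Y\le t)} \in [0,1]$ bounds the squared summand by $K^2(\B_n^{-1}(\Z-\z))$, so the identical calculation applies.

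\textbf{Main obstacle.} There is no deep difficulty; the only point requiring care is controlling the second-order remainder in the bias expansion when $\B_n$ is a general (non-scalar) matrix. The bound $\|\B_n\uu\|^2 \le k^2(\max_{i,j}\B_n(i,j))^2\|\uu\|^2$ together with compact support of $K$ and local boundedness of $\nabla^2 g$ and $\nabla^2(F(t|\cdot)g(\cdot))$ handles this cleanly, but writing the remainder explicitly (rather than invoking an $O$ symbol at the wrong step) is what keeps the proof rigorous. Everything else is routine application of Fubini, the substitution rule, and the kernel normalization conditions.
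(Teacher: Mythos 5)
Your proposal is correct and takes essentially the same route as the paper's proof: condition on $\boldsymbol{Z}_{1}$, substitute $\boldsymbol{v}=\z+\B_{n}\uu$ so the Jacobian cancels $|\B_{n}|^{-1}$, Taylor expand to second order and kill the first-order term with $\int \uu^{T}K(\uu)\,\mathrm{d}\uu=0$, then handle the variance by the same change of variables at the level of $K^{2}$. The only cosmetic differences are that you expand the product $\boldsymbol{w}\mapsto F(t|\boldsymbol{w})g(\boldsymbol{w})$ directly while the paper expands $F(t|\cdot)$ and $g$ separately and multiplies, and you bound the variance by the second moment rather than computing both terms of the variance decomposition.
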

\begin{proof}
    For this proof we follow calculations carried out in Chapter 1.4 in~\cite{KS}. Note that 
    \begin{align*}
        \E\left[\mm(t;\z)\right]&=\frac{1}{|\B_{n}|}\E\left[\E\left[1_{(Y_{1}\leq t)}K\left(\B_{n}^{-1}\left(\boldsymbol{Z}_{1}-\z\right)\right)\right]|\boldsymbol{Z}_{1}\right]
        \\& 
        =\frac{1}{|\B_{n}|}\int F(t|\boldsymbol{v})K\left(\B_{n}^{-1}\left(\boldsymbol{v}-\z\right)\right)g(\boldsymbol{v})\text{d}\boldsymbol{v}
        \\&
        =\int F\left(t|\z+\B_{n}\boldsymbol{u}\right)K(\boldsymbol{u})g\left(\z+\B_{n}\boldsymbol{u}\right)\mathrm{d}\uu,
    \end{align*}
    using the substitution $\boldsymbol{v}=\z+\boldsymbol{B}_{n}\boldsymbol{u}$ for the last equality. Taylor expanding around $\z$ yields
    \begin{align*}
        &\int \left(F(t|\z)+\left(\B_{n}\boldsymbol{u}\right)^{T}F'(t|\z)+\frac{1}{2}\left(\B_{n}\boldsymbol{u}\right)^{T}F''(t|\z)\left(\B_{n}\boldsymbol{u}\right)+o\left(||\left(\B_{n}\boldsymbol{u}\right)^{T}||^{2}\right)\right)
        \\&
        \cdot K(\boldsymbol{u})\left(g(\z)+\left(\B_{n}\boldsymbol{u}\right)^{T}g'(\z)+\frac{1}{2}\left(\B_{n}\boldsymbol{u}\right)^{T}g''(\z)\left(\B_{n}\boldsymbol{u}\right)+o\left(||\left(\B_{n}\boldsymbol{u}\right)^{T}||^{2}\right)\right)\text{d}\boldsymbol{u}
        \\&
        \quad = F(t|\z)g(\z)+O\left(\left(\max_{i,j}\boldsymbol{B}_{n}(i,j)\right)^{2}\right),
    \end{align*}
    showing that the Priestley--Chao estimator is asymptotically unbiased. For the variance we have 
    \begin{align*}
        &\text{Var}\left[\mm(t;\z)\right]=\frac{1}{n|\B_{n}|^{2}}\text{Var}\left[1_{(Y_{1}\leq t)}K\left(\B_{n}^{-1}\left(\boldsymbol{Z}_{1}-\z\right)\right)\right]
        \\
        &\quad =\frac{1}{n|\B_{n}|^{2}}\E\left[1_{(Y_{1}\leq t)}K^{2}\left(\B_{n}^{-1}\left(\boldsymbol{Z}_{1}-\z\right)\right)\right]\\
        &\quad\quad-\frac{1}{n|\B_{n}|^{2}}\E\left[1_{(Y_{1}\leq t)}K\left(\boldsymbol{B}_{n}^{-1}\left(\boldsymbol{Z}_{1}-\z\right)\right)\right]^{2}.
    \end{align*}
    Both terms are seen to be $O(1/(n|\B_{n}|))$ by the same procedure as above. The proof for the weak consistency of the Parzen--Rosenblatt estimator is virtually identical. 
\end{proof}
\begin{lemma}[Consistency of $\amsmathbb{H}^{(n)}_{\dagger,1}(\cdot|\z)$] \label{consistency_lemma}
Consider the setup of the conditional expert Kaplan--Meier estimator in Definition~\ref{def_expert_conditional_km}. Let Assumptions~\ref{Strong_uniform_assumptions_2} and the error term condition in~\eqref{consistency_assumption} hold. Then it holds that
    \begin{align*}
        \sup_{0\leq t <\infty}|\HHd(t|\boldsymbol{z})-H_{1}(t|\boldsymbol{z})|\overset{\textnormal{a.s.}}{\to}0.
    \end{align*}
\end{lemma}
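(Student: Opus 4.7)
The plan is to reduce the assertion to two simpler a.s.\ convergence statements for the Parzen--Rosenblatt denominator $\g(\z)$ and the expert Priestley--Chao numerator $\md(t;\z)$. Concretely, I would start from the algebraic decomposition
\begin{align*}
\HHd(t|\z) - H_{1}(t|\z)
= \frac{\md(t;\z) - H_{1}(t|\z) g(\z)}{\g(\z)}
+ \frac{H_{1}(t|\z)\bigl(g(\z) - \g(\z)\bigr)}{\g(\z)},
\end{align*}
and use $0\le H_{1}(t|\z)\le 1$ to bound the two terms uniformly in $t$. It therefore suffices to establish (a) $\g(\z)\to g(\z)$ a.s.\ (which keeps the denominator bounded away from zero), and (b) $\sup_{t\ge 0}\bigl|\md(t;\z) - H_{1}(t|\z) g(\z)\bigr|\to 0$ a.s.

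For (a) I would invoke the strong consistency of the Parzen--Rosenblatt estimator along the lines of Stute~\cite{Stute} under Assumption~\ref{Strong_uniform_assumptions_2}(iii) and~(v); since $g(\z)>0$ on the support, this also guarantees $\g(\z)\ge g(\z)/2$ eventually a.s. For (b) I would first work pointwise by writing
\begin{align*}
\md(t;\z) - H_{1}(t|\z) g(\z)
= \bigl(\md(t;\z) - \E\md(t;\z)\bigr)
+ \bigl(\E\md(t;\z) - H_{1}(t|\z) g(\z)\bigr).
\end{align*}
The second bracket tends to zero by hypothesis~\eqref{consistency_assumption}. For the first, I would apply Hoeffding's inequality (Lemma~\ref{Hoeffding}) to the mutually independent summands $(n|\B_{n}|)^{-1} 1_{(W_{nm}\le t)}\eta_{nm}K(\B_{n}^{-1}(\Z_{nm}-\z))\in [0,(n|\B_{n}|)^{-1}\sup_{\uu} K(\uu)]$, yielding the tail bound $2\exp(-2\varepsilon^{2} n|\B_{n}|^{2}/\sup_{\uu}K^{2}(\uu))$. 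This is summable by Assumption~\ref{Strong_uniform_assumptions_2}(iv), so the Borel--Cantelli lemma delivers the pointwise a.s.\ convergence. The same argument applied at $t-$ gives a.s.\ convergence of the left limits.

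Finally, to upgrade from pointwise to uniform convergence on $[0,\infty)$ I would appeal to Lemma~\ref{cadlag_bounded_as}: the limit $t\mapsto H_{1}(t|\z) g(\z)$ is bounded (by $g(\z)$), non-decreasing, and càdlàg, while $t\mapsto \md(t;\z)$ is a non-decreasing càdlàg step function because $K\ge 0$ and $\eta_{nm}\in\{0,1\}$. The hypotheses of that lemma are met since pointwise a.s.\ convergence has been shown at every $t$ and every $t-$, so uniform a.s.\ convergence follows.

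The main technical obstacle is the Hoeffding step: because the expert judgments $\eta_{nm}$ form a triangular array that is only row-wise independent (and not identically distributed), one cannot invoke standard SLLN-type results directly, and the explicit subgaussian bound coupled with summability Assumption~\ref{Strong_uniform_assumptions_2}(iv) is essential. The remainder is bookkeeping: combining the pointwise convergence from~\eqref{consistency_assumption} with the Hoeffding/Borel--Cantelli bound, and then using the monotone-càdlàg structure to trade pointwise control for uniform control via Lemma~\ref{cadlag_bounded_as}.
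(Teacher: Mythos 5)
Your proof is correct and follows essentially the same route as the paper's: the same Hoeffding plus Borel--Cantelli argument under Assumption~\ref{Strong_uniform_assumptions_2}\eqref{strong_uniform_consistency_2_iv} for the stochastic fluctuations of both $\g(\z)$ and $\md(\cdot;\z)$, the hypothesis~\eqref{consistency_assumption} for the bias term, and Lemma~\ref{cadlag_bounded_as} to upgrade pointwise almost sure convergence to uniform convergence. The only differences are cosmetic: you center the decomposition directly at $H_{1}(t|\z)g(\z)$ and $g(\z)$ whereas the paper passes through the intermediate ratio $\E\left[\md(t;\z)\right]/\E\left[\g(\z)\right]$, and you apply Lemma~\ref{cadlag_bounded_as} to the numerator $\md(\cdot;\z)$ before dividing by $\g(\z)$ rather than to $\HHd(\cdot|\z)$ itself.
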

\begin{proof}
    Note that 
    \begin{align*}
        &|\HHd(t|\boldsymbol{z})-H_{1}(t|\boldsymbol{z})|
        \\&\quad 
        \leq \Bigg \vert\frac{\md(t;\boldsymbol{z})}{\g(\boldsymbol{z})}-\frac{\E\left[\md(t;\boldsymbol{z})\right]}{\E\left[\g(\boldsymbol{z})\right]} \Bigg \vert +\Bigg \vert \frac{\E\left[\md(t;\boldsymbol{z})\right]}{\E\left[\g(\boldsymbol{z})\right]}-\frac{H_{1}(t|\boldsymbol{z})g(\boldsymbol{z})}{g(\boldsymbol{z})}\Bigg \vert.
    \end{align*}
    The second term goes to 0 pointwise in $t$ by the assumption in~\eqref{consistency_assumption} and by the weak consistency of $\g(\z)$, confer with Lemma~\ref{weak_consistency_proof} above. For the first term note that
    \begin{align*}
        &\Bigg \vert\frac{\md(t;\boldsymbol{z})}{\g(\boldsymbol{z})}-\frac{\E\left[\md(t;\boldsymbol{z})\right]}{\E\left[\g(\boldsymbol{z})\right]} \Bigg \vert
        \\&
        \leq \frac{\md(t;\boldsymbol{z})}{\E\left[\g(\boldsymbol{z})\right]\g(\boldsymbol{z})}\bigg \vert\E\left[\g(\boldsymbol{z})\right]-\g(\boldsymbol{z})\bigg\vert\\
        &\quad+\frac{1}{\E\left[\g(\boldsymbol{z})\right]}\bigg \vert \md(t;\boldsymbol{z})-\E\left[\md(t;\boldsymbol{z})\right]\bigg \vert.
    \end{align*}
    Using Hoeffding's inequality Lemma~\ref{Hoeffding} and the assumption in~\eqref{consistency2_assumptions} we obtain for the first term
    \begin{align*}
        \sum_{n=1}^{\infty}P\left(\bigg \vert \E\left[\g(\boldsymbol{z})\right]-\g(\boldsymbol{z})\bigg \vert >\varepsilon\right)\leq \sum_{n=1}^{\infty}2\exp\left(-\frac{2\varepsilon^{2}n|\B_{n}|^{2}}{\sup_{\uu}K^{2}(\uu)}\right)< \infty.
    \end{align*}
    By the Borel--Cantelli lemma it then holds that $|\E[\g(\boldsymbol{z})]-\g(\boldsymbol{z})|\overset{\text{a.s.}}{\to}0$. From this it follows that 
    \begin{align*}
        \bigg \vert \g(\z)-g(\z)\bigg \vert \leq \bigg \vert \g(\z)-\E\left[\g(\z)\right]\bigg \vert +\bigg \vert\E\left[\g(\z)\right]-g(\z)\bigg \vert\overset{\text{a.s.}}{\to}0
    \end{align*}
    and as $\md(\cdot;\z)$ is bounded we have by combination of limits that 
    \begin{align*}
        \frac{\md(t;\boldsymbol{z})}{\E\left[\g(\boldsymbol{z})\right]\g(\boldsymbol{z})}\bigg \vert\E\left[\g(\boldsymbol{z})\right]-\g(\boldsymbol{z})\bigg\vert\overset{\text{a.s.}}{\to}0.
    \end{align*}
    Repeating virtually the same arguments we conclude also that 
    \begin{align*}
        \frac{1}{\E\left[\g(\z)\right]}\bigg \vert \md(t;\boldsymbol{z})-\E\left[\md(t;\boldsymbol{z})\right]\bigg \vert\overset{\text{a.s.}}{\to}0.
    \end{align*}
    Collecting results gives 
    \begin{align*}
        |\HHd(t|\boldsymbol{z})-H_{1}(t|\boldsymbol{z})|\overset{\text{a.s.}}{\to}0.
    \end{align*}
    An application of Lemma~\ref{cadlag_bounded_as} finishes the proof. 
\end{proof}
Next we give the proof of consistency of the conditional expert Kaplan--Meier estimator:
\begin{proof}[\textnormal{\textbf{Proof of Theorem \ref{consistency_theorem}} (Consistency of the expert estimator).}] Under Assumptions~\ref{Strong_uniform_assumptions_2} it holds for every $t\geq 0$ that $|\amsmathbb{H}^{(n)}(t|\z)-H(t|\z)|\overset{\text{a.s.}}{\to} 0$, confer with~\cite{Stute}. By Lemma~\ref{cadlag_bounded_as} this is readily expanded into functional convergence. Under conditional entirely random contaminated right-censoring, invoke the representation of $\Lambda(\cdot|\z)$ as in~\eqref{cumulative} and consider
    \begin{align*}
        |\mathbb{\Lambda}_{\dagger}^{(n)}(t|\z)-\Lambda(t|\z)|
        &
        \leq \bigg \vert \frac{1}{1-\HH(s-|\z)}-\frac{1}{1-H(s-|\z)}\text{d}\HHd(s|\z)\bigg \vert 
        \\&\quad +\bigg \vert \int_{[0,t]}\frac{1}{1-H(s-|\z)}\text{d}\left(\HHd(s|\z)-H_{1}(s|\z)\right)\bigg \vert.
    \end{align*}
The first term goes to zero almost surely on the interval $0\leq t\leq \theta <H^{-1}(1|\z)$ by
    \begin{align*}
        &\bigg \vert \int_{[0,t]}\frac{1}{1-\HH(s-|\z)}-\frac{1}{1-H(s-|\z)}\text{d}\HHd(s|\z)\bigg \vert 
        \\&\quad \leq \sup_{0\leq t \leq \theta}\bigg \vert \frac{1}{1-\HH(t-|\z)}-\frac{1}{1-H(t-|\z)}\bigg \vert\overset{\text{a.s.}}{\to} 0,
    \end{align*}
    which follows by the functional consistency of $\amsmathbb{H}^{(n)}(\cdot|\z)$. 
    The second term goes to zero almost surely by the following inequality and an application of Lemma~\ref{consistency_lemma} 
    \begin{align*}
        &\bigg \vert \int_{[0,t]}\frac{1}{1-H(s-|\z)}\text{d}\left(\HHd(s|\z)-H_{1}(s|\z)\right)\bigg \vert 
        \\&
        \quad \leq \frac{1}{1-H(\theta|\z)}\sup_{0\leq t \leq \theta}|\HHd(t|\z)-H_{1}(t|\z)|\overset{\text{a.s.}}{\to}0.
    \end{align*}
    Collecting results gives $\sup_{0\leq t \leq \theta}|\mathbb{\Lambda}_{\dagger}^{(n)}(t|\z)-\Lambda(t|\z)|\overset{\text{a.s.}}{\to}0$. From Proposition 7.2.1 in~\cite{EMP} it holds pathwise that 
    \begin{align*}
        &|\Fd(t|\z)-F(t|\z)|\\
        &=\bigg \vert (1-F(t|\z))\int_{[0,t]}\frac{1-\Fd(s-|\z)}{1-F(s|\z)}\text{d}\left(\mathbb{\Lambda}^{(n)}_{\dagger}(s|\z)-\Lambda(s|\z)\right)\bigg \vert,
    \end{align*}
    for $0\leq t < F^{-1}(1|\z)$. Since $\theta< F^{-1}(1|\z)$ we have that 
    \begin{align*}
        &|\Fd(t|\z)-F(t|\z)|\\
        &=\bigg \vert (1-F(t|\z))\int_{[0,t]}\frac{1-\Fd(s-|\z)}{1-F(s|\z)}\text{d}\left(\mathbb{\Lambda}^{(n)}_{\dagger}(s|\z)-\Lambda(s|\z)\right)\bigg \vert
        \\&\quad \leq \frac{1}{1-F(\theta|\z)}\bigg \vert \int_{[0,t]}\text{d}\left(\mathbb{\Lambda}_{\dagger}^{(n)}(s|\z)-\Lambda(s|\z)\right)\bigg \vert 
        \\&\quad \leq \frac{1}{1-F(\theta|\z)}\sup_{0\leq t \leq \theta}|\mathbb{\Lambda}_{\dagger}^{(n)}(t|\z)-\Lambda(t|\z)|\overset{\text{a.s.}}{\to}0.
    \end{align*}
\end{proof}
\textnormal{\textbf{Justification of Remark~\ref{bandwidth_remark_consistency}}.}
\label{bandwith_fulfills}
    The bandwidth matrix readily satisfies conditions~\eqref{strong_uniform_consistency_2_i} and~\eqref{strong_uniform_consistency_2_ii} in Assumptions~\ref{Strong_uniform_assumptions_2} and the condition~\eqref{strong_uniform_consistency_2_iii} is for $r=2$ fulfilled by  
    \begin{align*}
    \sum_{n=1}^{\infty}c_{n}^{r}=\sum_{n=1}^{\infty}\left(\frac{1}{n}\right)^{(1-\rho)r}<\infty.
    \end{align*}
    For the condition in~\eqref{strong_uniform_consistency_2_iv}, let $M(\varepsilon)=2\varepsilon^{2} / \sup_{\boldsymbol{u}}K^{2}(\boldsymbol{u})$ to obtain
    \begin{align*}
        &\sum_{n=1}^{\infty}\exp\left(-\frac{2\varepsilon^{2} n|\B_{n}|^{2}}{\sup_{\uu}K^{2}(\uu)}\right)=\sum_{n=1}^{\infty}\exp\left(-M(\varepsilon)\frac{\log^{2}(n)}{n^{2\rho-1}}\right)
        \\&\quad \leq \sum_{n=1}^{\infty}\exp\left(-M(\varepsilon)\frac{\log(n)}{n^{2\rho-1}}\right)=\sum_{n=1}^{\infty}\left(\frac{1}{n}\right)^{M(\varepsilon)n^{1-2\rho}}<\infty.
    \end{align*}  
    The expert meets the quality condition in \eqref{consistency_assumption} by
    \begin{align*}
    &\bigg\vert\E\left[\md(t;\boldsymbol{z})\right]-H_{1}(t|\z)g(\z)\bigg\vert
    \\&\leq \bigg\vert\E\left[\md(t;\boldsymbol{z})\right]-\E\left[\sum_{i=1}^{n}1_{(W_{i}\leq t,W_{i}=X_{i})}K_{\B_{n}}\left(\Z_{i}-\z\right)\right]\bigg \vert 
    \\&\quad +\bigg\vert\E\left[\sum_{i=1}^{n}1_{(W_{i}\leq t,W_{i}=X_{i})}K_{\B_{n}}\left(\Z_{i}-\z\right)\right]-H_{1}(t|\z)g(\z)\bigg \vert.
    \end{align*}
    The second term converges to 0 by Lemma~\ref{weak_consistency_proof}. To deal with the first term, note that
    \begin{align*}
        &P(W\leq t, \eta_{1}=1|\boldsymbol{Z}=\z)-H_{1}(t|\z)
        \\&\quad =\E\left[1_{(W\leq t)}(\eta_{1}-1_{(W=X)})|\boldsymbol{Z}=\z\right]
        \\&\quad =P(\delta=1|\boldsymbol{Z}=\z)\E\left[1_{(W\leq t)}\left(\eta_{1}-1_{(W=X)}\right)|\delta=1,\boldsymbol{Z}=\z\right]
        \\&
        \quad=P(\delta=1|\boldsymbol{Z}=\z)
        \\&\quad \quad \cdot\int_{[0,t]}\E\left[\eta_{1}-1_{(W=X)}|\delta=1,W=w,\boldsymbol{Z}=\z\right]\text{d}P(W\leq w|\delta=1,\boldsymbol{Z}=\z)
        \\&\quad =\int_{[0,t]}\E\left[\eta_{1}-1_{(W=X)}|\delta=1,W=w,\boldsymbol{Z}=\z\right]\text{d}H_{1}^{\times}(w|\z)
        \\&
        \quad = 0,
    \end{align*}
where $H_{1}^{\times}(t|\z)=P(W\leq t,\delta=1|\boldsymbol{Z}=\z)$. The expert event distribution then fulfills the $C^{2}$-property and applications of Lemma~\ref{weak_consistency_proof} gives
    \begin{align*}
       &\bigg\vert\E\left[\md(t;\boldsymbol{z})\right]-\E\left[\sum_{i=1}^{n}1_{(W_{i}\leq t,W_{i}=X_{i})}K_{\B_{n}}\left(\Z_{i}-\z\right)\right]\bigg \vert = O(b_{n}^{2})\to 0.
    \end{align*} 
    This finishes the justification.
\begin{proof}[\textnormal{\textbf{Proof of Corollary \ref{bias}} (Consistency bias of the expert estimator).}] Follow the proof of Lemma~\ref{consistency_lemma} to obtain that
    \begin{align*}
        \bigg \vert \HHd(t|\z)-\left(H_{1}(t|\z)+\frac{\gamma(t;\z)}{g(\z)}\right)\bigg \vert 
        &\leq \Bigg \vert\frac{\md(t;\boldsymbol{z})}{\g(\boldsymbol{z})}-\frac{\E\left[\md(t;\boldsymbol{z})\right]}{\E\left[\g(\boldsymbol{z})\right]} \Bigg \vert 
        \\&\quad +\Bigg \vert \frac{\E\left[\md(t;\boldsymbol{z})\right]}{\E\left[\g(\boldsymbol{z})\right]}-\frac{H_{1}(t|\boldsymbol{z})g(\boldsymbol{z})+\gamma(t;\z)}{g(\boldsymbol{z})}\Bigg \vert.
    \end{align*}
    The first term is known to converge to 0 almost surely by the proof of Lemma~\ref{consistency_lemma} and the second term also converges to 0 due to the bias assumption. It then holds that $\sup_{0\leq t < \infty} | \HHd(t|\z)-\left(H_{1}(t|\z)+\gamma(t;\z)/g(\z)\right)|\overset{\text{a.s.}}{\to}0$ and in turn the bias for the conditional expert Nelson--Aalen estimator is for $0\leq t \leq \theta<H^{-1}(1|\z)$ given by $\Gamma(\cdot;\z)$. This is seen by
    \begin{align*}
        &\sup_{0\leq t \leq \theta}\bigg \vert \mathbb{\Lambda}_{\dagger}^{(n)}(t|\z)-\left(\Lambda(t|\z)+\Gamma(t;\z)\right)\bigg \vert 
        \\&\leq \sup_{0\leq t \leq \theta}\bigg \vert \frac{1}{1-\HH(s-|\z)}-\frac{1}{1-H(s-|\z)}\text{d}\HHd(s|\z)\bigg \vert 
        \\&\quad +\sup_{0\leq t \leq \theta}\bigg \vert \int_{[0,t]}\frac{1}{1-H(s-|\z)}\text{d}\left(\HHd(s|\z)-(H_{1}(s|\z)+\gamma(s;\z)/g(\z))\right)\bigg \vert,
    \end{align*}
    which is almost surely convergent to 0 by virtually the same arguments as in the proof of Theorem~\ref{consistency_theorem} above. Using this and utilizing the Duhamel equation, confer with Chapter 3.10.5.5 in~\cite{WCEP}, yields for the biased conditional expert Kaplan--Meier estimator 
    \begin{align*}
        &\sup_{0\leq t \leq \theta}|\Fd(t|\z)-\phi\left(-(\Lambda(t|\z)+\Gamma(t;\z))\right)|
        \\&
        = \sup_{0\leq t \leq \theta} \biggl\{ \phi\left(-\mathbb{\Lambda}^{(n)}_{\dagger}(t|\z)\right)
        \\&
        \quad \cdot \int_{[0,t]}\frac{\phi\left(-\left(\Lambda(s-|\z)+\Gamma(s-;\z)\right)\right)}{\phi\left(-\mathbb{\Lambda}^{(n)}_{\dagger}(s|\z)\right)}\text{d}\left(-\mathbb{\Lambda}^{(n)}_{\dagger}(s|\z)+\left(\Lambda(s|\z)+\Gamma(s;\z)\right)\right)\biggr\}
        \\& \leq \frac{\sup_{0\leq t \leq \theta}\phi\left(-\left(\Lambda(t-|\z)+\Gamma(t-;\z)\right)\right)}{\phi\left(-\mathbb{\Lambda}^{(n)}_{\dagger}\left(\theta|\z\right)\right)}\\
        &\quad\times \sup_{0\leq t \leq \theta}|-\mathbb{\Lambda}^{(n)}_{\dagger}(t|\z)+\left(\Lambda(t|\z)+\Gamma(t;\z)\right)|\overset{\text{a.s.}}{\to}0.
    \end{align*}
\end{proof}

\section{Weak convergence}\label{appB}
Here, we ultimately give proofs of weak convergence Theorem~\ref{weakconvergence} and Corollary~\ref{bias_weakconvergence} concerning biased experts. Lemma~\ref{weakconvergence_lemma} gives convergence in fidis for the Nadaraya--Watson estimator and Lemma~\ref{weak_convergence_lemma} gives functional weak convergence for the Nadaraya--Watson estimator incorporating expert judgments.
\begin{lemma}[Weak convergence] \label{weakconvergence_lemma} Let $(Y_{i})$ be a sequence of iid real-valued random variables with distribution function $F$ and let $(\boldsymbol{Z}_{i})$ be a sequence of iid random variables in $\amsmathbb{R}^{k}$ with a density $g$ that is $C^{2}$. Let $\boldsymbol{t}_{l}\in \amsmathbb{R}^{l}$ and assume that the vector function $\boldsymbol{z}\mapsto \amsmathbb{E}\left[(1_{(Y_{1}\leq t_{1})},\dots,1_{(Y_{l}\leq t)})|\boldsymbol{Z}_{1}=\z\right]=F(\boldsymbol{t}_{l}|\z)$ is entry-wise $C^{2}$ and assume for the bandwidth that $\sqrt{n|\B_{n}|}(\max_{i,j}\B_{n}(i,j))^{2}\to 0$.
Then it holds that the Nadaraya--Watson estimator $\amsmathbb{F}^{(n)}(\boldsymbol{t}_{l}|\z)=\mathbb{m}^{(n)}(\boldsymbol{t}_{l};\z)/\mathbb{g}(\z)$ of $F(\boldsymbol{t}_{l}|\z)$ is weakly convergent 
\begin{align*}
    \sqrt{n|\B_{n}|}\left(\amsmathbb{F}^{(n)}(\boldsymbol{t}_{l}|\z)-
    F(\boldsymbol{t}_{l}|\z)\right)\overset{\textnormal{D}}{\to}\mathcal{N}\left(\boldsymbol{0},\Sigma\right),
\end{align*}
and for $1\leq i,j\leq l$ the covariance is given by
\begin{align*}
    \Sigma(i,j)=\frac{1}{g(\z)}\int K^{2}(\uu)\textnormal{d}\uu\left(F(t_{i}\wedge t_{j}|\z)-F(t_{i}|\z)F(t_{j}|\z)\right).
\end{align*}
\end{lemma}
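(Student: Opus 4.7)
The plan is to decompose
$$\sqrt{n|\B_{n}|}\bigl(\amsmathbb{F}^{(n)}(\boldsymbol{t}_{l}|\z) - F(\boldsymbol{t}_{l}|\z)\bigr) = \frac{1}{\g(\z)}\,\sqrt{n|\B_{n}|}\bigl(\mm(\boldsymbol{t}_{l};\z) - F(\boldsymbol{t}_{l}|\z)\g(\z)\bigr),$$
reduce the weak limit to a triangular-array central limit theorem applied to the numerator, and then invoke Slutsky together with the consistency $\g(\z) \overset{\text{P}}{\to} g(\z)$, which follows from Lemma~\ref{weak_consistency_proof} and Chebyshev. For each coordinate $t_{i}$, the numerator is a sum of iid bounded summands
$$\sqrt{n|\B_{n}|}\bigl(\mm(t_{i};\z) - F(t_{i}|\z)\g(\z)\bigr) = \frac{1}{\sqrt{n|\B_{n}|}}\sum_{j=1}^{n} V_{n,j}(t_{i}),$$
with $V_{n,j}(t) = (1_{(Y_{j}\leq t)} - F(t|\z))\, K(\B_{n}^{-1}(\Z_{j}-\z))$, which I would split into its centred version plus a deterministic bias term $\sqrt{n/|\B_{n}|}\,\E[V_{n,1}(t_{i})]$.

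Handling the bias amounts to repeating the substitution-and-Taylor argument from the proof of Lemma~\ref{weak_consistency_proof}: after changing variables by $\uu = \B_{n}^{-1}(\boldsymbol{v}-\z)$, the first-order contribution vanishes by $\int \uu^{T} K(\uu)\,\mathrm{d}\uu = 0$, leaving $\E[V_{n,1}(t_{i})] = |\B_{n}|\cdot O((\max_{i,j}\B_{n}(i,j))^{2})$ under the $C^{2}$ hypotheses, so the bias is of size $\sqrt{n|\B_{n}|}\cdot O((\max_{i,j}\B_{n}(i,j))^{2})$ and vanishes by the bandwidth assumption. For the centred sum, the Lindeberg condition is trivial because $|V_{n,j}(t_{i})|\leq 2\sup_{\uu} K(\uu)$ almost surely, and $n|\B_{n}|\to\infty$ forces each rescaled summand to zero, so the remaining task is to compute the limiting covariance.

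The key calculation is the covariance limit. Conditioning on $\Z_{1} = \boldsymbol{v}$ and using $\E[1_{(Y_{1}\leq t_{i})}1_{(Y_{1}\leq t_{j})}|\Z_{1}=\boldsymbol{v}] = F(t_{i}\wedge t_{j}|\boldsymbol{v})$, the relevant conditional moment evaluated at $\boldsymbol{v}=\z$ reduces to $F(t_{i}\wedge t_{j}|\z) - F(t_{i}|\z)F(t_{j}|\z)$; substituting $\uu = \B_{n}^{-1}(\boldsymbol{v}-\z)$ in $\E[V_{n,1}(t_{i})V_{n,1}(t_{j})]$ and Taylor-expanding the $C^{2}$ factors around $\z$ then yields
$$\frac{1}{|\B_{n}|}\E\bigl[V_{n,1}(t_{i})V_{n,1}(t_{j})\bigr] \to g(\z)\bigl(F(t_{i}\wedge t_{j}|\z)-F(t_{i}|\z)F(t_{j}|\z)\bigr)\int K^{2}(\uu)\,\mathrm{d}\uu,$$
while the product $\E[V_{n,1}(t_{i})]\E[V_{n,1}(t_{j})]/|\B_{n}|$ is of order $|\B_{n}|$ and negligible. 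Dividing by $\g(\z)^{2}\to g(\z)^{2}$ via Slutsky recovers the stated $\Sigma(i,j)$, and the joint statement follows by applying Cramér--Wold to linear combinations $\boldsymbol{a}^{T} V_{n,j}(\boldsymbol{t}_{l})$, for which the same CLT argument goes through verbatim.

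The main obstacle I expect is bookkeeping around the $C^{2}$ expansions: the Jacobian from the substitution contributes a factor $|\B_{n}|$ whose cancellation against the $1/|\B_{n}|$ prefactor leaves exactly the desired leading term, and one must verify that the $F(t_{i}|\boldsymbol{v})$ versus $F(t_{i}|\z)$ mismatch in the conditional cross-moment contributes only a remainder of order $(\max_{i,j}\B_{n}(i,j))^{2}$. These remainders are negligible for the covariance and, for the bias, vanish precisely because the bandwidth assumption is calibrated as $\sqrt{n|\B_{n}|}(\max_{i,j}\B_{n}(i,j))^{2}\to 0$, which is exactly the borderline rate needed for both cancellations.
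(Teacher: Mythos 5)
Your proof is correct, and it takes a cleaner route than the paper's. The paper centres the numerator and the denominator separately at their expectations, writes the leading term as $\boldsymbol{A}_{n}\sum_{m}(\boldsymbol{W}_{nm}-\E[\boldsymbol{W}_{nm}])$ for an $(l+1)$-dimensional triangular array (the extra coordinate being the kernel density estimator $\g(\z)$ itself), verifies a Lyapunov condition with third moments, and recovers $\Sigma$ only after the matrix sandwich $\boldsymbol{A}\,\Sigma_{W}\boldsymbol{A}^{T}$, which requires computing the cross-covariances between $\mm(t_{i};\z)$ and $\g(\z)$. You instead exploit the exact identity $\amsmathbb{F}^{(n)}(\boldsymbol{t}_{l}|\z)-F(\boldsymbol{t}_{l}|\z)=\bigl(\mm(\boldsymbol{t}_{l};\z)-F(\boldsymbol{t}_{l}|\z)\g(\z)\bigr)/\g(\z)$, whose numerator is a sum of iid \emph{conditionally centred} summands $V_{n,j}(t)=(1_{(Y_{j}\leq t)}-F(t|\z))K(\B_{n}^{-1}(\Z_{j}-\z))$; this buys you three things: the Lindeberg condition is trivial from boundedness (no third-moment bookkeeping), the denominator is dispatched by a single scalar Slutsky step with no joint CLT involving $\g(\z)$, and the limiting covariance $F(t_{i}\wedge t_{j}|\z)-F(t_{i}|\z)F(t_{j}|\z)$ appears directly from the conditional second moment rather than emerging from matrix algebra. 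Your bias analysis (mean of $V_{n,1}$ of size $|\B_{n}|\cdot O((\max_{i,j}\B_{n}(i,j))^{2})$ after the substitution, first-order term killed by $\int \uu^{T}K(\uu)\,\mathrm{d}\uu=0$, hence a contribution $\sqrt{n|\B_{n}|}\,O((\max_{i,j}\B_{n}(i,j))^{2})\to 0$) is exactly the role the bandwidth assumption plays in the paper's second term, and Cram\'er--Wold legitimately replaces the paper's multivariate CLT. One cosmetic point: what you call ``dividing by $\g(\z)^{2}$'' is the scaling of the limiting covariance under Slutsky, i.e.\ the limit is $\mathcal{N}(\boldsymbol{0},\tilde{\Sigma}/g(\z)^{2})$ with $\tilde{\Sigma}(i,j)=g(\z)\int K^{2}(\uu)\,\mathrm{d}\uu\,(F(t_{i}\wedge t_{j}|\z)-F(t_{i}|\z)F(t_{j}|\z))$, which indeed equals the stated $\Sigma$; it is worth phrasing this carefully so the single power of $1/g(\z)$ in $\Sigma$ is seen to come from $g(\z)/g(\z)^{2}$. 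A minor caveat shared with the paper: the ratio is only defined on the event $\g(\z)>0$, whose probability tends to one, so the statement should be read on that event.
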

\begin{proof}
    First note that 
    \begin{align*} &\sqrt{n|\B_{n}|}\left(\amsmathbb{F}^{(n)}(\boldsymbol{t}_{l}|\z)-F(\boldsymbol{t}_{l}|\z)\right)
    \\&
    \quad =\sqrt{n|\B_{n}|}\left(\frac{\mm(\boldsymbol{t}_{l};\boldsymbol{z})}{\g(\boldsymbol{z})}-\frac{\E\left[\mm(\boldsymbol{t}_{l};\boldsymbol{z})\right]}{\E\left[\g(\z)\right]}\right)\\
    &\quad\quad+\sqrt{n|\B_{n}|}\left(\frac{\E\left[\mm(\boldsymbol{t}_{l};\boldsymbol{z})\right]}{\E\left[\g(\z)\right]}-\frac{F(\boldsymbol{t}_{l}|\z)g(\z)}{g(\z)}\right).
    \end{align*}
    The second term converges to 0 by the weak consistency of the estimators and the bandwidth assumption. The first term can be expressed as
    \begin{align*}
        &\sqrt{n|\B_{n}|}\left(\frac{\mm(\boldsymbol{t}_{l};\boldsymbol{z})}{\g(\boldsymbol{z})}-\frac{\E\left[\mm(\boldsymbol{t}_{l};\boldsymbol{z})\right]}{\E\left[\g(\z)\right]}\right)
        \\& \quad = -\frac{\mm(\boldsymbol{t}_{l};\z)}{\E\left[\g(\z)\right]\g(\z)}\sqrt{n|\B_{n}|}\left(\g(\z)-\E\left[\g(\z)\right]\right) 
        \\&\quad \quad 
        + \frac{1}{\E\left[\g(\z)\right]}\sqrt{n|\B_{n}|}\left(\mm(\boldsymbol{t}_{l};\z)-\E\left[\mm(\boldsymbol{t}_{l};\z)\right]\right)
        \\&
        \quad = \boldsymbol{A}_{n}\sum_{m=1}^{n}\left(\boldsymbol{W}_{nm}-\E\left[\boldsymbol{W}_{nm}\right]\right),
    \end{align*}
    where the $\boldsymbol{A}_{n}$ is a matrix in $\amsmathbb{R}^{l\times (l+1)}$ and $\boldsymbol{W}_{nm}$ are vectors in $\amsmathbb{R}^{l+1}$ given by
    \begin{align*}
        \boldsymbol{A}_{n}(i,j)&= \begin{cases}
        \frac{1}{\E\left[\g(\z)\right]} & i=j,j\leq l 
        \\
        -\frac{\mathbb{m}^{(n)}(t_{i};\z)}{\E\left[\g(\z)\right]\g(\z)} & i\leq l, j=l+1
        \\
        0 & \text{otherwise},
        \end{cases}
        \\ 
        \boldsymbol{W}_{nm}(i)&=\begin{cases}
            \frac{1}{\sqrt{n|\B_{n}|}}1_{(Y_{nm}\leq t_{i})}K\left(\B_{n}^{-1}(\boldsymbol{Z}_{nm}-\z)\right) & i\leq l
            \\ 
            \frac{1}{\sqrt{n|\B_{n}|}}K\left(\B_{n}^{-1}(\boldsymbol{Z}_{nm}-\z)\right) & i = l+1.
        \end{cases}
    \end{align*}
    The vectors $(\boldsymbol{W}_{nm}-\E[\boldsymbol{W}_{nm}])$ for $n\in \amsmathbb{N}, m \leq n$ are a triangular array of centered random variables with finite second moment and it holds that 
    \begin{align*}
        \sum_{m=1}^{n}\text{Var}\left[\boldsymbol{W}_{nm}\right]\to \begin{cases}
             F(t_{i}\wedge t_{j}|\z)g(\z)\int K^{2}(\boldsymbol{u})\text{d}\boldsymbol{u} & i\leq l, j \leq l 
             \\
             F(t_{i}|\z)g(\z)\int K^{2}(\boldsymbol{u})\text{d}\boldsymbol{u} & i\leq l, j=l+1 
             \\
             g(\z)\int K^{2}(\boldsymbol{u})\text{d}\boldsymbol{u} & i=j=l+1
        \end{cases}
    \end{align*}
     as essentially calculated in Lemma~\ref{weak_consistency_proof}. Lyapunov's condition for Lindeberg's CLT is for $\alpha=3$ verified by 
     \begin{align*}
         &\sum_{m=1}^{n}\E\left[||\boldsymbol{W}_{nm}-\E\left[\boldsymbol{W}_{nm}\right]||^{3}\right]
         \\& 
         \quad \leq \sum_{m=1}^{n}\E\left[||\boldsymbol{W}_{nm}||^{3}\right]
         \\&
         \quad \leq \sum_{m=1}^{n}\frac{1}{n^{3/2}|\B_{n}|^{3/2}}\E\left[k^{3}K^{3}\left(\B_{n}^{-1}(\boldsymbol{Z}_{nm}-\z)\right)\right]
         \\&
         \quad \leq \frac{k^{3}}{n^{1/2}|\B_{n}|^{1/2}}\left(g(\z)\int K^{3}(\uu)\mathrm{d}\uu+o(1)\right) \to 0.
     \end{align*}
     As $\boldsymbol{A}_{n}$ converges in probability the proof is finished.     
\end{proof}
\begin{lemma}[Weak convergence of $\amsmathbb{H}^{(n)}_{\dagger,1}(\cdot|\z)$] \label{weak_convergence_lemma}Let Assumptions~\ref{weak_convergence_assumptions_2} hold and assume that
\begin{align}
    \label{weak_convergence_lemma_condition}\sqrt{n|\boldsymbol{B}_{n}|}\bigg\vert\E\left[\md(t;\boldsymbol{z})\right]-H_{1}(t|\z)g(\z)\bigg\vert \to 0 
\end{align}
for every $t\geq 0$. Then it holds that
\begin{align*}
   \sqrt{n|\B_{n}|}\left(\HHd(\cdot|\z)-H_{1}(\cdot|\z)\right)\overset{\textnormal{D}}{\to}\G_{1}(\cdot|\boldsymbol{z})\quad \text{in }\ell^{\infty}\left([0,\infty)\right),
\end{align*}
where $\G_{1}(\cdot|\z)$ is a zero-mean Gaussian process with covariance function 
\begin{align*}
    \sigma^{2}_{\G_{1}(\cdot|\z)}(t,s)=\left(H_{1}(t\wedge s|\boldsymbol{z})-H_{1}(t|\boldsymbol{z})H_{1}( s|\boldsymbol{z})\right)\frac{1}{g(\z)}\int K^{2}(\boldsymbol{u})\textnormal{d}\boldsymbol{u}.
\end{align*}   
\end{lemma}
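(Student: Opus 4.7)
The plan is to establish functional weak convergence in $\ell^{\infty}([0,\infty))$ by combining finite-dimensional (fidi) convergence with asymptotic tightness, and then to identify the covariance of the Gaussian limit. First I would linearize via
\[
\sqrt{n|\B_n|}\bigl(\HHd(t|\z) - H_1(t|\z)\bigr) = \frac{\sqrt{n|\B_n|}\bigl(\md(t;\z)-\E[\md(t;\z)]\bigr)}{\g(\z)} - \frac{\E[\md(t;\z)]\,\sqrt{n|\B_n|}\bigl(\g(\z)-\E[\g(\z)]\bigr)}{\E[\g(\z)]\,\g(\z)} + R_n(t),
\]
with deterministic remainder $R_n(t) = \sqrt{n|\B_n|}\bigl(\E[\md(t;\z)]/\E[\g(\z)] - H_1(t|\z)\bigr)$. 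The quality condition (\ref{weak_convergence_lemma_condition}), combined with the $O\bigl((\max_{i,j}\B_n(i,j))^2\bigr)$ bias bound from Lemma \ref{weak_consistency_proof} and the bandwidth assumption, forces $R_n(t)\to 0$ pointwise; monotonicity and boundedness of $\E[\md(\cdot;\z)]/\E[\g(\z)]$ and $H_1(\cdot|\z)$ upgrade this to uniform convergence via an argument analogous to Lemma \ref{cadlag_bounded_as}.

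For fidi convergence I would mimic the proof of Lemma \ref{weakconvergence_lemma}, replacing the indicator $1_{(Y_{nm}\leq t)}$ by the expert-weighted $1_{(W_{nm}\leq t)}\eta_{nm}$. Writing the vector of interest as $\boldsymbol{A}_n \sum_{m=1}^{n}(\boldsymbol{W}_{nm} - \E[\boldsymbol{W}_{nm}])$ where $\boldsymbol{A}_n$ has the same block structure as before and $\boldsymbol{W}_{nm}$ has an extra factor $\eta_{nm}$ in the first $l$ coordinates, I would verify Lyapunov's condition via boundedness of $K$ and $\eta_{nm}\in\{0,1\}$, and compute $\sum_m \textnormal{Var}[\boldsymbol{W}_{nm}]$ using the change of variables $\boldsymbol{v} = \z + \B_n\uu$. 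Because $\eta_{nm}^{2} = \eta_{nm}$, the quality condition (\ref{weak_convergence_lemma_condition}) identifies the leading-block limit as $H_1(t_i\wedge t_j|\z)\,g(\z)\int K^2(\uu)\,\mathrm{d}\uu$, while cross terms involving the density-estimator coordinate contribute $-H_1(t_i|\z)H_1(t_j|\z)g(\z)\int K^2(\uu)\,\mathrm{d}\uu$ after multiplying by the in-probability limit of $\boldsymbol{A}_n$, yielding exactly $\sigma^{2}_{\G_1(\cdot|\z)}(t,s)$.

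For tightness, I would view the centered stochastic part as a kernel-weighted triangular-array empirical process indexed by the VC-subgraph class $\mathcal{F}=\{1_{(\cdot\leq t)}\colon t\geq 0\}$, with bounded compactly supported envelope $\|K\|_{\infty}$. A uniform CLT for triangular arrays (Theorem 2.11.22 in~\cite{WCEP}) then delivers asymptotic equicontinuity, provided the Lindeberg condition $(n|\B_n|)^{-1/2}\|K\|_{\infty}\to 0$ holds, which is immediate under the bandwidth assumption. A more elementary alternative exploits that $\md(\cdot;\z)$ is non-decreasing in $t$: for any $\epsilon>0$, choose a finite grid $0 = t_0<\cdots<t_K=\infty$ with $H_1(t_{k+1}|\z)-H_1(t_k|\z)<\epsilon$, bound the oscillation of the centered process on each interval by increments at the grid (plus an $\epsilon$-term absorbing the deterministic fluctuation), and conclude by fidi convergence at the grid followed by $\epsilon\downarrow 0$.

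The main obstacle will be the tightness step in this triangular-array regime, where both the bandwidth $\B_n$ and the law of $(\eta_{nm})$ vary with $n$, so classical Donsker theorems do not apply off the shelf. Verifying the envelope and Lindeberg-type conditions that power the triangular-array uniform CLT (or, in the elementary route, controlling the intra-interval oscillation uniformly in $n$) is the delicate bookkeeping. Once this is in place, combining fidi convergence with tightness yields the functional weak convergence of the linearized process, and the covariance of $\G_1(\cdot|\z)$ is read off directly from the fidi computation.
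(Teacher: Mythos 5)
Your main route follows the paper's proof almost exactly: the same linearization of $\sqrt{n|\B_{n}|}\left(\HHd(\cdot|\z)-H_{1}(\cdot|\z)\right)$ into two centered kernel processes plus a deterministic remainder killed by condition \eqref{weak_convergence_lemma_condition}, fidis via Lemma~\ref{weakconvergence_lemma}, tightness via a triangular-array uniform CLT from Chapter 2.11 of~\cite{WCEP}, and recombination of the two stochastic terms by continuous mapping plus joint tightness. The only substantive difference is the tightness device: the paper applies the bracketing CLT (Theorem 2.11.9), building brackets $1_{[0,t_{j-1}]}(x)y \leq f \leq 1_{[0,t_{j})}(x)y$ from a partition whose mesh is controlled by increments of the monotone function $H(\cdot|\z)$ (following Example 2.11.16), whereas you invoke the uniform-entropy version (Theorem 2.11.22) using the VC-subgraph property of half-line indicators multiplied by the fixed kernel weight. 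Both work; the bracketing route trades the measurability and random-entropy conditions of the VC route for an explicit bracket construction, and in either case the key computation is the same $L_{2}$-modulus bound $\sum_{m}\E\left[\left(A_{nm}(t)-A_{nm}(s)\right)^{2}\right]\lesssim \left(H(t|\z)-H(s|\z)\right)g(\z)\int K^{2}(\uu)\,\mathrm{d}\uu$, which the paper carries out for its bracket sizes and which you would still need for the semimetric-equicontinuity hypothesis of Theorem 2.11.22 --- the Lindeberg envelope condition alone is not enough.

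However, your ``more elementary alternative'' for tightness is not correct and should be dropped. Monotonicity of $\md(\cdot;\z)$ bounds the oscillation of the centered, scaled process on a cell $[t_{k},t_{k+1})$ only up to the within-cell increment of the compensator, i.e.\ up to $\sqrt{n|\B_{n}|}\left(\E\left[\md(t_{k+1}-;\z)\right]-\E\left[\md(t_{k};\z)\right]\right)\approx \sqrt{n|\B_{n}|}\left(H_{1}(t_{k+1}-|\z)-H_{1}(t_{k}|\z)\right)g(\z)$. For a fixed grid this diverges as $n\to\infty$; it is not an ``$\epsilon$-term absorbing the deterministic fluctuation.'' Making it small would force a grid whose cardinality grows faster than $\sqrt{n|\B_{n}|}$, and controlling the maximum over that many grid increments then requires a maximal inequality --- precisely the chaining/entropy machinery the shortcut was meant to avoid. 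The grid-plus-monotonicity argument works at Glivenko--Cantelli scale (as in Lemma~\ref{cadlag_bounded_as}) but not at CLT scale. Since you present it only as an alternative, your proposal stands on the Theorem 2.11.22 route, which is sound.
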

\begin{proof}
Consider first the process
\begin{align*}
 &\sqrt{n|\boldsymbol{B}_{n}|}\left(\md(\cdot ;\boldsymbol{z})-H_{1}(\cdot|\boldsymbol{z})g(\boldsymbol{z})\right)
 \\&
 \quad =\sqrt{n|\boldsymbol{B}_{n}|}\left(\md(\cdot ;\boldsymbol{z}) -\E\left[\md(\cdot;\boldsymbol{z})\right]\right)\\
 &\quad\quad-\sqrt{n|\boldsymbol{B}_{n}|}\left(H_{1}(\cdot |\boldsymbol{z})g(\boldsymbol{z})-\E\left[\md(\cdot;\boldsymbol{z})\right]\right)
\end{align*}
and note that the second term converges to 0 in $\ell^{\infty}([0,\infty))$ by the assumption in~\eqref{weak_convergence_lemma_condition} and Lemma~\ref{cadlag_bounded_as}, which applies also in absence of stochastic convergence.
The fidis of the first term exists by Lemma~\ref{weakconvergence_lemma} and its functional weak convergence is then found by an application of the bracketing CLT for triangular arrays, see Theorem 2.11.9 in~\cite{WCEP}: Let
    \begin{align*}
        \mathcal{F}_{1}=\left\{f:[0,\infty)^{2}\to [0,\infty) \text{ s.t. }f(x,y)=1_{[0,t]}(x)y \text{ for } t\in [0,\infty)\right\}
    \end{align*}
    and define $A_{nm}:\mathcal{F}_{1}\to \R$ as the mapping 
    \begin{align*}
        A_{nm}(f)&=f\left(W_{nm},\eta_{nm} K\left(\B_{n}^{-1}(\Z_{nm}-\z)\right)/\sqrt{n|\boldsymbol{B}_{n}|}\right)
        \\&
        =\frac{1}{\sqrt{n|\B_{n}|}}1_{(W_{nm}\leq t)}\eta_{nm}K\left(\B_{n}^{-1}(\Z_{nm}-\z)\right).
    \end{align*}
    For every $n$, $A_{n1},\ldots,A_{nn}$ are mutually independent stochastic processes indexed by $\mathcal{F}_{1}$. The 1st condition is fulfilled as it holds for every $c>0$ that
    \begin{align*}
        &\sum_{m=1}^{n}\E^{*}\left[||A_{nm}||_{\mathcal{F}_{1}}1_{\left(||A_{nm}||_{\mathcal{F}_{1}>c}\right)}\right]
        \\&\quad \leq \sum_{m=1}^{n}\E^{*}\left[||A_{nm}||_{\mathcal{F}_{1}}\left(\frac{||A_{nm}||_{\mathcal{F}_{1}}}{c}\right)^{2}\right]
        \\&\quad 
        \leq\sum_{m=1}^{n} \E\left[\frac{K^{3}\left(\B_{n}^{-1}(\Z_{nm}-\z)\right)}{n^{3/2}|\B_{n}|^{3/2}c^{2}}\right]
        \\&
        \quad =\frac{1}{ n^{1/2}|\B_{n}|^{1/2}c^{2}}\left(g(\z)\int K^{3}(\uu)\text{d}\uu+o(1)\right)\to 0.
    \end{align*}
    For the 3rd condition we construct a partition of the index set $\mathcal{F}_{1}$ independent of the row number $n$ and bound the bracketing number $N_{[\:]}\left(\varepsilon,\mathcal{F}_{1},L_{2}^{n}\right)$. Consider a partition $0=t_{0}<t_{1}<\ldots<t_{k}=\infty$ and consider functions on the form $l_{j}(x,y)=1_{[0,t_{j-1}]}(x)y$ and $u_{j}(x,y)=1_{[0,t_{j})}(x)y$ for $j=1,\ldots,k$. Let 
    \begin{align*}
        \mathcal{F}_{\varepsilon,j}=\left\{f\in \mathcal{F}_{1}:l_{j}\leq f \leq u_{j}\right\}\quad \text{for }1\leq j \leq k.
    \end{align*}
    Then 
    \begin{align*}
        \mathcal{F}_{1}=\bigcup_{j=1}^{k}\mathcal{F}_{\varepsilon,j}.
    \end{align*}
    For every partitioning set $\mathcal{F}_{\varepsilon,j}$ it holds that 
    \begin{align*}
        &\sum_{m=1}^{n}\E^{*}\left[\sup_{f,g\in \mathcal{F}_{\varepsilon,j}}|A_{nm}(f)-A_{nm}(g)|^{2}\right]
        \\&
        =\frac{1}{n|\B_{n}|}\sum_{m=1}^{n}\E\left[\sup_{s,t\in [t_{j-1},t_{j})}|\left(1_{(W_{nm}\leq t)}-1_{(W_{nm}\leq s)}\right)\eta_{nm}K\left(\B_{n}^{-1}\left(\Z_{nm}-\z\right)\right)|^{2}\right]
        \\&
        \leq \frac{1}{n|\B_{n}|}\sum_{m=1}^{n}\E\left[\sup_{s,t\in [t_{j-1},t_{j})}|\left(1_{(W_{nm}\leq t)}-1_{(W_{nm}\leq s)}\right)K\left(\B_{n}^{-1}\left(\Z_{nm}-\z\right)\right)|^{2}\right]
        \\&
        =\left(H(t_{j}-|\z)-H(t_{j-1}|\z)\right)g(\z)\int K^{2}(\uu)\text{d}\uu+o(1).
    \end{align*}
    Note that the function $t\mapsto H(t|\z)$ is càdlàg, bounded and non-decreasing. Hence we may follow the arguments in Example 2.11.16 in~\cite{WCEP}: Given $\varepsilon>0$ choose $\Tilde{\varepsilon}$ such that $\Tilde{\varepsilon}\leq \varepsilon$ and choose a partition with less than $K/\Tilde{\varepsilon}^{2}$ terms, where $K>0$ is some constant, such that 
    \begin{align*}
        \left(H(t_{j}-|\z)-H(t_{j-1}|\z)\right)g(\z)\int K^{2}(\uu)\text{d}\uu\leq \Tilde{\varepsilon}^{2}
    \end{align*}
    for every $j$ in the partition. For large enough $n$ it then holds that
    \begin{align*}
        \E\left[\sup_{f,g\in \mathcal{F}_{\varepsilon,j}}|A_{nm}(f)-A_{nm}(g)|\right]\leq \Tilde{\varepsilon}^{2}\leq \varepsilon^{2}
    \end{align*}
    implying that the bracketing number is bounded by $K/\Tilde{\varepsilon}^{2}$ for $n$ large enough and we obtain 
    \begin{align*}
        \int_{(0,\delta_{n}]}\sqrt{\log N_{[\:]}\left(\varepsilon,\mathcal{F}_{1},L_{2}^{n}\right)}\text{d}\varepsilon\leq \int_{(0,\delta_{n}]}\sqrt{\log\left(\frac{K}{\Tilde{\varepsilon}^{2}}\right)}\text{d}\varepsilon\to 0,
    \end{align*}
    for every $\delta_{n}\to 0$. As the partitioning sets are chosen independently of $n$, the 2nd condition is unnecessary and we conclude that the process $\sum_{m=1}^{n}\left(A_{nm}-\E\left[A_{nm}\right]\right)$ converges to a zero-mean Gaussian process. Now consider the process
    \begin{align*}
        &\sqrt{n|\B_{n}|}\left(\HHd(\cdot|\z)-H_{1}(\cdot|\z)\right)
        \\&\quad 
        =\sqrt{n|\B_{n}|}\left(\HHd(\cdot|\z)-\frac{\E\left[\md(\cdot;\z)\right]}{\E\left[\g(\z)\right]}\right)\\
        &\quad\quad+ \sqrt{n|\B_{n}|}\left(\frac{\E\left[\md(\cdot;\z)\right]}{\E\left[\g(\z)\right]}-\frac{H_{1}(\cdot|\z)g(\z)}{g(\z)}\right)
    \end{align*}
    and note that the second term converges to $0$ in $\ell^{\infty}([0,\infty))$. Then 
    \begin{align*}
         &\sqrt{n|\B_{n}|}\left(\HHd(\cdot|\z)-H_{1}(\cdot|\z)\right)
         \\
         &\quad = -\frac{\md(\cdot;\z)}{\E\left[\g(\z)\right]\g(\z)}\sqrt{n|\B_{n}|}\left(\g(\z)-\E\left[\g(\z)\right]\right)
        \\&\quad \quad +\frac{1}{\E\left[\g(\z)\right]}\sqrt{n|\B_{n}|}\left(\md(\cdot;\z)-\E\left[\md(\cdot;\z)\right]\right)+o(1).
    \end{align*}
    Both terms consist of an empirical process that is convergent in $\ell^{\infty}([0,\infty))$ multiplied by a factor that is also convergent in probability in $\ell^{\infty}([0,\infty))$, hence the products are weakly convergent by the Continuous Mapping Theorem 1.3.6 in~\cite{WCEP}. As the joint fidis of the two terms exist and because the processes are asymptotically tight, it follows from Exercise 1.5.3, also in~\cite{WCEP}, and yet another application of functional continuous mapping, that the sum of the two terms is weakly convergent. We conclude that 
    \begin{align*}
        \sqrt{n|\B_{n}|}\left(\HHd(\cdot|\z)-H_{1}(\cdot|\z)\right) \overset{\text{D}}{\to}\amsmathbb{G}_{1}(\cdot|\z)\quad \text{in }\ell^{\infty}([0,\infty)).
    \end{align*}
\end{proof}
\begin{proof}[\textnormal{\textbf{Proof of Theorem~\ref{weakconvergence}} (Weak convergence of the expert estimator)}]
Following virtually the same argument as in the proof of Lemma~\ref{weak_convergence_lemma} above concerning applications of the bracketing CLT and bivariate weak convergence in $\ell^{\infty}([0,\theta])^{2}$, it holds that  
    \begin{align*}
&\left\{\sqrt{n|\B_{n}|}(\amsmathbb{H}^{(n)}(\cdot|\z)-H(\cdot|\z)),\sqrt{n|\B_{n}|}(\amsmathbb{H}_{\dagger,1}^{(n)}(\cdot|\z)-H_{1}(\cdot|\z))\right\}\\
&\quad\overset{\text{D}}{\to}\left\{\amsmathbb{G}(\cdot|\z), \amsmathbb{G}_{1}(\cdot|\z)\right\},
\end{align*}
where $\amsmathbb{G}(\cdot|\z)$ is a zero-mean Gaussian process with variance 
\begin{align*}
    \sigma^{2}_{\amsmathbb{G}(\cdot|\z)}(t,s)=\left(H(t\wedge s)-H(t)H(s)\right)\frac{1}{g(\z)}\int K^{2}(\boldsymbol{u})\text{d}\boldsymbol{u}
\end{align*}
and the bivariate limiting process is a zero-mean Gaussian process with covariance
\begin{align*}
    \sigma^{2}_{\{\amsmathbb{G}(\cdot|\z),\amsmathbb{G}_{1}(\cdot|\z)\}}(t,s)=\left(H_{1}(t\wedge s)-H(t)H_{1}(s)\right)\frac{1}{g(\z)}\int K^{2}(\boldsymbol{u})\text{d}\boldsymbol{u}.
\end{align*}
    Following Example 3.10.20 and Example 3.10.33 in~\cite{WCEP}, showing weak convergence of the ordinary Nelson--Aalen and Kaplan--Meier estimator, we conclude by the functional delta-theorem that 
    \begin{align*}
        \sqrt{n|\B_{n}|}\left(\LLd(\cdot |\boldsymbol{z})-\Lambda(\cdot |\boldsymbol{z})\right)&\overset{\textnormal{D}}{\to}\amsmathbb{L}(\cdot|\boldsymbol{z}),\,\, 
        \sqrt{n|\B_{n}|}\left(\Fd(\cdot |\boldsymbol{z})-F(\cdot |\boldsymbol{z})\right)\overset{\textnormal{D}}{\to}\amsmathbb{Z}(\cdot |\boldsymbol{z})
    \end{align*}
    in $\ell^{\infty}([0,\theta])$. It remains to calculate the covariance functions: By Lemma 20.10 in~\cite{AS} we may evaluate $\amsmathbb{L}(\cdot|\z)$ as 
    \begin{align*}
        \amsmathbb{L}(t|\z)&=\frac{\amsmathbb{G}_{1}(t|\z)}{\Bar{H}(t|\z)}-\int_{[0,t]}\amsmathbb{G}_{1}(s-|\z)\text{d} \left(\frac{1}{\Bar{H}(s|\z)}\right)+\int_{[0,t]}\frac{\amsmathbb{G}(s-|\z)}{\Bar{H}^{2}(s|\z)}\text{d} H_{1}(s|\z).
    \end{align*}
    From this it follows that 
    \begin{align*}
        \text{Cov}\left[\amsmathbb{L}(t|\z),\amsmathbb{L}(s|\z)\right]=\frac{1}{g(\z)}\int K^{2}(\boldsymbol{u})\text{d}\boldsymbol{u}\text{Cov}\left[\Tilde{\amsmathbb{L}}(t|\z),\Tilde{\amsmathbb{L}}(s|\z)\right],
    \end{align*}
    where $\Tilde{\amsmathbb{L}}(\cdot|\z)$ is the same process as $\amsmathbb{L}(\cdot|\z)$ but with the difference that the covariance function for the underlying bivariate process $(\amsmathbb{G}(\cdot|\z), \amsmathbb{G}_{1}(\cdot|\z))$ is multiplied through by $g(\z)/\int K^{2}(\boldsymbol{u})\text{d}\boldsymbol{u}$  implying that $\Tilde{\amsmathbb{L}}(\cdot|\z)$ is known from the proof of the ordinary Nelson--Aalen estimator to be a Gaussian process with covariance function 
    \begin{align*}
        \text{Cov}\left[\Tilde{\amsmathbb{L}}(t|\z), \Tilde{\amsmathbb{L}}(s|\z)\right]=\int_{[0,s\wedge t]}\frac{1-\Delta \Lambda(u|\z)}{\Bar{H}(u|\z)}\text{d}\Lambda(u|\z).
    \end{align*}
    Hence we conclude that
    \begin{align*}
        \sigma^{2}_{\amsmathbb{L}(\cdot|\z)}=\frac{1}{g(\z)}\int K^{2}(\boldsymbol{u})\text{d}\boldsymbol{u}\int_{[0,s\wedge t]}\frac{1-\Delta \Lambda(u|\z)}{\Bar{H}(u|\z)}\text{d}\Lambda(u|\z).
    \end{align*}
    As it holds that 
    \begin{align*}
        &\amsmathbb{Z}(t|\z)=\Bar{F}(t|\z)\int_{[0,t]}\frac{1}{1-\Lambda(s|\z)}\text{d}\amsmathbb{L}(s|\z)\\
        &\quad\overset{\text{D}}{=}\frac{1}{g(\z)}\int K^{2}(\boldsymbol{u})\text{d}\boldsymbol{u}\Bar{F}(t|\z)\int_{[0,t]}\frac{1}{1-\Lambda(s|\z)}\text{d}\Tilde{\amsmathbb{L}}(s|\z)
    \end{align*}
    it follows that 
    \begin{align*}
         \sigma^{2}_{\amsmathbb{Z}(\cdot|\z)}(t,s)&=\frac{\Bar{F}(t|\boldsymbol{z})\Bar{F}(s|\boldsymbol{z})}{g(\boldsymbol{z})}\int K^{2}(\boldsymbol{u})\textnormal{d}\boldsymbol{u}\int_{[0,s\wedge t]}\frac{1}{\Bar{H}(u|\boldsymbol{z})\left(1-\Delta \Lambda(u|\boldsymbol{z})\right)}\textnormal{d}\Lambda(u|\boldsymbol{z}).
    \end{align*}
\end{proof}
\begin{proof}[\textnormal{\textbf{Proof of Corollary~\ref{bias_weakconvergence}} (Biased weak convergence of the expert estimator)}] The result follows by the proofs of Lemma~\ref{weak_convergence_lemma} and Theorem~\ref{weakconvergence} using the biased error term in~\eqref{weak_convergence_errorterm}.
\end{proof}


\end{appendix}


\begin{funding}
The first author was supported by the Carlsberg Foundation, grant CF23-1096.
\end{funding}

\bibliographystyle{imsart-number} 
\bibliography{main.bib}       






\end{document}